\DeclareMathOperator*{\argmax}{arg\,max}
\newcommand{\R}{\mathbb{R}}
\newtheorem{thm}{Theorem}
\newtheorem{prps}{Proposition}
\newtheorem{lem}{Lemma}
\theoremstyle{definition}
\newtheorem{axm}{\sc Axiom}
\newtheorem{defn}{Definition}
\newtheorem{exm}{Example}
\newtheorem{asn}{Assumption}
\title{Inertial Updating with General Information\thanks{Dominiak: Virginia Tech (dominiak@vt.edu); Kovach: Purdue University (mlkovach@purdue.edu); Tserenjigmid: UC Santa Cruz (gtserenj@ucsc.edu ). This paper was previously circulated as ``Minimum Distance Updating with General Information." We are very grateful to David Freeman, Paolo Ghirardato, Faruk Gul, Edi Karni, Yusufcan Masatlioglu, Giacomo Lanzani, Kota Saito, and Jakub Steiner for valuable comments and discussions, as well as the seminar participants at Caltech, CUHK-HKU-HKUST Joint Theory Seminar; University of Maryland; University of Michigan; UC Santa Cruz; UC Riverside; UC Davis; UC Berkeley, Royal Holloway (University of London); Dauphine University; ETH Z\"urich; Western University, along with the audiences of the Conference on Unawareness and Unintended Consequences (2021), RUD 2021 (University of Minnesota); 20th SAET Conference (Seoul); D-TEA (PSE); ESA - North American meeting 2024 (Columbus); BRIC 2022 (CERGE-EI); International Conference on Social Choice and Voting Theory (Virginia Tech) and the Workshop in Experiments, Revealed Preferences, and Decisions (WiERD, 2020).}}
\author{\centering Adam Dominiak  \and Matthew Kovach \and Gerelt Tserenjigmid}
\date{February 1, 2025}
\begin{document}

\maketitle

\noindent{\textbf{Abstract:} \onehalfspacing We study belief revision when information is represented by a set of probability distributions, or \emph{general information}. General information extends the standard event notion while including qualitative information (\emph{A is more likely than B}), interval information (\emph{A has a ten-to-twenty percent chance}), and more.  We behaviorally characterize \emph{Inertial Updating}: the decision maker's posterior is of minimal subjective distance from her prior, given the information constraint. Further, we introduce and characterize a notion of Bayesian updating for general information and show that \emph{Bayesian agents may disagree}. We also behaviorally characterize \emph{f-divergences}, the class of distances consistent with Bayesian updating.

}

\vspace{5 mm}

\noindent{\textbf{Keywords:} General information, inertial updating, Bayesian divergence, Bayesian disagreement, KL divergence, $f$-divergence.}

\vspace{3 mm}
\noindent{\textbf{JEL:} D01, D81, D83.}

\vspace{30 mm}

\newpage

\section{Introduction}
\onehalfspacing
%\doublespacing

How decision makers revise their beliefs after receiving information is a foundational problem in economics and game theory. While the traditional Bayesian framework is broadly appealing for a variety of reasons, it is also highly demanding of the decision maker (DM) because it requires her to have a well-defined joint belief over the payoff-relevant states and all possible informational statements she might receive. When information is unexpected, qualitative, vague, or the data-generating process is not well understood, this is implausible at best. In addition, many forms of information do not fit into the standard event structure. However, such situations and informational statements are incredibly common in daily life.   To illustrate, consider the following statements:
\begin{itemize}

\item[(i)] \emph{The treatment resulted in significant improvement in $80\%$ of the cases in a previous trial.}
%\item[(i)] \emph{The chance of a tornado tomorrow is at least 10\%.}
\item[(ii)] \emph{An urn with 100 balls contains between $30$ and $50$ green balls.}
\item[(iii)] \emph{Candidate $A$ is more likely to win than Candidate $B$.}
\item[(iv)] \emph{Your heart disease risk is $9\%$ if you do not smoke, and $17\%$ if you do smoke.\footnote{These estimates come from the \href{https://www.mayoclinichealthsystem.org/locations/cannon-falls/services-and-treatments/cardiology/heart-disease-risk-calculator}{Mayo Clinic Heart Disease Risk Calculator}.}}
%\item[(v)] \emph{An urn with 100 balls contains between 30 and 50 green balls.}
\item[(v)] \emph{The algorithm recommends action $f$, e.g., ``buying option $a$ is best for you," from some menu.}\footnote{This particular case is studied in \cite{ke2024learning}, and they show that such recommendations correspond to information sets that are closed and convex under expected utility. We discuss their paper carefully in \autoref{literature}.}
\end{itemize}

For each of these statements, there is a natural, payoff-relevant state space about which information is clearly being conveyed. Moreover, each statement reflects the way people typically communicate and convey information. At the same time, it is unrealistic to assume that the DM has precise beliefs about the conditional probability of receiving any of the above statements.  Because of this, the traditional notion of Bayesian updating is not defined. 

For concreteness, consider a decision maker concerned about the chance of success of a treatment, such as a drug or surgical procedure, for a single patient. This has a natural state space $S=\{n,p,c\},$ that indicates no, partial, or complete improvement, respectively. It is plausible for the DM to have prior beliefs about the chance of success based on her medical knowledge. Further, if she were informed that in a past trial patient outcomes were in $\{p,c\}$ $80\%$ of the time (statement (i) above), she ought to take this information into account. However, given the inherent complexity of medicine, it is implausible to assume that she has a joint belief over the success of the treatment and the results of all possible trials she might learn about.  Since she does not have a joint prior, and the information does not correspond to an event in the standard state space, the standard Bayesian approach is not defined. Nevertheless, this trial is informative and upon learning of this trial, it is reasonable for her to align her beliefs with it in some way. Since she cannot perform Bayes' rule, another approach is needed to model belief updating.

In this paper, we introduce and behaviorally characterize a model of belief updating suitable for this setting, which we refer to as \textbf{Inertial Updating}. In our setting, we consider information that corresponds to natural statements, such as those presented above, which we refer to as \emph{general information}. More formally, \textbf{general information} is a set of probability distributions over a collection of payoff-relevant states.  A DM following Inertial Updating behaves as if she selects a revised (or posterior) belief from the general information set, which acts as a constraint set. Importantly, this new belief is of minimal subjective distance from her prior, subject to the constraints imposed by the information.\footnote{In a companion paper, \cite{DKT2023}, we provide a characterization of Inertial Updating for standard events. Because the notion of information in the two papers is quite different, both conceptually and technically, the axioms and proofs are also different. In addition, the focus of the two papers differs substantially. \cite{DKT2023} focuses on providing a unified approach to understanding non-Bayesian updating and reactions to zero-probability events. The current paper focuses on updating under general information, Bayesian disagreements, and proposing and characterizing a notion of Bayesian updating within this general framework.}  We establish our behavioral characterization of Inertial Updating, combined with subjective expected utility, within the \cite{anscombeaumann1963} framework.

To further illustrate our setting, it is instructive to contrast our notion with the standard Bayesian framework.  In the standard framework, information is represented as an event or a subset of some grand state space. For a set of states $S$, standard information corresponds to the statement, ``the event $E \subset S$ was realized.''\footnote{This general setting includes the standard signal structure as a special case, where each signal generates a particular event.} The typical interpretation is that the DM has learned that the true state is an element of $E$ and that states outside of $E$ are no longer possible. Recalling our medical treatment example, the natural state space is $S=\{n,p,c\}$. If a Bayesian DM was told ``treatment always resulted in a significant improvement,'' which corresponds to the event $E=\{p,c\}$ and information set $I^1_E=\{\pi\in\Delta(S)\mid\pi(\{p, c\})=1$\}), then she should obtain the posterior
\[\mu_{I^1_{E}}=\left(0, \frac{\mu(p)}{\mu(p)+\mu(c)}, \frac{\mu(c)}{\mu(p)+\mu(c)}\right),\]
from the prior $\mu$. 

However, in many real-life circumstances, DMs receive information in more general or nuanced forms.
For example, while the information ``the treatment resulted in a significant improvement $80\%$ of the time'' is not an event, it has a natural representation as the general information set $I^{0.8}_{E}=\{\pi\in\Delta(S)\mid\pi(\{p, c\})=0.8\}$. Geometrically, $I^{0.8}_E$ is a parallel shift of an event, $I^1_{E}$, and we call such information sets  $\alpha$-events (see \autoref{iuseu_fig1} for an illustration). This, and each of the general information examples discussed previously, can be represented by a subset of $\Delta(S)$. %Indeed, in our medical treatment example, trial information can be represented by the set of distributions such that $I=\{\pi\in\Delta(S)|\pi(\{p, c\})=0.8\})$, and hence is general information. 
Because these statements are not events, they fall outside the standard framework used to study belief updating, and Bayesian updating is not defined.

%\footnote{In particular, we assume that our DM does not have a joint prior over payoff-relevant states and information. Imposing the standard Bayesian framework would amount to assuming that the DM has a belief over $S\times \Delta(\mathcal{I})$, were $\mathcal{I}$ is a collection of general information sets. This requires the DM to have a very precise understanding of what types of information could be communicated and precisely how the distribution of statements depends upon the payoff states. This is unrealistic in many environments. In contrast, our setting only requires a prior over the payoff relevant states $S$. Due to this assumption, Bayesian updating is not possible.} 

\begin{figure}[h]
	\centering
          \includegraphics[width=6.5cm]{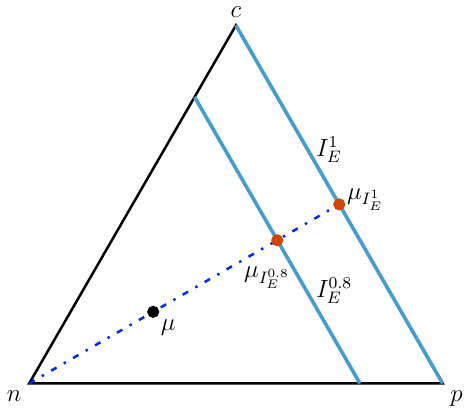}
          \includegraphics[width=6.5cm]{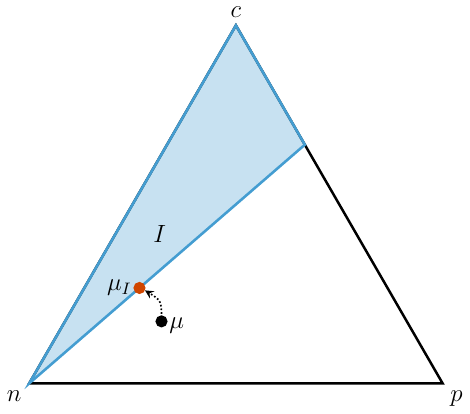}
                   \caption{Information Sets and Inertial Updating. The left panel illustrates two $\alpha$-events, $I^{0.8}_E$ and $I^1_E$. The right panel illustrates a qualitative information set: ``$c$ is at least twice as likely as $p$,'' or $I=\{\pi\in\Delta(S)\mid\pi(\{c\}) \geq 2 \,\pi(\{p\}) \}$.}
                   \label{iuseu_fig1}
          \end{figure}

 To briefly summarize, we develop a model of belief updating in which our DM does not have a joint prior over payoff-relevant states and information and Bayesian updating is not defined. This is because a joint-prior is (a) implausible and (b) ill-suited for general information.   
 
\begin{itemize}
\item[(a)] Assuming a joint prior requires the DM to have perfect knowledge of what types of information could be communicated and precisely how the distribution of statements depends upon the payoff-relevant states. This assumption is, at best, unrealistic, especially in environments which are complex (e.g., health care), the DM lacks expertise,  or she does not understand the mechanism through which information is generated (e.g., algorithmic or AI forecasts). Consequently, we should regard conclusions built upon these foundations with some caution. %Any conclusions built upon such shaky foundations should be viewed with a healthy degree of caution.
\footnote{A similar point was also made by \citet{ComptePostlewaite2012}, who argue that knowledge of a joint-prior in individual decision making is as demanding as common knowledge in games. Indeed, criticisms of common knowledge resulted in an extensive literature on robust mechanism design and robust equilibrium analysis (e.g., see \cite{bergemann2005robust}).}

\item[(b)] Imposing the joint-prior assumption with general information amounts to assuming that the DM has a belief over $S\times \Delta(\mathcal{I})$, where $\mathcal{I}$ is a collection of general information sets, or \emph{messages}. However, this environment is too permissive. We establish an ``anything goes'' result in section \ref{sec:anything}: essentially any belief updating rule, and consequently any behavior, is consistent with Bayesian updating when general information is interpreted as a message.\footnote{Our result is similar conceptually to the ``anything goes'' result of \citet{ShmayaYariv2016}, who show that any behavior is rationalizable without further assumptions on how lab subjects frame an experiment.} Hence, one needs to discipline the connection between information sets and information structures, which is essentially what we do indirectly in this paper. 
\end{itemize}

For all of these reasons, we introduce Inertial Updating, which does not require the assumption of a joint prior.  Our main contribution is a complete behavioral analysis of Inertial Updating, along with the introduction of a notion of Bayesian updating with general information.  The behavioral characterization is achieved via three novel axioms,  \nameref{COMP}, \nameref{RESP}, and \nameref{IB}, in addition to a few standard and technical postulates.  The axioms of \nameref{COMP} and \nameref{RESP} ensure that the DM accepts the information as truthful. Consequently, she selects a posterior belief that is consistent with $I$. Our other crucial axiom, \nameref{IB}, ensures that beliefs are consistent with the minimization of some subjective distance notion.

Although Bayes' rule is not defined in our setting, we propose a definition of Bayesian updating suitable for this environment. Intuitively, an Inertial Updater is ``Bayesian" when the selected belief maintains a form of relative consistency with the prior. We utilize this intuition and exploit information sets that are ``event-like'' (i.e., $\alpha$-events) to define a notion of Bayesian updating and characterize the corresponding class of Bayesian distances (see \autoref{iuseu_fig1}). Returning to the medical example, intuitively, a Bayesian updater adjusts her beliefs to be consistent with $\pi(\{p,c\})=0.8$ while preserving the relative likelihood of $p$ and $c$. In other words, the posterior after general information $I^{0.8}_{E}$ is
\[\mu_{I^{0.8}_E}=\left(0.2, \frac{\mu(p)}{\mu(p)+\mu(c)}\,0.8, \frac{\mu(c)}{\mu(p)+\mu(c)}\,0.8\right).\]

We use this intuition to define our notion of Bayesian updating, which we refer to as \textbf{Extended Bayesian Updating}. It reduces to Bayes' rule for events and is characterized by \nameref{IDC}, an axiom that extends the classic Dynamic Consistency to all $\alpha$-events.  We introduce an intuitive and tractable collection of Bayesian distances that we refer to as \nameref{GBex}, and show that under standard technical conditions \nameref{IDC} characterizes the family of \nameref{GBex}. 

\nameref{GBex} is often called $f$-divergence and is one of the most widely used families of divergences in information theory (\cite{csiszar1967information}). Our characterization provides a behavioral foundation for the family of $f$-divergences and essentially uniquely pins down this popular class of divergences through \nameref{IDC}, which extends Bayes' rule to $\alpha$-events. Notably, $f$-divergences include the widely applied Kullback-Leibler divergence, Jensen-Shannon divergence, and $\alpha$-divergence that is behaviorally equivalent to the well-known Renyi divergence in our setting. To the best of our knowledge, \textbf{we provide the first behavioral characterization of $f$-divergence} in economics.
 
An important implication of our characterization of \nameref{GBex} is that Bayesian updating for $\alpha$-events is not sufficient to pin down a unique distance, therefore there is no unique Bayesian distance. Consequently, richer forms of information and additional behavioral restrictions are needed to distinguish between distances. By studying intersections of $\alpha$-events and qualitative information, we provide behavioral characterizations of both $\alpha$-divergence (i.e., Renyi divergence) and Kullback-Leibler divergence in \autoref{bayesian}. Again, to the best of our knowledge, \textbf{our result is the first behavioral foundation of $\alpha$-divergence} in economics.

Another implication of our characterization is that there may be Bayesian disagreement for certain types of information sets. Indeed, while it can be shown that  \nameref{GBex} agree for interval information (e.g., statement (ii)), other information sets, including qualitative information (e.g., statement (iii)), or non-convex information sets (e.g., statement (iv)), may cause disagreement. That is, Bayesian DMs may start with the same prior, receive the same information, and arrive at different posteriors. Hence, \textbf{we identify a new source for Bayesian disagreement: general information}. This result shows that the structure of information is crucial for the generation of polarization and disagreement.\footnote{Somewhat similarly, \cite{Baliga2013} show that polarization may arise under ambiguity (e.g., imprecise beliefs). We show that polarization may arise from imprecise information, and thus view our results as complementary.} 

Finally, we show that our model of Inertial Updating can help us understand how a DM decides between competing narratives. Understanding such settings is important because persuasion via competing narratives is an increasingly common phenomenon. Indeed, the 2020 election is a recent and striking example (``the election was fair" vs ``the election featured widespread voter fraud'') in which competing narratives influenced how many partisans interpreted the election results.  In line with this, we apply our model to a simple persuasion game in which two senders seek to persuade a DM to take different actions by offering a ``narrative.''  We show that equilibrium information sets in this game typically include both the truth and a false narrative that is closer to the DM's initial beliefs.  This holds even when each sender faces a penalty for deviating from the truth, hence persuasion is generally possible and the DM typically adopts a false but ``more familiar'' narrative.

The remainder of this paper is structured as follows. In \autoref{model}, we introduce the formal framework and our notion of updating, examples of distance functions, and our ``anything goes'' result. We provide behavioral foundations of the \nameref{iuseu} representation in \autoref{axioms}. We discuss Bayesian updating in \autoref{bayesian}, including the characterizations of $\alpha$-divergences and Kullback-Leibler divergences. In \autoref{applications}, we discuss disagreement and competing narratives. We close with a discussion of related work in \autoref{literature}. The proofs are in \autoref{proofs:main}.

\section{Model}\label{model}
\subsection{Basic Setup}

We study belief updating in the standard \cite{anscombeaumann1963} framework of choice under uncertainty. A DM faces uncertainty described by a nonempty and finite set of states of nature $S=\{s_1,\ldots,s_n\}$.  A nonempty subset $E$ of $S$ is called an event. Let $X$ be a nonempty, finite set of outcomes and $\Delta(X)$ be the set of all lotteries over $X$, $\Delta(X) := \Big\{ p : X \rightarrow [0,1] \mid \sum_{x \in X}p(x)=1 \Big\}$. For any Bernoulli utility function $u:X \rightarrow \mathds{R}$, by $u(p)$ we mean the expected utility of lottery $p$.

%\begin{equation*}
%\Delta(X) := \Big\{ p : X \rightarrow [0,1] : \sum_{x \in X}p(x)=1 \Big\}.
%\end{equation*}

We model the DM's preference over acts. An act is a mapping $f: S \to \Delta(X)$ that assigns a lottery to each state. The set of all acts is $F:=\{f: S\to \Delta(X)\}$. Using a standard abuse of notation, any lottery $p$ also corresponds constant act that returns $p$ for all states.  A preference relation over $F$ is denoted by $\succsim$. As usual, $\succ$ and $\sim$ are the asymmetric and symmetric parts of $\succsim$, respectively. For any $f, g\in F$ and $E\subset S$, $f\, E\, g$ is the act that returns $f(s)$ for states $s$ in $E$ and $g(s)$ for states $s$ in $E^c:=S\setminus E$.

We denote by $\Delta(S)$ the set of all probability distributions on $S$. For notational convenience, for each $\mu\in \Delta(S)$ and each $s_i \in S$, we will sometimes write $\mu_i$ in place of $\mu(s_i)$: the probability of state $s_i$ according to $\mu$. For any set $A$ and a function $d$ on $A$, we write  $\arg\min d(A)=\{x\in A \mid d(y)\ge d(x)\text{ for any }y\in A\}$ (whenever this is well-defined). Similarly, for any set $A$ and a preference relation $\succsim$ on $A$, let $\min(A, \succsim)\equiv \{x\in A\mid y\succsim x\text{ for any }y\in A\}$. Finally, let $\|\cdot\|$ denote the Euclidean norm, and write $B(\pi, \epsilon):=\{\pi'\in \Delta(S)\mid ||\pi-\pi'||\le \epsilon\}$ for any $\pi\in\Delta(S)$ and $\epsilon>0$.

Throughout the paper, we assume that $\mu$ has full-support. Behaviorally, this means that every event $E\subset S$ is $\succsim$-nonnull, i.e., $f\, E\, g\not\sim g$ for some $f, g\in F.$  

\subsection{Information Sets}

We consider an environment in which the DM receives new pieces of information about the uncertainty she faces (i.e., the states in $S$). Importantly, we explicitly develop a general information structure, defined below. 

\begin{defn} We call $I\subseteq \Delta(S)$ an \textbf{information set} if it is non-empty,  closed, and is a finite union of convex sets. The collection of all information sets is denoted $\mathscr{I}$.
\end{defn}

Our DM has an initial prior over the states that may be informed by past experiences. %faces this uncertainty about the states with ``complete ignorance." We may think of this as the case where her initial information is the set of all probability distributions over $S$ (i.e., $I = \Delta(S)$). 
The DM then receives ``more precise"  information $I \in \mathscr{I}$. In other words, the DM learns that some probability distributions are impossible. This is analogous to the standard setup, in which the DM is informed that certain states of nature are no longer possible (i.e., the DM is informed that $E \subset S$ has occurred). However, our notion of information is more general and nests the idea of an event. The information set containing all probability distributions concentrated on $E$, $I = \{\pi \in \Delta(S) \mid \pi(E)=1\}$, is equivalent to learning that all states outside $E$ are impossible. We will, therefore, refer to such information sets as \emph{event information}. Our setting allows for information sets that capture richer statements, and we provide some examples below.

%\begin{exm}[{\bf{Information Sets}}]
\begin{itemize}
\item[(i)] ($\alpha$-Event) For any $E \subset S$ and $\alpha \in [0,1]$, the information set ``$E$ occurs with probability $\alpha$'' is 
\begin{equation}
I^\alpha_E=\{\pi \in \Delta(S) \mid \pi(E)=\alpha\}.
\end{equation}We will refer to such an information set as an \emph{$\alpha$-event}.\footnote{For instance, in the 3-color Ellsberg experiment, subjects are informed that a ball will be drawn from an urn containing red, blue, and green balls, and it is standard to assume that the probability of a red ball is $\frac{1}{3}$.} When $\alpha=1$, this corresponds to event information. %Consequently, we may also refer to the information set $I^\alpha_E$ as the event $E$.''
\item[(ii)] (Precise information) For any probability distribution $\pi$, the information set ``$\pi$ is the true distribution'' is 
\begin{equation}I=\{\pi\}.\end{equation}
\item[(iii)] (Qualitative information) For any $A,B \subseteq S$, the information set ``$A$ is at least $\gamma$-as likely as $B$'' is
\begin{equation}I=\{\pi \in \Delta(S) \mid \pi(A)\ge \gamma\,\pi(B)\}.\end{equation} 
Notice that for $\gamma=1$, this corresponds to the classic notion of qualitative information. 
\item[(iv)] (Interval information) For $E \subset S$, and $0<\alpha<\beta <1$, interval information corresponds to ``the probability of $E$ is between $\alpha$ and $\beta$'' is \begin{equation} I=\{\pi \in \Delta(S) \mid  \alpha \leq \pi(E) \leq \beta \}.\end{equation} 
%\item[(v)] For any probably distribution $\mu^*$ and $\epsilon \in (0,1)$,  the information set ``only probabilities $\epsilon$-close to $\mu^*$ are possible'' is 
%\begin{equation}
%I=\{\pi \in \Delta(S) \mid \| \mu^*-\pi \|\le \epsilon\}. 
%\end{equation} 
%$I$ reveals that all distributions $\epsilon$-close to $\mu^*$ are possible. 

\end{itemize}
%\end{exm}

\subsection{Belief Revision and Inertial Updating}

The DM's behavior is represented by a family $\{\succsim_I\}_{I\in\mathscr{I}}$ of preference relations, each defined over $F$.  Before any information is revealed (i.e., $I=\Delta(S)$), we write $\succsim$ in place of $\succsim_{\Delta(S)}$, and we call $\succsim$ the initial preference. As new information $I \subset \Delta(S)$ arrives, the DM revises $\succsim$ given $I$. The new preference, denoted $\succsim_I$, governs the conditional behavior of the DM in light of $I$. 
 
We assume that the DM's initial preference is of the SEU form. That is, the initial preference $\succsim$ admits a SEU representation with respect to a Bernoulli utility function $u: X \to \mathds{R}$ and a (unique) probability distribution $\mu\in \Delta(S)$ such that for any $f, g\in F$, 
\begin{equation}
f\succsim g\text{ if and only if }\sum_{s\in S}\mu(s) u\big(f(s)\big)\ge \sum_{s\in S}\mu(s) u\big(g(s)\big).\end{equation}
Hence, the DM's initial behavior is characterized by the pair $(u,\mu)$. 

How does the DM's behavior change after she learns $I$? We assume that the DM updates her initial preference $\succsim$ by revising her initial belief $\mu$ while keeping her risk attitude unchanged. Let $\mu_I$ denote the DM's revised (updated) belief conditional on $I$. How does the DM form her new belief $\mu_I$ when $\mu \not \in I$, that is, her old belief $\mu$ conflicts with the available information?

We impose two properties on the DM's belief revision. First, we assume that the DM accepts the information, so that her new belief $\mu_I$ is consistent with  $I$ (i.e., $\mu_I \in I$). Second, we assume that she exhibits ``inertia of initial beliefs." Therefore, she chooses the $\mu_I$ closest to her initial belief $\mu$. That is, the DM forms her new belief $\mu_I$ as if she was minimizing the distance between her initial belief $\mu$ and all probability distributions consistent with $I$ (see \autoref{iuseu_fig1}).\footnote{This behavior may be interpreted as an implication of the principle of insufficient reason (or principle of indifference); the DM, without a way to assign likelihoods to $\pi \in I$, decides to make the minimal change to be consistent with $I$.} We call this updating procedure \textbf{Inertial Updating}. When her prior belief $\mu$ is consistent with $I$, the closest probability measure is $\mu$, and thus the DM keeps it (i.e., $\mu=\mu_I$ when $\mu\in I$).

Combining these assumptions, our DM admits a SEU representation with respect to some Bernoulli utility function $u$ and, for each information set $I\in\mathscr{I}$, an updated belief $\mu_I \in I$ that is of minimal distance from her initial belief $\mu$. Additionally, we require that the DM's notion of distance satisfy two intuitive properties, which we formally define below.

 %Therefore, we call $\succsim_I$ the \textit{Minimum Distance SEU}. 

\begin{defn}[Distance Function and Its Tie-Breaker] A function $d:\Delta(S) \to\mathds{R}$ is a \textbf{distance function} with respect to $\mu\in \Delta(S)$, denoted by $d_\mu$, if, 
\begin{itemize}
\item[(i)] for any distinct $\pi, \pi'\in\Delta(S)$ with $d_{\mu}(\pi)=d_{\mu}(\pi')$, there is some $\alpha\in (0, 1)$ such that $d_{\mu}(\pi)>d_{\mu}(\alpha \pi+(1-\alpha)\pi')$, and 
\item[(ii)] $d_{\mu}(\mu)<d_{\mu}(\pi)$ for any $\pi\in\Delta\setminus\{\mu\}$. 
\end{itemize}
Moreover, a strict preference relation $\succ_{d_\mu}$ on $\Delta(S)$ is a \textbf{tie-breaker for $d_\mu$} if for any $\pi, \pi'\in \Delta$, $d_{\mu}(\pi)>d_{\mu}(\pi')$ implies $\pi\succ_{d_\mu} \pi'$. 
\end{defn}

Property (i) requires that if two beliefs are equidistant from $\mu$, then there is a mixed belief that is strictly closer to $\mu$. This mild property is natural and weaker than strict quasiconvexity. Property (ii) ensures that the prior belief is unique and that all other beliefs are strictly further away. Most distance notions allow for two beliefs to be ``equidistant'' (i.e., allow indifference), hence we introduce a notion of tie-breaking. As we will show, tie-breaking is only necessary for non-convex information sets. That is, for each convex $I \in \cal I$, property (i) is sufficient for a unique minimizer, hence
\[\mu_I=\arg\min d_\mu(I).\]
We now can formally define our model.

\begin{defn}[{\bf{Inertial Updating}}]\label{iuseu} A family of preference relations $\{\succsim_I\}_{I\in\mathscr{I}}$ admits an \textbf{Inertial Updating} subjective expected utility representation if there are a Bernoulli utility function $u:X\to\mathds{R}$, a prior $\mu \in \Delta(S)$, a distance function $d_\mu:\Delta(S)\to\mathds{R}$, and its tie-breaker $\succ_{d_\mu}$ such that for each $I \in \cal I$, the preference relation $\succsim_I$ admits a SEU representation with $(u, \mu_I)$ where
\[\mu_I\equiv \min(\arg\min d_\mu(I), \succ_{d_\mu}).\]
 That is, for any $f, g\in F$, 
\[
f \succsim_I g \quad \text{if and only if} \quad \sum_{s\in S}\mu_I(s) u\big(f(s)\big)\ge \sum_{s\in S}\mu_I(s) u\big(g(s)\big).
\]
% \item[(ii)] for each convex $I \in \cal I$, 
% \[\mu_I=\arg\min d_\mu(I).\]

\end{defn}

Note that the minimization problem $\min(\arg\min d_\mu(I), \succ_{d_\mu})$ is well defined and has a unique solution since $\arg\min d_\mu(I)$ is finite and $\succ_{d_\mu}$ is a strict preference relation.

The family $\{\succsim_I\}_{I\in\mathscr{I}}$ of \nameref{iuseu} preferences is characterized by $(u, d_\mu, \succ_{d_\mu})$.  We will restrict our attention to weakly continuous and local nonsatiatiated distance functions as defined below. 

\begin{defn} A distance function $d_\mu$ is \textbf{locally nonsatiated} with respect to $\mu$ if for any $\pi\in\Delta(S)\setminus\{\mu\}$ and $\epsilon>0$, there is a $\pi'$ such that $\|\pi-\pi'\|<\epsilon$ and $d_\mu(\pi')<d_\mu(\pi)$. A distance function $d_\mu$ is \textbf{weakly continuous} if for any $\pi, \pi'\in\Delta(S)$ with $d_\mu(\pi)<d_\mu(\pi')$, there is a $\epsilon>0$ such that $d_\mu(\pi)<d_\mu(\pi'')$ for any $\pi''$ with $\|\pi'-\pi''\|<\epsilon$.
\end{defn}

\subsection{Notions of Distance}\label{notdis}

Our DM's notion of distance is subjective. Thus, our framework allows for a wide variety of distance notions and, consequently, a wide variety of updating behaviors. In this section, we discuss a few examples of distance functions.  We begin by defining a generalization of Kullback-Leibler divergence, a well-known method of measuring the distance between probability measures from information theory.   

\begin{defn}[\textbf{Bayesian divergence}]\label{GBex} For a strictly concave, twice differentiable function $\sigma:\mathds{R}_{+}\to\mathds{R}$ with $\sigma(1)=0$, let $d_{\mu}$ be given by
\begin{equation}
d_\mu(\pi)=-\sum^n_{i=1}\mu_i\, \sigma(\frac{\pi_i}{\mu_i}).
\end{equation}
\end{defn}

\nameref{GBex} is often called $f$-divergence and one of the most commonly used divergences in information theory (\cite{csiszar1967information}).\footnote{Technically, $f$-divergence is defined as $D_f(\mu||\pi)=\sum^n_{i=1}\pi_i\, f(\frac{\mu_i}{\pi_i})$ for a convex function $f$ such that $f(1)=0$. Note that we obtain \nameref{GBex} by simply setting $\sigma(x)=-x\,f(\frac{1}{x})$. Moreover, $f''\ge(>) 0$ iff $\sigma''\le(<) 0$, and $f(1)=0$ iff $\sigma(1)=0$. Throughout the paper, we assume that $\sigma$ is strictly concave, or equivalently, $f$ is strictly convex, which is satisfied for most of the well-known $f$-divergences.} Notice that \nameref{GBex} reduces to the KL divergence when $\sigma(x) = \ln(x)$ and $\alpha$-divergence when $\sigma(x)=\frac{x^\alpha-1}{\alpha}$ with $\alpha\in[0, 1)$. Note that $\alpha$-divergence is behaviorally equivalent to the Renyi divergence in our setting. When $\alpha=0$, $\alpha$-divergence reduces to the KL divergence. This distance function is particularly easy to interpret when information is in the form of an $\alpha$-event: $I^\alpha_E=\left\{\pi\in \Delta(S) \mid \pi(E)=\alpha\right\}$. 

\begin{prps}\label{prop:bayes} For any \nameref{iuseu} representation with \nameref{GBex} and for any $E\subset S$ and $\alpha\in [0, 1]$,
\begin{equation}
\mu_{I^\alpha_E}(s)=\alpha\,\frac{\mu(s)}{\mu(E)}\mathds{1}\{s\in E\}+(1-\alpha)\,\frac{\mu(s)}{\mu(E^c)}\mathds{1}\{s\in E^c\},
\end{equation}

\end{prps}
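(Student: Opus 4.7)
The plan is to solve the convex program $\min_{\pi \in I^\alpha_E} d_\mu(\pi)$ directly via its Lagrangian, since the structure of Bayesian divergence decouples nicely across coordinates. First, I observe that $I^\alpha_E$ is a non-empty, closed, convex subset of $\Delta(S)$ (it is the intersection of the simplex with an affine hyperplane), and since $\sigma$ is strictly concave and twice differentiable on $\mathbb{R}_+$, the map $\pi_i \mapsto -\mu_i\,\sigma(\pi_i/\mu_i)$ is strictly convex for each $i$, hence $d_\mu$ is strictly convex on $\Delta(S)$. This gives a unique minimizer and, in particular, makes the tie-breaker $\succ_{d_\mu}$ irrelevant here, so I only need to produce that minimizer and verify it.

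Next, I set up the Lagrangian with one multiplier $\lambda$ for the simplex constraint $\sum_i \pi_i = 1$ and one multiplier $\beta$ for the constraint $\sum_{s\in E}\pi_s = \alpha$ (the constraint $\sum_{s\in E^c}\pi_s = 1-\alpha$ is then automatic). Stationarity with respect to $\pi_i$ gives
\begin{equation*}
\sigma'\!\left(\tfrac{\pi_i}{\mu_i}\right) =
\begin{cases}
\lambda + \beta, & s_i \in E, \\
\lambda, & s_i \in E^c.
\end{cases}
\end{equation*}
Because $\sigma$ is strictly concave, $\sigma'$ is strictly decreasing and hence injective, so the ratio $\pi_i/\mu_i$ is constant on $E$ (call it $c_1$) and constant on $E^c$ (call it $c_2$). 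Plugging into the two aggregate constraints yields $c_1\mu(E) = \alpha$ and $c_2\mu(E^c) = 1-\alpha$, i.e.\ $c_1 = \alpha/\mu(E)$ and $c_2 = (1-\alpha)/\mu(E^c)$, which are well-defined because $\mu$ has full support. Substituting back gives exactly the claimed formula for $\mu_{I^\alpha_E}(s)$.

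The one thing that needs extra care is the boundary $\alpha\in\{0,1\}$, where the optimizer sits on $\partial\Delta(S)$ and the first-order conditions in their interior form may fail (e.g.\ $\sigma'(0)$ may be $+\infty$, as for $\sigma(x)=\ln x$). The cleanest way to handle this is to argue by restriction: for $\alpha=1$ the problem reduces to minimizing $-\sum_{s\in E}\mu_s\,\sigma(\pi_s/\mu_s)$ on $\{\pi : \pi(E)=1\}$, which is an interior problem on the face supported by $E$, and the same Lagrangian argument returns $\pi_s=\mu(s)/\mu(E)$ on $E$ and $0$ off $E$; the case $\alpha=0$ is symmetric. Alternatively, one can take a limit $\alpha\to 1$ of interior minimizers and invoke continuity of $d_\mu$ on the relative interior of each face together with the uniqueness of the limiting candidate inside $I^\alpha_E$. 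The only genuine obstacle is this boundary bookkeeping; the interior calculation itself is a standard Lagrangian exercise once strict convexity and the decoupled form of $d_\mu$ are in hand.
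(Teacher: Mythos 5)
Your proposal is correct and follows essentially the same route as the paper: both solve the constrained minimization by a Lagrangian/KKT argument and use strict concavity of $\sigma$ (hence injectivity of $\sigma'$) to conclude that $\pi_i/\mu_i$ is constant on $E$ and on $E^c$, after which the aggregate constraints pin down the constants. The only presentational difference is that the paper carries explicit multipliers for the nonnegativity constraints and rules out corner solutions ($\pi_i=0$ for some $i\in E$) by contradiction, whereas you reach the same conclusion by noting that the strictly positive interior candidate satisfies the KKT conditions of a strictly convex program and is therefore the unique minimizer --- a valid shortcut given full support of $\mu$ and $\alpha\in(0,1)$, with the boundary cases handled separately as you indicate.
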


That is, the DM shifts probability mass between events $E$ and $E^c$, maintaining the relative probabilities between states within the $E$ (and $E^c$).\footnote{This form of belief revision for probabilistic information is also known as Jeffrey Conditionalization \citep{jeffrey1965}.}  When $\alpha=1$ (i.e., $I^\alpha_E$ represents a standard event), the Bayesian divergence yields Bayesian updating: $\mu_I(s)=\frac{\mu(s)}{\mu(E)}$ for each $s \in E$, and $0$ otherwise. However, for more general information sets, the precise form of $\sigma$ will matter. Consequently, ``Bayesian'' DMs might disagree with each other when provided more general information sets $I$. This is studied in \autoref{bayesian}.

Using the intuition from \nameref{GBex}, we can introduce a ``perturbed" version of this distance notion to capture non-Bayesian beliefs.

\begin{defn}[\textbf{$h$-Bayesian}]\label{hBayes} Let $d_\mu(\pi)=-\sum^n_{i=1} h_i(\mu_i)\, \sigma(\frac{\pi_i}{h_i(\mu_i)})$, where $h_i:\mathds{R}_{+}\to\mathds{R}$ and $\sigma$ satisfies the conditions from \nameref{GBex}. 
\end{defn}

The $h$-Bayesian distance notion captures a form of non-Bayesian updating where the agent is Bayesian with respect to biased beliefs. In this example, $h$ specifies the belief distortion applied to the initial belief.  For $\alpha$-events,  \[\mu_{I^{\alpha}_E}(s)=\alpha\frac{h_s(\mu(s))}{\sum_{s'\in E} h_{s'}(\mu(s'))}\mathds{1}\{s\in E\} + (1-\alpha)\frac{h_s(\mu(s))}{\sum_{s'\in E^c} h_{s'}(\mu(s'))}\mathds{1}\{s\in E^c\}.\] When $h_i(t)=t$, this reduces to Bayes' rule. When $h_i(t)=t^{\rho}$, this corresponds to a special case of \cite{grether1980}. For $\rho<1$, this captures underreaction to information and base-rate neglect, while $\rho>1$ captures overreaction to information. When $h_i(t)=\psi_i\,t^{\rho}$ for the constants $\psi_i \geq0$, this is the ``power-weighted distortion'' studied in \cite{chambers2024coherentdistortedbeliefs}.  It is straightforward to generalize $h$ to capture more general belief distortions such as confirmation bias, overreaction to weak signals, and underreaction to strong signals. See \cite{DKT2023} for more details.

A final example that we wish to mention is the Euclidean distance. 
\begin{defn}[\textbf{Euclidean}]\label{euclidian} Let $d_\mu(\pi)=\sum^n_{i=1} (\pi_i-\mu_i)^2$. For this Euclidian distance, when the minimization problem has an interior solution, we have \[\mu_{I_E^\alpha}(s)=\mu_i+\frac{\alpha-\mu(E)}{|E|}.\] Here, prior odds are ``ignored'' when updating beliefs: probability is allocated to the remaining states (i.e., those in $E$) uniformly. This distance is a special case of Bregman divergence.\end{defn}

\subsection{Interpreting General Information as a Message}\label{sec:anything}

In principle, it is possible to interpret general information as a signal or message and then apply Bayes' rule in this richer setting. While this approach seems reasonable at first inspection, as we will show below, it does not provide any guidance on belief updating rules for general information. This is because essentially, any updating rule can be written as a result of Bayes' rule by appropriately choosing the joint distribution over signals and states. Our result is similar to that of \citet{ShmayaYariv2016}, who show that Bayes' rule does not have any testable predictions without imposing assumptions on the information structure. 

To show this, consider the following mapping $\rho_\mu:\mathcal{I}\to\Delta(S)$ where $\mathcal{I}$ is an arbitrary finite subset of $\mathscr{I}$ that contains all event information sets. Here, $\rho_\mu$ is an updating rule where $\rho_\mu(I)$ is the posterior distribution after receiving information $I$. We say an updating rule $\rho_\mu$ has a \emph{Bayesian representation} if there is an information structure $P:S\to \Delta(\mathcal{I})$ such that 
\[\rho_\mu(I)(s)=\frac{P(I|s)\mu(s)}{\sum_{s'\in S}P(I|s')\mu(s')}.\]

To demonstrate that the Bayesian representation itself will not impose much structure on updating rule, let us start with the following example. 

\begin{exm} Consider $\rho_\mu$ such that for some $\epsilon>0$,  $\rho_\mu(I^1_E)(s)=\mu(s)+\epsilon(E, s)$ for any $s\in E$ and $\rho_\mu(I^1_E)(s)=\mu(s)-\epsilon(E, s)$ for any $s\in E^c$, for every $E$. Then, $\rho_\mu$ has a Bayesian representation.
\end{exm} 

The preceding example illustrates that as long as the updating rule responds to the qualitative information provided by ``standard information,'' the updating rule admits a Bayesian representation.

Since $I$ contains qualitative information about the likelihoods of states, there must be a certain connection between information contained in $I$ and $\rho_\mu(I)(s)$. For example, the classic axiom of Consequentialism requires that $\rho_\mu(I_E)(s)=0$ if $I_E=\{\pi\in \Delta(S) \mid \pi(E)=1\}$ and $s\in E^c$. We impose the following minimal requirements. 

\begin{defn}[Minimal Responsiveness.]\label{MR} For any $E$, $\rho_\mu(I_E)(s)\ge \mu(s)$ for all $s\in E$, and $\rho_\mu(I_E)(s)<\mu(s)$ for all $s\in E^c$.   
\end{defn}

\begin{prps}\label{prop:anything} Any minimally responsive $\rho_\mu$ has a Bayesian representation. 
\end{prps}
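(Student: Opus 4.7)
The plan is to reformulate the problem: $\rho_\mu$ admits a Bayesian representation if and only if there exists $q\in\Delta(\mathcal{I})$ with $q(I)>0$ for every $I$ and the \emph{Bayes-plausibility} identity $\sum_{I}q(I)\rho_\mu(I)=\mu$. Given such a $q$, set $P(I\mid s):=q(I)\rho_\mu(I)(s)/\mu(s)$; summability $\sum_{I}P(I\mid s)=1$ is Bayes-plausibility read coordinate-wise, and the Bayes formula returns $\rho_\mu(I)$ by construction. So the task reduces to producing a fully supported $q$.

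I would first use minimal responsiveness on the singleton events to obtain a strictly positive convex combination of $\{\rho_\mu(I_{\{s\}})\}_s$ equal to $\mu$. Writing $v_s:=\rho_\mu(I_{\{s\}})-\mu$, the hypotheses give $v_s(s)>0$, $v_s(s')<0$ for $s'\ne s$, and $\sum_{s'}v_s(s')=0$. Gordan's alternative says either such a positive relation $\sum_s\lambda_s v_s=0$ exists, or there is $y\in\mathds{R}^n$ with $y^\top v_s>0$ for all $s$. Evaluating the second option at $s^*:=\arg\min_s y_s$ and using $v_{s^*}(s^*)=-\sum_{s'\ne s^*}v_{s^*}(s')$ gives $y^\top v_{s^*}=\sum_{s'\ne s^*}(y_{s'}-y_{s^*})\,v_{s^*}(s')\le 0$, a contradiction. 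Strictness $\lambda>0$ follows because $\lambda_{s_0}=0$ would make coordinate $s_0$ of $\sum_s\lambda_s v_s$ strictly negative unless every remaining $\lambda_s$ vanishes as well.

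The main obstacle is the next step: showing $\{\rho_\mu(I_{\{s\}})\}_s$ are affinely independent; equivalently, $\sum_s c_s v_s=0$ with $\sum_s c_s=0$ forces $c=0$. Supposing otherwise, I plan to partition $S=T^+\cup T^-\cup T^0$ by the sign of $c_s$ (both $T^+$ and $T^-$ are non-empty if $c\ne 0$) and introduce the auxiliary quantity $S(s):=\sum_{s'\in T^+}v_s(s')$. Using $\sum_{s'}v_s(s')=0$ one checks that for $s\in T^+$, $S(s)=\sum_{s'\in T^-\cup T^0}|v_s(s')|>0$, while for $s\notin T^+$ every $s'\in T^+$ satisfies $s'\ne s$, so $S(s)<0$. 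Swapping order of summation yields $\sum_s c_s S(s)=\sum_{s'\in T^+}\sum_s c_s v_s(s')=0$, whereas termwise the signs of $c_s$ and $S(s)$ match on $T^+\cup T^-$ and the $T^0$ terms vanish, giving $\sum_s c_s S(s)>0$, a contradiction. Consequently the linear map $\delta\mapsto\sum_s\delta_s v_s$ is a bijection from $\{\delta\in\mathds{R}^n:\sum_s\delta_s=0\}$ onto $H:=\{h\in\mathds{R}^n:\sum_s h_s=0\}$.

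With the bijection in hand, full support of $q$ is straightforward. For each $I\in\mathcal{I}$ I plan to exhibit $q_I\in\Delta(\mathcal{I})$ with $q_I(I)>0$ satisfying Bayes-plausibility, and then average $q^*:=|\mathcal{I}|^{-1}\sum_I q_I$. Take $q_{\Delta(S)}:=\delta_{\Delta(S)}$ and, for each singleton, $q_{I_{\{s_0\}}}(I_{\{s\}}):=\lambda_s$. For any remaining $I$, pick (via the bijection) $\delta^I$ with $\sum_s\delta^I_s=0$ and $\sum_s\delta^I_s v_s=\epsilon(\mu-\rho_\mu(I))$, and set $q_I(I):=\epsilon$, $q_I(I_{\{s\}}):=(1-\epsilon)\lambda_s+\delta^I_s$, zero elsewhere; a routine expansion verifies both the sum and mean conditions, and because $\delta^I$ scales linearly in $\epsilon$, choosing $\epsilon>0$ small preserves $q_I\ge 0$. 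Averaging gives $q^*(I)\ge q_I(I)/|\mathcal{I}|>0$ for every $I$, completing the construction.
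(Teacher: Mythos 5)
Your proof is correct, and the existence argument is genuinely different from the paper's. Both proofs share the same first reduction: a Bayesian representation amounts to finding full-support weights $q$ on $\mathcal{I}$ with $\sum_I q(I)\,\rho_\mu(I)=\mu$, after which $P(I\mid s)=q(I)\,\rho_\mu(I)(s)/\mu(s)$ does the job. Where you diverge is in producing $q$. The paper argues by contradiction on the full linear system $\sum_I \lambda(I)\,\rho_\mu(I)(s)/\mu(s)=1$ for $s\in S$: if it were unsolvable there would be a dependence vector $\alpha$ among the rows, and evaluating minimal responsiveness at the two events $E_1=\{s\mid \alpha_s>0\}$ and $E_2=\{s\mid \alpha_s<0\}$ yields opposite strict inequalities for $\sum_k\alpha_k$. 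You instead work constructively with only the singleton events: Gordan's alternative plus the sign structure of $v_s=\rho_\mu(I_{\{s\}})-\mu$ gives a strictly positive convex combination equal to $\mu$; your affine-independence lemma makes $\delta\mapsto\sum_s\delta_s v_s$ a bijection onto the hyperplane of vectors summing to zero; and the perturb-and-average step spreads strictly positive mass to every $I$. What your route buys is exactly the point the paper elides: the paper's argument establishes only that the linear system has \emph{a} solution, not that it has a positive one, yet positivity of $\lambda(I)$ is what makes each $P(\cdot\mid s)$ a probability distribution and keeps the Bayes formula well defined at every $I$; your construction closes that gap explicitly, at the cost of a longer argument. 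One small point worth stating rather than assuming: your choice $q_{\Delta(S)}=\delta_{\Delta(S)}$ relies on $\rho_\mu(\Delta(S))=\mu$, which does follow from minimal responsiveness applied to $E=S$ (the inequalities $\rho_\mu(I_S)(s)\ge\mu(s)$ for all $s$ force equality), but you should say so.
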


The idea behind the result is straightforward. We first show that obtaining a Bayesian representation is equivalent to solving $|S|$ linear equations with $|\mathcal{I}|$ unknowns. Hence, unless $\rho_\mu$ is ``almost" constant, a solution exists. \nameref{MR} rules out ``almost" constant updating rules. While it is clear from the proof of Proposition \autoref{prop:anything} that \nameref{MR} can be relaxed or modified, finding weaker conditions under which Proposition \autoref{prop:anything} holds is somewhat orthogonal to the issue of finding a Bayesian representation without placing additional properties on the information structure. 

The proposition shows that the standard approach, Bayes' rule, essentially imposes no structure on updating rules for general information. In contrast, we impose structure and discipline belief updating rules for general information using \nameref{iuseu}.

\section{Axiomatic Characterization}\label{axioms}

In this section, we present behavioral postulates that characterize the family of Inertial Updating SEU preferences. Our first axiom imposes the standard SEU conditions of \cite{anscombeaumann1963} on each (conditional) preference relation $\succsim_I$ in addition to \textbf{Invariant Risk Preferences}.  Because these conditions are well understood, we will not provide a formal discussion of them. 

\begin{axm}[\bf{Standard Postulates}]\label{SEU} For each $I\in \mathscr{I}$, the following conditions hold.

\begin{itemize}
\item[$(i)$] \textbf{Weak Order:} $\succsim_I$ is complete and transitive. 
%\item[ii)] \textbf{Transitivity:} $\succsim_I$ is transitive; 
\item[$(ii)$] \textbf{Archimedean:} For any $f, g, h\in F$, if $f\succ_I g$ and $g\succ_I h$, then there are $\alpha, \beta\in (0, 1)$ such that $\alpha f+(1-\alpha) h\succ_I g$ and $g\succ_I \beta f+(1-\beta) h$.
\item[$(iii)$] \textbf{Monotonicity:} For any $f, g\in F$, if $f(s)\succsim_I g(s)$ for each $s\in S$, then $f\succsim_I g$. Further, if $f(s)\succsim g(s)$ for each $s\in S$ and $f(s)\succ g(s)$ for some $s$, then $f\succ g$.
\item[$(iv)$] \textbf{Nontriviality:} There are $f, g\in F$ such that $f\succ_I g$. 
\item[$(v)$] \textbf{Independence:} For any $f, g, h\in F$ and $\alpha\in (0, 1]$, $f\succsim_I g$ if and only if $\alpha f+(1-\alpha) h\succsim_I \alpha g+(1-\alpha) h$. 
\item[$(vi)$] \textbf{Invariant Risk Preferences:} For any $p,q\in\Delta(X)$, $p\succsim_I q$ if and only if $p\succsim q$.
\end{itemize}
\end{axm}  

Note that \textbf{Monotonicity}, which is slightly stronger than the standard monotonicity condition, implies that the prior $\mu$ has full support. \textbf{Invariant Risk Preferences} requires that the DM's preference over lotteries does not change when information is provided.  

The next two axioms ensure that the DM forms a new belief that is consistent with the available information. %To express our next axiom, \nameref{COMP}, we need additional notation. 

%For any $f \in F$ and  $\pi\in\Delta(S)$, we denote by $\pi(f) \in \Delta(X)$ the lottery that yields the outcomes of $f$ according to the probability distribution $\pi$; i.e., $\pi(f)(x)= \pi(\{ s \in S \mid f(s)=x\})$ for each $x \in X$.

\begin{axm}[\bf{Compliance}]\label{COMP}
For any $p, q\in \Delta(X)$, $E\subset S$, and $\pi\in\Delta(S)$, 
\[p\, E\, q\sim_{\{\pi\}} \pi(E)\,p+\big(1-\pi(E)\big)\,q.\]
\end{axm}

\nameref{COMP} requires that the DM adheres to precise information when provided.  That is, whenever a single distribution is provided,  $I=\{\pi\}$ for some $\pi \in \Delta(S)$, the DM adopts $\pi$.  One way to think of this axiom is to imagine a patient visiting her doctor and discussing a pontential treatment. If the doctor provides extremely precise information about the treatment and its chances of success or failure, the patient accepts this information completely and adopts the doctor's information as her beliefs about the states. \nameref{COMP} resembles Consequentialism (see \cite{Ghirardato2002}) in dynamic settings under uncertainty. %\footnote{Alternatively, we can call it Compliance with Precise Information}

To formally state the next axiom, we first define a notion of equivalent information sets. Given two sets of information $I$ and $I'$, we say that they are \emph{preference equivalent} if $\succsim_I=\succsim_{I'}$  (that is, $f \succsim_I g$ if and only if $f \succsim_{I'} g$ for all $f, g \in F$). In this case, we may also say that $\succsim_I$ and $\succsim_{I'}$ are equivalent. 
%\begin{equation*}
%f \succsim_I g \quad \text{if and only if} \quad f \succsim_{I'} g.
%\end{equation*}
Our next axiom, \nameref{RESP}, requires that the DM's preferences ``respond'' to the information. Consider two information sets, $I$ and $I'$. If these sets of information are preference equivalent (i.e., $\succsim_I$ and $\succsim_{I'}$ are equivalent), so that the DM responds to them in the same way, then these two pieces of information must have some ``common information" (i.e., $I \cap I' \neq \emptyset$). 

\begin{axm}[\bf{Responsiveness}]\label{RESP} For any $I, I'\in\mathscr{I}$, 
\[\text{ if }\succsim_I=\succsim_{I'}\text{, then }I\cap I'\neq \emptyset.\]%\footnote{Alternatively, Axiom 3' (Responsiveness to Information): For any $I, I'\in\mathscr{I}$, if $\partial I\cap \partial I' = \emptyset$, then $\succsim_I \neq \succsim_{I'}$.}
\end{axm}

Another way to understand this condition is to consider the contrapositive: mutually exclusive sets of information should never be preference equivalent.

Under \nameref{SEU}, \nameref{COMP}, and \nameref{RESP}, there is $(u, (\mu_I)_{I\in\mathscr{I}})$ such that for each $I$, $\mu_I \in I$ and the conditional preference $\succsim_I$ admits the SEU representation with $(u, \mu_I)$. Notice that the DM may form a new belief $\mu_I$ that is completely independent of the initial belief $\mu$.

The next axiom, \nameref{IB}, the most important behavioral condition for our model, connects initial and new beliefs. Loosely, \nameref{IB} implies a form of ``belief consistency'' across various information sets.  For an intuition behind \nameref{IB}, consider $E \subseteq S$ and let $I_1=\{\pi \in \Delta(S) \mid \pi(E) \ge \frac12\}$ be the information ``$E$ is more likely than $E^c$.''  This may alter the DM's preferences regarding bets on $E$. Suppose that the more refined information, $I_2=\{\pi \in \Delta(S) \mid \frac34 \ge \pi(E) \ge \frac12\}$, induces the same conditional preferences regarding bets on $E$. This suggests that the DM's willingness to bet on $E$ is not dependent on the upper bound of $I_2$  (e.g., because it is determined by the lower bound placed on the probability of $E$), and so any information set of the form $I'=\{\pi \in \Delta(S) \mid \beta \ge \pi(E) \ge \frac12\}$ should yield exactly the same willingness to bet on $E$ as $I_1$ for any $\beta \in [\frac34,1]$. \nameref{IB} extends this idea to more general information sets.

	\begin{figure}[h]
	\centering
          \includegraphics[width=6.5cm]{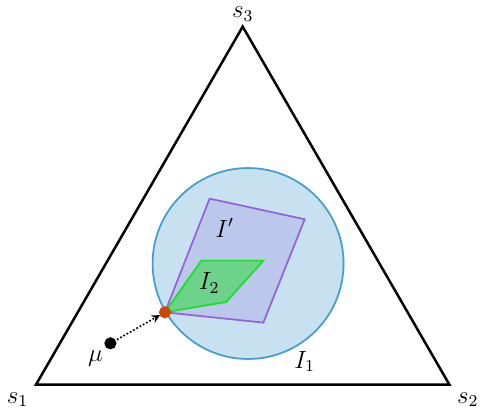}
          \includegraphics[width=6.5cm]{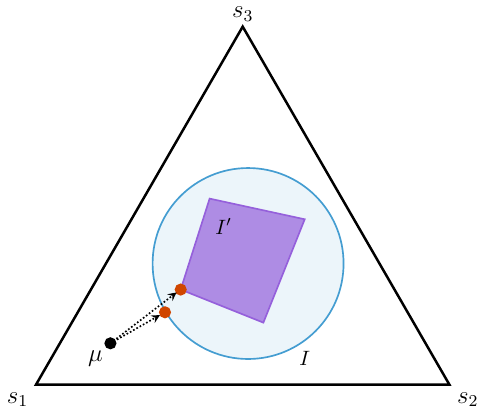}
                   \caption{Illustrations of Informational Betweenness (left) and Extremeness (right).}\label{SARBfig}
          \end{figure}

\begin{axm}[\bf{Informational Betweenness}]\label{IB} For all $I_1, I_2, I'\in\mathscr{I}$ such that $I_2\subseteq I'\subseteq I_1$,
\[\text{ if }\succsim_{I_1}=\succsim_{I_2}\text{, then }\succsim_{I_1}= \succsim_{I'}.\]
\end{axm}

In other words, \nameref{IB} requires that if the least precise and most precise information sets, $I_1$ and $I_2$, provide ``the same information'' (i.e., are behaviorally equivalent), then any intermediate information set, $I'$, must also provide the same information as these two sets. This logic is illustrated in \autoref{SARBfig}.

At this point, it is worth remarking that \nameref{IB} captures most of the behavioral content of \nameref{iuseu}. That is, consider the family of SEU preferences $\{\succsim_I\}_{I\in\mathscr{I}}$ characterized by $\big(u,(\mu_I )_{I\in\mathscr{I}}\big)$. Under \nameref{IB}, it turns out that $\mu_I$ is the minimizer of some (complete and transitive) ordering, which, of course, depends on the prior $\mu$. To ensure this order is consistent with a distance function, we require two technical conditions, \nameref{EXTM} and \nameref{CONT}.

For an intuition behind \nameref{EXTM}, imagine that the DM exhibits a change in behavior after learning $I$; she finds $I$ to be ``informative'' and changes her beliefs ($\succsim_I\neq \succsim$). Then, any ``more informative'' $I'$ must similarly result in additional changes in behavior ($\succsim_{I'}\neq \succsim_{I}$). In other words, a DM who adjusts her beliefs after some information $I$ must continue to move her beliefs when you present her with any strictly more informative information set $I'$.  As we have seen in our discussion of \nameref{IB}, being a (strict) subset is not sufficient for the DM to perceive a set of information as more informative because the information sets may intersect at the chosen belief. Drawing on this insight, we suggest that interiority is the correct notion of ``more informative."  This logic is illustrated in \autoref{SARBfig}. The interior of each $I$ is denoted by $\text{int}(I)$, i.e., $\text{int}(I)=\{\pi\in I \mid \exists\epsilon>0\text{ such that }B(\pi, \epsilon)\subseteq I\}$.

\begin{axm}[\bf{Extremeness}]\label{EXTM} For any convex $I, I'\in\mathscr{I}$ with $I'\subseteq \text{int}(I)$,
\[\text{if }\succsim_I\neq \succsim\text{, then }\succsim_{I'}\neq \succsim_{I}.\]
\end{axm}

%One way to think of how ``relative informativeness" of information sets is to imagine sequence of information sets, each of which is a closed disc centered at $\pi*$. If a DM where presented with these information sets in order of decreasing radii and she is responsive to the information, we might expect her beliefs to also grow closer to $\pi*$.

Our last postulate, \nameref{CONT}, ensures that conditional preferences change in a weakly continuous fashion with respect to the provided information. For any $\pi$ and $\pi'$, $[\pi, \pi']=\{\alpha \pi+(1-\alpha)\pi'|\alpha\in[0, 1]\}$ is the line segment connecting $\pi$ and $\pi'$. %\footnote{We endow $\mathscr{I}$ with the Hausdorff topology. The Hausdorff distance between information sets $I$ and $I'$ is given by $h(I, I')\equiv \max\{\max_{\pi\in I}\min_{\pi'\in I'}\|\pi-\pi'\|, \max_{\pi'\in I'}\min_{\pi\in I} \| \pi-\pi' \| \}$.}  

\begin{axm}[\bf{Weak Continuity}]\label{CONT} For any distinct $\pi, \pi'\in\Delta$, if $\succsim_{[\pi, \pi']}=\succsim_{\{\pi\}}$, then there is $\epsilon>0$ such that \[\succsim_{\{\pi, \pi''\}}=\succsim_{\{\pi\}}\text{ for any $\pi''$ with }||\pi'-\pi''||<\epsilon.\]
\end{axm}

It is important to note that a stronger version of continuity (i.e., a sequence $\succsim_{I^n}$ converges $\succsim_{I^*}$ when the sequence of information sets $I^n$ converges to $I^*$) is violated in general even when the distance function is continuous. Although technically it is slightly weaker, \nameref{CONT} is meant to capture the weak continuity of distance functions.

%\footnote{Continuity guarantees that a revealed preference relation $\succ^{**}$ is upper semicontinuous. Then by Jaffray's (1975) extension theorem there is an upper semicontinuous extension $\succsim^d$ of $\succsim^{**}$. Then by the result of Smith (1974), $\min(I, \succsim^d)$ is nonempty (Bergstrom (1975), Walker (1977), and Campbell and Walker (1990)).}

\begin{thm}\label{repthrm} If a family of preference relations $\{\succsim_I\}_{I\in\mathscr{I}}$ satisfies Axioms 1 through 6, then it admits an {\bf{Inertial Updating}} representation with respect to some locally nonsatiated distance function $d_\mu$. Conversely, any {\bf{Inertial Updating}} representation with a weakly continuous, locally nonsatiated distance function satisfies Axioms 1 through 6.
\end{thm}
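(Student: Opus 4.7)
I would verify each axiom by direct computation from the representation: \nameref{COMP} because $\arg\min d_\mu(\{\pi\})=\{\pi\}$, so the SEU belief driving $\succsim_{\{\pi\}}$ is $\pi$; \nameref{RESP} because $\mu_I\in I$ for every $I$, so $\succsim_I=\succsim_{I'}$ forces $\mu_I=\mu_{I'}\in I\cap I'$; \nameref{IB} because a point that minimizes $d_\mu$ on both $I_1$ and $I_2$ in a nested chain $I_2\subseteq I'\subseteq I_1$ automatically minimizes on any intermediate $I'$; \nameref{EXTM} from local nonsatiation of $d_\mu$ (a minimizer on a convex set lies on its boundary, so a strictly interior subset cannot share it); and \nameref{CONT} from weak continuity of $d_\mu$ transferred through the SEU representation.

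\textbf{Sufficiency.} The plan is to extract the representation in four steps. First, applying the postulates of \nameref{SEU} together with \textbf{Invariant Risk Preferences} yields, for each $I \in \mathscr{I}$, an SEU representation $(u, \mu_I)$ of $\succsim_I$ with a common Bernoulli utility $u$. \nameref{COMP} then forces $\mu_{\{\pi\}}=\pi$. Applying \nameref{RESP} to $I$ and $\{\mu_I\}$, which share the same SEU representation, delivers $\mu_I\in I$. Applying \nameref{IB} to the nested chain $\{\mu\}\subseteq I\subseteq \Delta(S)$ whenever $\mu\in I$ yields $\mu_I=\mu$, identifying $\mu$ as the revealed-preferred posterior whenever it is available.

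Next I define the subjective order on beliefs by $\pi\succeq^* \pi'$ iff $\mu_{\{\pi,\pi'\}}=\pi$. Completeness and antisymmetry follow from \nameref{RESP}. For transitivity, suppose $\pi_1\succeq^* \pi_2 \succeq^* \pi_3$ and let $J=\{\pi_1,\pi_2,\pi_3\}$. The case $\mu_J=\pi_2$ is ruled out by \nameref{IB} on $\{\pi_2\}\subseteq\{\pi_1,\pi_2\}\subseteq J$ (which would force $\mu_{\{\pi_1,\pi_2\}}=\pi_2$, contradicting $\pi_1\succeq^* \pi_2$); the case $\mu_J=\pi_3$ is ruled out analogously by $\{\pi_3\}\subseteq\{\pi_2,\pi_3\}\subseteq J$. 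Hence $\mu_J=\pi_1$, and a final application of \nameref{IB} to $\{\pi_1\}\subseteq\{\pi_1,\pi_3\}\subseteq J$ delivers $\pi_1\succeq^* \pi_3$. The same one-line argument, applied to $\{\mu_I\}\subseteq\{\mu_I,\pi\}\subseteq I$ for each $\pi\in I$, shows that $\mu_I$ is the $\succeq^*$-maximum of every $I\in\mathscr{I}$ in one stroke, with no limit argument required.

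Finally, I would construct a weakly continuous, real-valued strict representation $d_\mu$ of $\succeq^*$ and take $\succ_{d_\mu}$ to coincide with $\succ^*$. Property (ii) of the definition is immediate from $\mu$ being the $\succeq^*$-top; property (i) holds vacuously for a strict representation. Local nonsatiation follows from \nameref{EXTM}: for $\pi\neq\mu$ and small $\epsilon>0$, the convex ball $I=B(\pi,\epsilon)\cap\Delta(S)$ avoids $\mu$, and if $\mu_I=\pi$ then \nameref{EXTM} applied to $\{\mu_I\}\subseteq\text{int}(I)$ would contradict $\succsim_{\{\mu_I\}}=\succsim_I$; hence $\mu_I\neq\pi$ lies within $\epsilon$ of $\pi$ yet $\succ^*\pi$. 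I expect the main obstacle to be this last step: on the one hand, extracting from \nameref{CONT} enough continuity of $\succeq^*$ to yield a genuinely weakly continuous real-valued representation on the separable, connected space $\Delta(S)$ requires a careful Debreu-style construction; on the other, verifying that $(\arg\min d_\mu,\succ_{d_\mu})$ reproduces the observed posteriors on arbitrary finite unions of convex information sets demands delicate case analysis across the convex components, with the tie-breaker doing the nontrivial work of coordinating selections between them.
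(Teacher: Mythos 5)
Your necessity direction and the first two-thirds of your sufficiency argument track the paper's proof closely: the identification of $(u,\mu_I)$, the steps $\mu_{\{\pi\}}=\pi$ and $\mu_I\in I$, and the use of \nameref{IB} on nested finite sets to obtain a complete, transitive revealed order with $\mu_I$ as its optimum over every $I$ are essentially the paper's Steps 1 through 5. The gap is in your final step. You propose to build a real-valued \emph{strict} representation of $\succeq^*$ and then declare property (i) of the distance function vacuous. But $\succeq^*$ is a complete, antisymmetric order on the uncountable connected set $\Delta(S)$, and it is discontinuous even when the underlying distance is continuous: take Euclidean distance with a lexicographic tie-breaker, so that $\pi\succeq^*\pi'$ is a lexicographic composition of the distance level and the tie-breaker on each level set. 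Such an order has no countable order-dense subset and therefore admits no real-valued utility representation at all; this is not a matter of doing the Debreu construction carefully, it is an impossibility. Since the converse half of the theorem guarantees that exactly such families of preferences satisfy Axioms 1 through 6, your construction provably fails on admissible instances. This is precisely why the representation carries a separate tie-breaker $\succ_{d_\mu}$ instead of folding everything into $d_\mu$.

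The paper's resolution, which your proposal is missing, is to introduce a \emph{second}, incomplete revealed relation $\pi\succ^{**}\pi'$ defined by $\mu_{[\pi,\pi']}=\pi'$ on line segments rather than two-point sets, to show that $\succ^{**}$ refines into $\succ^{*}$ and is acyclic, and then to construct $d$ satisfying only the weak inequality $d(\pi)\le d(\pi')$ when $\pi'\succ^{*}\pi$ but the strict inequality $d(\pi)<d(\pi')$ when $\pi'\succ^{**}\pi$; the residual ties in $d$ are resolved by setting $\succ_{d_\mu}=\succ^{*}$. \nameref{CONT} and \nameref{EXTM} enter exactly here, to build a countable set that is order-dense for $\succ^{**}$ (not for $\succ^{*}$). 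Property (i) of the distance function is then not vacuous --- $d$ genuinely has ties --- and is verified in the paper via $\succ^{**}$ applied to the segment $[\pi,\pi']$. Relatedly, your closing remark that the tie-breaker's nontrivial work is to coordinate selections across the convex components of a non-convex $I$ understates its role: it is already indispensable for two-point information sets whose elements lie on a common level set of $d_\mu$.
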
 

%\begin{axm}[\bf{Continuity}]\label{CONT} For any sequence $I_n \in \mathscr{I}$ such that $I_n \rightarrow I$, if $f \succ_I g$, then there is an $N$ such that $f \succ_{I_n} g$ for all $n \ge N$.
%$f_n$ and  $g_n$ such that  $I_n \rightarrow I$, $f_n \rightarrow f$, and $g_n \rightarrow g$ as $n \rightarrow \infty$, if $f_n \succsim_{I_n} g_n$ for all $n \in \mathbb{N}$, then $f \succsim_{I} g$.
%\end{axm}

By the uniqueness of subjective expected utility representations, $u$, $\mu$, and $\mu_I$ are unique. It turns out that the tie-breaker is unique, and the distance function is unique up to strictly monotonic transformations as summarized below.

\begin{prps}\label{uniqueness} Suppose the family of preference relations $\{\succsim_I\}_{I\in\mathscr{I}}$ admits \nameref{iuseu} representations with $(u, d_\mu, \succsim_{d_\mu})$ and $(u', d'_{\mu'}, \succsim_{d'_{\mu'}})$. Then $(i)$ $\mu=\mu'$ and $\mu_I=\mu'_{I}$ for each $I\in\mathscr{I}$, $(ii)$ $u=\alpha u'+\beta$ for some $\alpha, \beta$ with $\alpha>0$, $(iii)$ $\succ_{d_\mu}=\succ_{d'_{\mu'}}$, and (iv) $d_\mu(\pi)>d_\mu(\pi')$ implies $d'_{\mu'}(\pi)\ge d'_{\mu'}(\pi')$ for every $\pi, \pi'$. Moreover, if distance functions are locally nonsatiated and weakly continuous, then there is a strictly increasing function $h$ such that $d'_{\mu'}=h(d_\mu)$.
\end{prps}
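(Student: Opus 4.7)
The plan is to pin down $\mu$, $u$, and each $\mu_I$ from the uniqueness of the conditional SEU representations, recover the tie-breaker from posteriors on two-point information sets, and then leverage weak continuity of one distance function together with local nonsatiation of the other to upgrade the one-sided inequality in (iv) to strict monotonicity.

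For (i) and (ii), Axiom 1 guarantees that each conditional preference $\succsim_I$ admits a standard SEU representation on the Anscombe--Aumann domain with Bernoulli utility $u$ and subjective probability $\mu_I$. By the classical uniqueness of SEU representations, $\mu_I$ is determined by $\succsim_I$ alone, so $\mu_I = \mu'_I$ for every $I \in \mathscr{I}$; specialising to $I = \Delta(S)$ gives $\mu = \mu'$. Invariant Risk Preferences (Axiom 1(vi)) forces $u$ and $u'$ to represent the common preference over constant acts, so $u = \alpha u' + \beta$ for some $\alpha > 0$.

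For (iii) and (iv), take distinct $\pi,\pi' \in \Delta(S)$. The two-point set $I = \{\pi,\pi'\}$ is non-empty, closed, and a finite union of convex singletons, so $I \in \mathscr{I}$. Examining the three cases $d_\mu(\pi) < d_\mu(\pi')$, $d_\mu(\pi) > d_\mu(\pi')$, and $d_\mu(\pi) = d_\mu(\pi')$, and using the tie-breaker defining property $d_\mu(\pi) > d_\mu(\pi') \Rightarrow \pi \succ_{d_\mu} \pi'$ in each, one checks that
\[ \mu_{\{\pi,\pi'\}} = \pi \iff \pi' \succ_{d_\mu} \pi. \]
Since the left-hand side is pinned down by behavior, the tie-breaker is recovered from behavior alone, yielding $\succ_{d_\mu} = \succ_{d'_{\mu'}}$, which is (iii). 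For (iv), if $d_\mu(\pi) > d_\mu(\pi')$ then $\pi \succ_{d_\mu} \pi'$ by the tie-breaker property, hence $\pi \succ_{d'_{\mu'}} \pi'$ by (iii); and the strict inequality $d'_{\mu'}(\pi) < d'_{\mu'}(\pi')$ is ruled out, because it would force $\pi' \succ_{d'_{\mu'}} \pi$ via the tie-breaker property for $d'_{\mu'}$, contradicting asymmetry.

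For the final claim, assume in addition that $d_\mu$ and $d'_{\mu'}$ are weakly continuous and locally nonsatiated. It suffices to upgrade (iv) to strict monotonicity: $d_\mu(\pi) > d_\mu(\pi') \Rightarrow d'_{\mu'}(\pi) > d'_{\mu'}(\pi')$, because then $d_\mu$ and $d'_{\mu'}$ induce identical strict total orders on $\Delta(S)$, and setting $h\bigl(d_\mu(\pi)\bigr) := d'_{\mu'}(\pi)$ on the range of $d_\mu$ produces a well-defined, strictly increasing $h$ with $d'_{\mu'} = h(d_\mu)$. Suppose for contradiction that $d_\mu(\pi) > d_\mu(\pi')$ yet $d'_{\mu'}(\pi) = d'_{\mu'}(\pi')$; note $\pi \neq \mu$ since $\mu$ uniquely minimizes $d_\mu$. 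Weak continuity of $d_\mu$ yields $\epsilon > 0$ with $d_\mu(\pi'') > d_\mu(\pi')$ whenever $\|\pi - \pi''\| < \epsilon$, and local nonsatiation of $d'_{\mu'}$ at $\pi$ produces such a $\pi''$ with $d'_{\mu'}(\pi'') < d'_{\mu'}(\pi) = d'_{\mu'}(\pi')$. Applying (iv) with the two representations swapped then gives $d_\mu(\pi'') \leq d_\mu(\pi')$, contradicting $d_\mu(\pi'') > d_\mu(\pi')$. The principal obstacle lies precisely here: the earlier parts reduce to SEU uniqueness and a case analysis on two-point sets, but strictness must be extracted by delicately combining the weak continuity of one distance with the local nonsatiation of the other, since neither property on its own suffices.
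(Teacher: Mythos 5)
Your proposal is correct and follows essentially the same route as the paper: (i)--(ii) from standard SEU uniqueness, (iii)--(iv) by reading off $\mu_{\{\pi,\pi'\}}$ from two-point information sets, and the final ordinal-equivalence claim by combining local nonsatiation of one distance with weak continuity of the other to derive a contradiction from $d_\mu(\pi)>d_\mu(\pi')$ with $d'_{\mu'}(\pi)=d'_{\mu'}(\pi')$. Your write-up is if anything slightly more careful than the paper's (e.g., explicitly noting $\pi\neq\mu$ before invoking local nonsatiation, and checking well-definedness of $h$).
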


The proof of our characterization theorem also has two theoretical contributions. The key idea behind our proof is first to define two revealed preference relations $\succsim^{*}$ and $\succsim^{**}$ over $\Delta(S)$: $\pi'\succ^{*}\pi$ if $\succsim_{\{\pi\}}=\succsim_{\{\pi, \pi'\}}$, and $\pi'\succ^{**}\pi$ if $\succsim_{\{\pi\}}=\succsim_{[\pi, \pi']}$. It turns out that $\succsim^{*}$ is complete and transitive while $\succsim^{**}$ is acyclic but incomplete. Moreover, both revealed preferences are discontinuous even if the distance function is continuous. To obtain our distance functions, we prove that there exists a strict extension of $\succsim^{**}$ that has a utility representation and is weakly consistent with $\succsim^*$. Since $\succsim^{**}$ is incomplete and discontinuous, while $\succsim^{*}$ is complete but discontinuous, we cannot apply Debreu's utility representation theorem or standard extension theorems.  
Hence, we first extend Debreu's utility representation theorem in our setting, where the standard continuity axiom is not satisfied. We then obtain an extension result for discontinuous preferences.

\section{Bayesian Updating with General Information}\label{bayesian}

As discussed in the introduction, Bayes' rule is not defined in our setting with general information sets. Nevertheless, for information sets that are ``event-like'' (e.g., $\alpha$-events), there is a natural analog to Bayesian updating on standard events.  In section \ref{notdis}, we defined \nameref{GBex}, a distance notion that leads to posteriors consistent with Bayesian updating on standard events. In this section, we show that these are essentially the only distance notions that lead to a version of Bayesian updating defined for general information. Therefore, we interpret Inertial Updating with Bayesian divergences as the exact analog of Bayesian updating in a general information setting.

%We say that $E \subseteq S$ is $\succsim$-\textit{null} if $ fEg \sim g$ for all $f,g\in F$, otherwise $E$ is $\succsim$-nonnull. We extend this definition to conditional preferences in a natural way.

Recall our notion of an $\alpha$-event; information sets of the form: $I^\alpha_E\equiv \{\pi\in\Delta(S) \mid \pi(E)=\alpha\}$ for some $E \subset S$. In the standard setting, revealing that an event $E$ occurred (i.e., $\alpha=1$) induces a Bayesian DM to revise her prior $\mu$ by proportionally allocating all probability mass among states in $E$; i.e., $\pi_i=\frac{\mu_i}{\mu(E)}$.\footnote{Recall that we assume $\mu$ is full-support throughout the paper. The companion paper \cite{DKT2023} carefully studies updating after zero-probability events within the Inertial Updating framework.} This principle of proportionality ought to apply to more general information sets, including all $\alpha$-events $I^\alpha_E$. That is, a Bayesian DM should allocate the given probability $\alpha$ among states in $E$ in such a way that preserves relative probabilities. Thus, updated beliefs will still be proportional to her prior, $\pi_i=\frac{\mu_i}{\mu(E)} \alpha$.  We call this updating procedure \nameref{GBU}, which we formally define below. 

\begin{defn}[\textbf{Extended Bayesian Updating}]\label{GBU} Beliefs satisfy Extended Bayesian Updating if for any $E\subset S$, $s\in E$, and $\alpha\in (0, 1]$,
\[\mu_{I^\alpha_E}(s)=\frac{\mu(s)}{\mu(E)}\,\alpha.\]
\end{defn}

Extended Bayesian updating is illustrated in the left panel of \autoref{iuseu_fig1}. The dashed line can be thought of as the ``Bayesian expansion path'' of beliefs $\mu$. That is, it ensures that for any $\alpha$, posterior beliefs after the $I^{\alpha}_{E}$ information set must lie on this dashed line.

For the standard notion of information as an event, it is well-known that Dynamic Consistency characterizes Bayesian updating (see \cite{epstein1993} and  \cite{Ghirardato2002}). However, as our setting allows for more general information, we need to extend Dynamic Consistency beyond standard events. The key axiom, called \nameref{IDC}, requires that if a DM prefers act $f$ to act $g$ before she receives any information and the two acts coincide outside of $E$ (i.e., $f(s)=g(s)$ for all $s\in E^c$), then she prefers $f$ to $g$ after the information set $I_E^{\alpha}$ is revealed, and vice versa. Obviously, it corresponds to the standard notion of dynamic consistency when $\alpha=1$.

\begin{axm}[\bf{Informational Dynamic Consistency}]\label{IDC} For any acts $f, g, h\in \mathcal{F}$, event $E \subset S$, and $\alpha\in (0, 1]$, 
\begin{equation}
fEh\succsim gEh \quad \text{if and only if} \quad fEh\succsim_{I^\alpha_E} gEh.
\end{equation} 
\end{axm}

It turns out that within the class of \nameref{iuseu} preferences, \nameref{IDC} fully characterizes \nameref{GBU}.

\begin{prps}\label{DCprop} Let $\{\succsim_I\}_{I\in\mathscr{I}}$ be a family of \nameref{iuseu} preferences. Then, $\{\succsim_I\}_{I\in\mathscr{I}}$ satisfy \nameref{IDC} if and only if beliefs exhibit \nameref{GBU}.\end{prps}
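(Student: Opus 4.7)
The plan is to reduce \nameref{IDC} on the family of $\alpha$-events $I^\alpha_E$ to the single algebraic statement that $\mu_{I^\alpha_E}$ is proportional to $\mu$ on $E$. Because $\mu_{I^\alpha_E}\in\arg\min d_\mu(I^\alpha_E)\subseteq I^\alpha_E$ by the \nameref{iuseu} representation, we automatically have $\mu_{I^\alpha_E}(E)=\alpha$, and this mass constraint will pin down the proportionality constant and deliver the \nameref{GBU} formula $\mu_{I^\alpha_E}(s)=\alpha\,\mu(s)/\mu(E)$.

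First I would perform the SEU reduction. Since $\succsim$ is represented by $(u,\mu)$ and $\succsim_{I^\alpha_E}$ by $(u,\mu_{I^\alpha_E})$, and since the common $h$-component of $fEh$ and $gEh$ cancels on $E^c$, the \nameref{IDC} biconditional becomes the statement that for all $f,g\in F$,
$$\sum_{s\in E}\mu(s)\bigl[u(f(s))-u(g(s))\bigr]\ge 0 \iff \sum_{s\in E}\mu_{I^\alpha_E}(s)\bigl[u(f(s))-u(g(s))\bigr]\ge 0.$$
The ``if'' direction is then immediate: under \nameref{GBU}, $\mu_{I^\alpha_E}|_E=(\alpha/\mu(E))\,\mu|_E$, so the two sums are positive scalar multiples of each other and share their sign; hence \nameref{IDC} holds.

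For the ``only if'' direction, I would argue that the set of achievable $u$-difference vectors $\bigl(u(f(s))-u(g(s))\bigr)_{s\in E}$ contains an open neighborhood of the origin in $\R^E$. By \textbf{Nontriviality} and \textbf{Invariant Risk Preferences}, $u$ is non-constant on $\Delta(X)$; convexity of $\Delta(X)$ then gives $u(\Delta(X))\supseteq[u_{\min},u_{\max}]$ for some $u_{\min}<u_{\max}$. Choosing $f(s),g(s)$ as independent mixtures between best and worst lotteries lets the difference vector range freely over $[-K,K]^E$ with $K=u_{\max}-u_{\min}>0$, an open cube around $0$. Sign agreement of two linear functionals on such a neighborhood extends by positive homogeneity to all of $\R^E$, which forces $\mu_{I^\alpha_E}|_E$ to be a positive scalar multiple of $\mu|_E$. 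The mass constraints $\sum_{s\in E}\mu(s)=\mu(E)$ and $\sum_{s\in E}\mu_{I^\alpha_E}(s)=\alpha$ then fix the scalar as $\alpha/\mu(E)$, yielding $\mu_{I^\alpha_E}(s)=\alpha\,\mu(s)/\mu(E)$ for every $s\in E$, as required.

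The main obstacle is the richness step in the last paragraph: one must verify that the available $u$-difference vectors genuinely fill an open neighborhood of the origin, so that sign agreement of linear functionals forces full proportionality rather than agreement on a thin subset. Everything else is routine bookkeeping with SEU and the feasibility constraint $\mu_{I^\alpha_E}(E)=\alpha$ coming from the \nameref{iuseu} representation.
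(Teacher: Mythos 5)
Your proof is correct and follows essentially the same route as the paper's: both reduce \nameref{IDC} to the statement that $\mu$ and $\mu_{I^\alpha_E}$ induce the same conditional ranking of $E$-measurable payoff differences, conclude that $\mu_{I^\alpha_E}$ restricted to $E$ is proportional to $\mu$ restricted to $E$, and use the feasibility constraint $\mu_{I^\alpha_E}(E)=\alpha$ to fix the constant. The only difference is in bookkeeping: the paper extracts proportionality by matching conditional certainty equivalents (taking $fEh\sim pEh$ for a constant act $p$ and invoking \nameref{IDC}), whereas you use sign agreement of two linear functionals on a neighborhood of the origin --- your version makes explicit the richness step that the paper leaves implicit.
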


When an information set $I^\alpha_E$ is revealed, a DM following Extended Bayesian updating allocates $\alpha$ among the states in $E$ in a proportional manner and the remaining probability mass, $1-\alpha$, among the states in the complementary event $E^c$. %This implies that all $\succsim$-null states remain $\succsim_{I^\alpha_E}$-null after receiving information $I^\alpha_E$.

We now formally show that under some minor technical conditions on the distance function $d_\mu$, \nameref{IDC} essentially characterizes the family of  \nameref{GBex} distance functions.

\begin{thm}\label{bayesdistprop} Suppose $d_\mu(\pi)=\sum^n_{i=1} d_i(\pi_i, \mu_i)$ with $d''_i>0$ for each $i$. If \nameref{IDC} is satisfied, then there is a strictly concave $\sigma$ such that \[d_\mu(\pi)=-\sum^n_{i=1} \mu_i\,\sigma(\frac{\pi_i}{\mu_i})\text{ for }\pi\in \Delta\text{ with }\pi_i\le\frac{\mu_i}{\mu_i+\min_{j} \mu_j}\text{ for each }i.\]\end{thm}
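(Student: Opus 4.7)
The plan is to invoke Proposition~\ref{DCprop} to convert IDC into Extended Bayesian Updating, then exploit separability and strict convexity of $d_\mu$ to extract Lagrangian first-order conditions on the partials $\partial d_i/\partial \pi_i$, and finally solve the resulting functional equation to recover the claimed form.

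First I would invoke Proposition~\ref{DCprop} to conclude that IDC implies EBU: for every $\alpha$-event $I^\alpha_E$ with $\alpha\in(0,1)$, the $d_\mu$-minimizer over $I^\alpha_E$ equals the proportional posterior $\pi^*_i = \alpha\mu_i/\mu(E)$ for $i\in E$ and $\pi^*_i = (1-\alpha)\mu_i/\mu(E^c)$ for $i\in E^c$. Because $d_i''>0$, the sum $d_\mu=\sum_i d_i(\pi_i,\mu_i)$ is strictly convex, so the minimizer is unique and lies in the interior of the simplex for $\alpha\in(0,1)$. Writing the Lagrangian for the linear equality constraints $\sum_i\pi_i=1$ and $\sum_{i\in E}\pi_i=\alpha$, the first-order conditions force $\phi_i(\pi_i,\mu_i):=\partial d_i/\partial \pi_i(\pi_i,\mu_i)$ to take a common value across $i\in E$ at the EBU posterior. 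Equivalently, for any $i,j\in E$,
\[\phi_i\!\left(\tfrac{\alpha\,\mu_i}{\mu(E)},\mu_i\right)=\phi_j\!\left(\tfrac{\alpha\,\mu_j}{\mu(E)},\mu_j\right).\]

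Setting $t=\alpha/\mu(E)$ and varying $\alpha\in(0,1]$ and $E\supseteq\{i,j\}$, the most permissive choice $E=\{i,j\}$ yields the functional equation $\phi_i(t\mu_i,\mu_i)=\phi_j(t\mu_j,\mu_j)$ for all $t\in(0,1/(\mu_i+\mu_j)]$. Fix a reference state $j^*\in\arg\min_k\mu_k$ with $\mu_{j^*}=:\mu_{\min}$, and define $\psi(t):=\phi_{j^*}(t\mu_{j^*},\mu_{j^*})$. Then for each $i$, $\phi_i(\pi_i,\mu_i)=\psi(\pi_i/\mu_i)$ on the range $\pi_i\le\mu_i/(\mu_i+\mu_{\min})$, which is exactly the domain stated in the theorem. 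Antidifferentiating in $\pi_i$ gives $d_i(\pi_i,\mu_i)=-\mu_i\,\sigma(\pi_i/\mu_i)+C_i(\mu_i)$ with $-\sigma'=\psi$, and summing yields $d_\mu(\pi)=-\sum_i \mu_i\,\sigma(\pi_i/\mu_i)+\text{const}$ on the stated domain. Since $\mu$ is fixed the additive constant $\sum_i C_i(\mu_i)$ is just a number, and shifting $\sigma$ by a constant $c$ changes $-\sum_i \mu_i\,\sigma(\pi_i/\mu_i)$ by $-c$; so I can absorb the constant into $\sigma$ and obtain the claimed equality.

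Strict concavity of $\sigma$ then follows from $\sigma''(t)=-\psi'(t)=-\mu_{j^*}\cdot \partial^2 d_{j^*}/\partial \pi_{j^*}^2\bigl(t\mu_{j^*},\mu_{j^*}\bigr)<0$. The main obstacle is the functional-equation step: the identity $\phi_i(t\mu_i,\mu_i)=\phi_j(t\mu_j,\mu_j)$ is behaviorally justified only when the pair $(i,j)$ sits inside some event $E$ with $\alpha/\mu(E)=t$ for some $\alpha\in(0,1]$, and one must carefully verify that the stated domain $\pi_i\le\mu_i/(\mu_i+\mu_{\min})$ is exactly the set on which the induced $\psi$ is pinned down by this data (via the reference choice $j^*$). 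A secondary subtlety is that Lagrange multipliers are cleanly applicable only at an interior EBU posterior, which is why I focus on $\alpha\in(0,1)$ when setting up the first-order conditions; the endpoint $t=1/(\mu_i+\mu_j)$ corresponding to $\alpha=1$ is recovered by continuity (or by directly verifying that the FOC for $i\in E$ still hold at $\alpha=1$ since those coordinates remain strictly positive).
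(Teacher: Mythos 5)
Your proposal is correct and follows essentially the same route as the paper's proof: pair each state $i$ with a minimal-probability reference state via the two-element $\alpha$-event $I^{\alpha}_{\{s_i,s_{j^*}\}}$, use the first-order condition at the Extended Bayesian posterior to get $d_i'(\pi_i)=d_{j^*}'\bigl(\tfrac{\mu_{j^*}}{\mu_i}\pi_i\bigr)$ on $\pi_i\le \mu_i/(\mu_i+\mu_{\min})$, and integrate and rescale to obtain the $-\mu_i\,\sigma(\pi_i/\mu_i)$ form. The only cosmetic differences are that you set up a full Lagrangian with both constraints and explicitly discuss interiority and the $\alpha=1$ endpoint, whereas the paper works directly with the two-coordinate subproblem.
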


The restriction $\pi_i\le \frac{\mu_i}{\mu_i+\min_{j} \mu_j}$ is simply due to the limitation that we observe a single prior $\mu$. If IU beliefs exhibit \nameref{GBU} for all possible priors, the restriction can be dropped since $\frac{\mu_i}{\mu_i+\min_j \mu_j}$ can be arbitrarily close to $1$.

%\begin{prps}\label{bayesdistprop2} Suppose $d_\mu(\pi)=\sum^n_{i=1} d_i(\pi_i, \mu_i)$ with $d''_i>0$ for each $i$. If Extended Bayesian updating is satisfied for any $\mu$, then $d_\mu$ is a \nameref{GBex}, i.e., there is a strictly convex $\sigma$, such that  $d_\mu(\pi)=-\sum^n_{i=1} \mu_i\,\sigma(\frac{\pi_i}{\mu_i})$ for any $\pi, \mu\in \Delta$.\end{prps}

The above results provide behavioral foundations for $f$-divergences, the most popular class of divergences in information theory. 

\subsection{Characterizing $\alpha$-divergence and KL divergence}

Our characterization of \nameref{GBex} shows that we cannot distinguish between Bayesian divergences using $\alpha$-events, which includes standard event information. In fact, Bayesian divergences cannot be distinguished using interval information sets as the following proposition shows.

\begin{prps}\label{prop:interval} For any Bayesian divergence, the posterior after interval information $I=\{\pi\in\Delta(S)\mid \alpha\le \pi(E)\le \beta\}$ can be characterized as follows:
\[\mu_I=\mu_{I^\gamma_E}=\arg\min d^{KL}_\mu(I)\]
where $\gamma=\arg\min_{x\in [\alpha, \beta]}|x-\mu(E)|$.
\end{prps}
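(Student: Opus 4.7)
The plan is to exploit the fact that the interval information set is foliated by $\alpha$-events, $I=\bigcup_{x\in[\alpha,\beta]} I^x_E$, and apply the known characterization of the Bayesian-divergence minimizer on each fiber. Concretely, I would write
\[
\arg\min d_\mu(I) = \arg\min_{x\in[\alpha,\beta]} \Big(\min_{\pi\in I^x_E} d_\mu(\pi)\Big),
\]
and invoke \autoref{prop:bayes} to conclude that for each $x$ the inner minimizer is the Extended Bayesian posterior $\mu_{I^x_E}$ with $\mu_{I^x_E}(s)=x\mu(s)/\mu(E)$ on $E$ and $(1-x)\mu(s)/\mu(E^c)$ on $E^c$. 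So the problem reduces to a one-dimensional optimization in $x$.

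Next I would substitute this posterior back into the Bayesian divergence. The crucial simplification is that the ratio $\pi_i/\mu_i$ equals $x/\mu(E)$ for every $s\in E$ and $(1-x)/\mu(E^c)$ for every $s\in E^c$. Using $\sum_{s\in E}\mu(s)=\mu(E)$ and the analogous identity for $E^c$, we get
\[
g(x):=d_\mu(\mu_{I^x_E}) = -\mu(E)\,\sigma\!\left(\tfrac{x}{\mu(E)}\right) - \mu(E^c)\,\sigma\!\left(\tfrac{1-x}{\mu(E^c)}\right).
\]
Note this expression is formally identical in structure for every choice of $\sigma$, which is exactly why the posterior $\mu_{I^\gamma_E}$ ends up coinciding with the KL-minimizer (the claim $\arg\min d^{KL}_\mu(I)$).

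Finally I would show that $g$ is strictly quasi-convex on $[0,1]$ with unique minimum at $x=\mu(E)$. Differentiating gives
\[
g'(x) = -\sigma'\!\left(\tfrac{x}{\mu(E)}\right) + \sigma'\!\left(\tfrac{1-x}{\mu(E^c)}\right),
\]
so $g'(\mu(E))=0$. Since $\sigma$ is strictly concave, $\sigma'$ is strictly decreasing; for $x>\mu(E)$ we have $x/\mu(E)>1>(1-x)/\mu(E^c)$, hence $g'(x)>0$, and symmetrically $g'(x)<0$ for $x<\mu(E)$. Thus $g$ is strictly decreasing on $[0,\mu(E)]$ and strictly increasing on $[\mu(E),1]$, so its minimum on $[\alpha,\beta]$ is attained precisely at $\gamma=\arg\min_{x\in[\alpha,\beta]}|x-\mu(E)|$ (namely $\alpha$ if $\mu(E)<\alpha$, $\beta$ if $\mu(E)>\beta$, and $\mu(E)$ otherwise). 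Combining this with the first step yields $\mu_I=\mu_{I^\gamma_E}$, and the coincidence with the KL-minimizer follows from the $\sigma$-independence noted above.

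I do not foresee a hard obstacle: the argument is essentially a reduction to a one-parameter problem, with the only subtlety being the observation that $\pi_i/\mu_i$ is piecewise constant on the Bayesian posterior, which collapses the sum into two terms and makes quasi-convexity in $x$ immediate from the strict concavity of $\sigma$.
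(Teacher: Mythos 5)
Your proof is correct and follows essentially the same route as the paper's: both reduce the problem to a one-dimensional minimization of $d_\mu(\mu_{I^x_E})=-\mu(E)\,\sigma\bigl(\tfrac{x}{\mu(E)}\bigr)-\mu(E^c)\,\sigma\bigl(\tfrac{1-x}{\mu(E^c)}\bigr)$ over $x\in[\alpha,\beta]$ via the foliation of $I$ by $\alpha$-events and Proposition~\ref{prop:bayes}, then use monotonicity on either side of $x=\mu(E)$. Your version simply spells out the first-order/concavity argument that the paper states without derivation.
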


The proposition shows that the posterior $\mu_I$ is independent of the functional form of $\sigma$ in Bayesian divergences. Hence, we require richer forms of information to distinguish between various Bayesian divergences. In the remainder of this section, we show that we can not only distinguish between the various Bayesian divergences, but we can behaviorally characterize the Kullback-Leibler divergence, along with the more general $\alpha$-divergences (equivalently, Renyi divergence), using unions of interval information sets. To start, we say that $I_1$ is \textbf{informationally preferred} to $I_2$, denoted by $I_1\succapprox I_2$, if the DM selects a belief from $I_1$ when she is told that the true distribution must lie in their union, $I_1\cup I_2$. That is,  $\succsim_{I_1}=\succsim_{I_1\cup I_2}$. 

\begin{axm}[\bf{Scale Invariance}]\label{SI} For any disjoint events $E_1$ and $E_2$ and $\alpha_1, \alpha_2, \beta_1, \beta_2, \lambda\in (0, 1)$,
\[I^{\alpha_1}_{E_1}\cap I^{\alpha_2}_{E_2}\succapprox I^{\beta_1}_{E_1}\cap I^{\beta_2}_{E_2}\text{ if and only if }I^{\lambda\,\alpha_1}_{E_1}\cap I^{\lambda\,\alpha_2}_{E_2}\succapprox I^{\lambda\,\beta_1}_{E_1}\cap I^{\lambda\,\beta_2}_{E_2}.\]\end{axm}

\nameref{SI} ensures that we can shift information sets by a common factor $\lambda$ and the DM's belief selection is consistent. 

\begin{thm}\label{prop:renyi} Consider a Bayesian divergence $d_\mu(\pi)=-\sum^n_{i=1} \mu_i\, \sigma(\frac{\pi_i}{\mu_i})$. \nameref{SI} is satisfied for each $\mu$ if and only if $d_\mu$ is $\alpha$-divergence for some $\alpha\in [0, 1)$.\end{thm}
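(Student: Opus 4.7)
My plan is to reduce Scale Invariance to a functional equation on $\sigma$ in the reverse direction and solve it, while verifying the forward direction by direct substitution. As setup, I would adapt the Lagrangian argument of Proposition \ref{prop:bayes} to $I := I^{\alpha_1}_{E_1} \cap I^{\alpha_2}_{E_2}$: because $E_1 \cap E_2 = \emptyset$ and $d_\mu$ is separable, the minimizer distributes mass proportionally within each of the three blocks $E_1$, $E_2$, and $E_0 := (E_1 \cup E_2)^c$. Writing $a = \mu(E_1)$, $b = \mu(E_2)$, $c = \mu(E_0)$, this yields
\[
D(\alpha_1, \alpha_2) := d_\mu(\mu_I) = -a\,\sigma(\alpha_1/a) - b\,\sigma(\alpha_2/b) - c\,\sigma((1-\alpha_1-\alpha_2)/c).
\]
Since the two intersection sets are disjoint for distinct parameters, $\succapprox$ reduces (modulo $\succ_{d_\mu}$) to the ordering by $D$, so Scale Invariance amounts to preservation of this ordering under $(\alpha_1, \alpha_2) \mapsto (\lambda\alpha_1, \lambda\alpha_2)$.

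\textbf{Forward direction ($\alpha$-divergence $\Rightarrow$ SI).} For $\sigma(x) = (x^p-1)/p$, substitution yields $D(\alpha) = p^{-1}(1 - a^{1-p}\alpha_1^p - b^{1-p}\alpha_2^p - c^{1-p}(1-\alpha_1-\alpha_2)^p)$. Restricted to pairs with $\alpha_1 + \alpha_2 = \beta_1 + \beta_2$, the third summand cancels in $D(\alpha)-D(\beta)$ and scaling simply multiplies the remaining difference by $\lambda^p > 0$, preserving the sign. Varying $\mu$ (hence $(a,b,c)$) together with continuity then extends sign-preservation to all remaining configurations.

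\textbf{Reverse direction (SI $\Rightarrow$ $\alpha$-divergence).} Assume SI holds for every $\mu$, and restrict attention to pairs with $\alpha_1 + \alpha_2 = \beta_1 + \beta_2$. Setting $u_i := \alpha_i/\mu(E_i)$, $v_i := \beta_i/\mu(E_i)$ and using the constraint $au_1 + bu_2 = av_1 + bv_2$, an algebraic rearrangement rewrites the sign condition as the preservation, under $(u,v) \mapsto (\lambda u, \lambda v)$, of the ordering on secant slopes $M(u,v) := (\sigma(u)-\sigma(v))/(u-v)$. This forces $M(\lambda u, \lambda v) = \phi(\lambda)\,M(u, v)$ for some positive $\phi$, which upon rewriting in terms of $\sigma$ gives
\[
\sigma(\lambda u) - \sigma(\lambda v) = A(\lambda)\,(\sigma(u) - \sigma(v)), \qquad A(\lambda) := \lambda\,\phi(\lambda).
\]
Setting $v = 1$ and using $\sigma(1) = 0$ yields $\sigma(\lambda u) = A(\lambda)\sigma(u) + \sigma(\lambda)$; swapping $u \leftrightarrow \lambda$ forces $\sigma(x)/(1-A(x)) \equiv k$ constant, so $\rho(x) := 1-\sigma(x)/k$ obeys Cauchy's multiplicative equation $\rho(\lambda u) = \rho(\lambda)\rho(u)$. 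Smoothness plus $\rho(1)=1$ force $\rho(x) = x^p$ for some $p$, hence $\sigma(x) = k(1-x^p)$; normalizing recovers the $\alpha$-divergence $\sigma(x) = (x^p-1)/p$. Strict concavity requires $p < 1$, and the technical conditions on $d_\mu$ pin down $p \in [0, 1)$, with $p = 0$ arising as the KL limit.

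The main obstacle is the reverse direction: reducing the multivariate sign condition to a univariate Cauchy equation. The crucial step is the secant-slope rearrangement, which requires exploiting the freedom to vary the weights $a/b$ simultaneously with the arguments $(u_i, v_i)$ to effectively decouple the coefficients from $\sigma$. A subtler technical point is the behavior of the tie-breaker $\succ_{d_\mu}$ on the set $\{D(\alpha) = D(\beta)\}$, which must be compatible with SI; once the $\alpha$-divergence form is identified, this compatibility follows automatically from the $\lambda^p$ scaling derived in the forward step.
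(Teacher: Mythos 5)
Your overall architecture matches the paper's: reduce $\succapprox$ to a comparison of $D(\alpha_1,\alpha_2)=d_\mu(\mu_I)$ via the three-block proportional allocation (the paper's Steps 1--2), extract a functional equation for $\sigma$, and solve it by Cauchy-type arguments; indeed your endgame (from $\sigma(\lambda u)=A(\lambda)\sigma(u)+\sigma(\lambda)$ through symmetrization to the multiplicative equation and the power solution) is essentially the paper's Step 4. However, the pivotal step of your reverse direction has a genuine gap: from preservation of the ordering on secant slopes you conclude $M(\lambda u,\lambda v)=\phi(\lambda)\,M(u,v)$. Order-preservation only yields $M(\lambda u,\lambda v)=\psi_\lambda\bigl(M(u,v)\bigr)$ for some strictly increasing $\psi_\lambda$ (and even making $\psi_\lambda$ well defined requires the ``iff'' in Scale Invariance plus a density argument); it gives neither linearity nor zero intercept. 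The zero-intercept claim is in fact false as stated: take $\sigma(x)=\frac{x^p-1}{p}+c(x-1)$ with $c>0$. This is strictly concave, twice differentiable, satisfies $\sigma(1)=0$, and induces \emph{exactly the same} divergence $d_\mu$ (the added term contributes $-c\sum_i(\pi_i-\mu_i)=0$), hence satisfies Scale Invariance whenever the $\alpha$-divergence does; yet $M(\lambda u,\lambda v)=\lambda^{p-1}\bigl(M(u,v)-c\bigr)+c$, affine with nonzero intercept, and this $\sigma$ violates your identity $\sigma(\lambda u)=A(\lambda)\sigma(u)+\sigma(\lambda)$. Closing the gap requires (i) an argument that $\psi_\lambda$ is affine --- e.g., exploiting the chord identity $(u-v)M(u,v)=(u-w)M(u,w)+(w-v)M(w,v)$ to obtain a Jensen equation for $\psi_\lambda$ --- and (ii) the observation that $\sigma$ is identified only up to adding multiples of $(x-1)$, so the intercept can be normalized away. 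The paper sidesteps both issues by working with the \emph{indifference} version of the axiom and deriving the Jensen equation $f_\alpha(\delta t_1+t_2-\delta t_3)=\delta f_\alpha(t_1)+f_\alpha(t_2)-\delta f_\alpha(t_3)$ for $f_\alpha(x)=\sigma(\alpha\,\sigma^{-1}(x))-\sigma(\alpha)$, which is anchored at $f_\alpha(0)=0$ by the normalization $\sigma(1)=0$.

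A secondary problem is the forward direction: you verify sign-preservation only for $\alpha_1+\alpha_2=\beta_1+\beta_2$ and assert that ``varying $\mu$ together with continuity'' covers the remaining configurations. That is not an argument --- Scale Invariance is a statement for a fixed $\mu$ and fixed events, quantified over all $(\alpha,\beta,\lambda)$ --- and the unrestricted case is exactly where the residual term $c^{1-p}\bigl(1-\lambda(\alpha_1+\alpha_2)\bigr)^p$ fails to scale by $\lambda^p$ and can move the comparison; a short numerical check with the KL case (e.g., $\mu(E_1)=\mu(E_2)=0.1$, $(\alpha_1,\alpha_2)=(0.3,0.3)$, $(\beta_1,\beta_2)=(0.1,0.1)$, $\lambda=\tfrac12$) shows the ranking of $D$-values can actually reverse under scaling, so no continuity argument can rescue the claim. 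You need either an honest treatment of the third term for $\alpha_1+\alpha_2\neq\beta_1+\beta_2$, or to make explicit which class of comparisons the verification covers (note the paper's own converse argument only exploits configurations in which the residual event's mass is driven to zero).
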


Another natural form of information that has appeared in the literature is qualitative information: ``event $A$ is more likely than event $B$." A unique property of the KL divergence (among $\alpha$-divergences) is that such qualitative information does not affect beliefs on $(A\cup B)^c$. We state this property in behavioral terms.

\begin{axm}[\bf{Independence of Irrelevant Events}]\label{IIE} For any information set $I=\{\pi\in\Delta(S)\mid\pi(A)\ge \pi(B)\}$ for disjoint events $A$ and $B$, 
\[f\,A\cup B\,h\succsim g\,A\cup B\,h\text{ iff }f\,A\cup B\,h\succsim_I g\,A\cup B\,h.\]
\end{axm}

It can be shown that \nameref{IIE} alone is not enough to uniquely characterize the KL divergence. However, the KL divergence can be uniquely obtained if \nameref{SI} is also imposed. 

\begin{prps}\label{prop:kl} Consider a Bayesian divergence $d_\mu(\pi)=-\sum^n_{i=1} \mu_i\, \sigma(\frac{\pi_i}{\mu_i})$. \nameref{SI} and \nameref{IIE} are satisfied for each $\mu$ if and only if $d_\mu$ is a KL divergence.\end{prps}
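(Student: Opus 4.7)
The plan is to combine \autoref{prop:renyi} with \nameref{IIE}. By \autoref{prop:renyi}, any Bayesian divergence satisfying \nameref{SI} for each $\mu$ must be an $\alpha$-divergence $\sigma_\alpha(x)=(x^\alpha-1)/\alpha$ for some $\alpha\in[0,1)$ (with the convention $\sigma_0(x)=\ln x$). The proposition thus reduces to showing that within this one-parameter family, \nameref{IIE} selects exactly $\alpha=0$, and conversely that KL satisfies \nameref{IIE}.

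First I would specialize to singletons $A=\{a\}$ and $B=\{b\}$ and a prior $\mu$ with $\mu_a<\mu_b$, so that $\mu\notin I$ and the constraint $\pi(A)\geq\pi(B)$ binds at $\pi_a=\pi_b$. Writing the Lagrangian for the minimization of the $\alpha$-divergence over $I=\{\pi\mid\pi_a\geq\pi_b\}$, the first-order conditions give
\[\pi_s=c\,\mu_s\text{ for }s\notin\{a,b\},\qquad \pi_a=\pi_b=c\cdot M_{1-\alpha}(\mu_a,\mu_b),\]
where $M_p(x,y)=\bigl((x^p+y^p)/2\bigr)^{1/p}$ is the power mean of order $p$ and $c$ is the common scale fixed by $\sum_s\pi_s=1$. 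Because $\sigma$ is strictly concave, the FOCs on $(A\cup B)^c$ force $\pi_s/\mu_s$ to be the \emph{same} constant $c$ across all such $s$; hence proportionality on $(A\cup B)^c$ is automatic for every Bayesian divergence and cannot by itself be the distinguishing content of \nameref{IIE}.

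Second I would translate \nameref{IIE} into the numerical statement $c=1$, equivalently $\mu_I(A\cup B)=\mu(A\cup B)$. Since proportionality on $(A\cup B)^c$ is common to all Bayesian divergences, the additional bite of \nameref{IIE} must come from comparing acts that vary jointly on $A\cup B$ and on $(A\cup B)^c$; under SEU, consistency across all such comparisons pins down the total mass on $A\cup B$. Substituting $c=1$ into the normalization $2cM_{1-\alpha}(\mu_a,\mu_b)+c(1-\mu_a-\mu_b)=1$ yields
\[M_{1-\alpha}(\mu_a,\mu_b)=\tfrac{\mu_a+\mu_b}{2}=M_1(\mu_a,\mu_b).\]

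Finally, the power mean $M_p(x,y)$ is strictly monotone in $p$ unless $x=y$ (a standard consequence of Jensen's inequality applied to the strictly concave map $z\mapsto z^p$ for $p\in(0,1)$). Requiring \nameref{IIE} for every prior---in particular for one with $\mu_a\neq\mu_b$---the equality $M_{1-\alpha}=M_1$ forces $1-\alpha=1$, hence $\alpha=0$ and $d_\mu$ is the KL divergence. For the converse, setting $\alpha=0$ in the formulas above gives $c=1$, $\mu_I(s)=\mu(s)$ for $s\notin A\cup B$, and $\mu_I(a)=\mu_I(b)=(\mu_a+\mu_b)/2$, from which both \nameref{SI} (via \autoref{prop:renyi}) and \nameref{IIE} follow by direct verification. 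The main obstacle is the second step: rigorously extracting from \nameref{IIE} the equality $\mu_I(A\cup B)=\mu(A\cup B)$ rather than mere proportionality on $(A\cup B)^c$, since only the former discriminates between $\alpha$-divergences and obtaining it requires carefully exploiting comparisons that couple the two sides of $A\cup B$.
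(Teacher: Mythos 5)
Your proposal is correct and follows essentially the same route as the paper's proof: reduce to the $\alpha$-divergence family via \autoref{prop:renyi}, note that the constraint binds at $\pi(A)=\pi(B)$, extract $\mu_I(A\cup B)=\mu(A\cup B)$ from \nameref{IIE}, and arrive at the power-mean identity $M_{1-\alpha}(\mu_a,\mu_b)=M_1(\mu_a,\mu_b)$, which the paper writes as $\tfrac{a^{1-\gamma}+b^{1-\gamma}}{2}=\big(\tfrac{a+b}{2}\big)^{1-\gamma}$ and likewise resolves by strict monotonicity to get $\gamma=0$. The step you flag as the main obstacle---obtaining total-mass preservation on $A\cup B$ rather than mere proportionality on $(A\cup B)^c$---is exactly the step the paper itself asserts without further argument, so your attempt supplies everything the paper's own proof does.
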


%\begin{proof}[Proof of Proposition 7]

%\end{proof}

\section{Applications}\label{applications}

We present two applications that illustrate important consequences of general information. The first application shows how general information, when combined with the non-uniqueness of \nameref{GBex}, may lead to polarization and speculative trade. The second application shows how general information may help us understand how agents decide between competing narratives.

\subsection{Bayesian (Dis)Agreement}\label{agreement}

Our definition of \nameref{GBU}, along with its axiomatic characterization and our characterization of \nameref{GBex}, is based on $\alpha$-events. Critically, this only pins down a class of distances, rather than a unique distance. Although all \nameref{GBex} agree on $\alpha$-events (they agree on interval information sets, by Proposition \autoref{prop:interval}) they may disagree for other types of information.  Indeed, our characterizations of $\alpha$-divergences and KL divergence show that other information sets can be used to distinguish between Bayesian divergences. A consequence of this is that there can be Bayesian disagreement. %Later in this section, we will illustrate how such disagreement may induce polarization and speculative trade. 

For additional intuition as to why disagreement arises, consider any \nameref{GBex}.  The properties of $\sigma$ ensure that for any $\alpha$-event, the DM always shifts probability mass proportionally across states. Put another way, Dynamic Consistency places restrictions on beliefs after information sets that are ``event-like," which includes interval information sets. Crucially, such information does not fundamentally challenge the DM's beliefs about the relative likelihood of states. However, information that is not ``event-like'' may necessitate the revision of relative likelihoods by the DM. When information sets preclude proportional shifts, the specifics of the distance function matter. 

Formally, we say that two distance functions $d_\mu$ and $d'_\mu$ are \emph{distinct} if there are no constants $\alpha_\mu>0$ and $\beta_\mu$ such that $d_\mu(\pi)=\alpha_\mu d'_\mu(\pi)+\beta_\mu$ for each $\pi$.

\begin{prps}\label{prop:disagree} Suppose $|S|\ge 3$. For any two distinct Bayesian divergences $d_\mu$ and $d'_\mu$, there is $I$ such that $\mu_I\neq\mu'_I$.
\end{prps}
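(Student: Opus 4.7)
The plan is to prove the contrapositive: if $\mu_I=\mu'_I$ for every $I\in\mathscr{I}$, then $d_\mu$ and $d'_\mu$ must be affinely related, contradicting distinctness. The key observation is that any two-point set $\{\pi^1,\pi^2\}\subset\Delta(S)$ is a valid information set (a finite union of two singletons, each convex), so if $d_\mu(\pi^1)<d_\mu(\pi^2)$ while $d'_\mu(\pi^1)>d'_\mu(\pi^2)$ then $\mu_I=\pi^1\neq\pi^2=\mu'_I$ regardless of tie-breakers. Combining this with continuity of Bayesian divergences and the fact that $\mu$ is the unique local minimizer of each (both are strictly convex on $\Delta(S)$, since the Hessian is diagonal with positive entries), I can rule out even weakly reversed orderings: if $d_\mu(\pi^1)<d_\mu(\pi^2)$ but $d'_\mu(\pi^1)=d'_\mu(\pi^2)$, a small perturbation of $\pi^2$ decreasing $d'_\mu$ (possible since $\pi^2\neq\mu$) yields a strictly reversed pair. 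Hence $d_\mu$ and $d'_\mu$ induce the same strict order on $\Delta(S)$, so they have the same level sets, and $d'_\mu=h(d_\mu)$ for some strictly increasing $h$. The remaining goal is to show $h$ is affine.

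Writing $d_\mu(\pi)=-\sum_i\mu_i\,\sigma(\pi_i/\mu_i)$ and $d'_\mu(\pi)=-\sum_i\mu_i\,\tilde\sigma(\pi_i/\mu_i)$, the identity $d'_\mu-h(d_\mu)\equiv 0$ holds only on $\Delta(S)$, so its ambient gradient at any interior $\pi$ is parallel to $(1,\ldots,1)$, giving
\[
h'(d_\mu(\pi))\,\sigma'(\pi_i/\mu_i)\;-\;\tilde\sigma'(\pi_i/\mu_i)\;=\;C(\pi)\quad\text{for every coordinate }i,
\]
where $C(\pi)$ is $i$-independent. Differencing over $i\neq j$ with $\pi_i/\mu_i\neq\pi_j/\mu_j$ yields
\[
h'(d_\mu(\pi))\;=\;\frac{\tilde\sigma'(\pi_i/\mu_i)-\tilde\sigma'(\pi_j/\mu_j)}{\sigma'(\pi_i/\mu_i)-\sigma'(\pi_j/\mu_j)}.
\]
Because $|S|\geq 3$, for any three distinct $t_1,t_2,t_3>0$ with $\mu_1 t_1+\mu_2 t_2+\mu_3 t_3\leq 1$ there exists $\pi\in\Delta(S)$ with $\pi_k/\mu_k=t_k$ for $k=1,2,3$; applying the identity to the three pairs forces the points $(\sigma'(t_k),\tilde\sigma'(t_k))$ to be collinear. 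Using the extra free coordinates when $|S|\geq 4$, or letting $(t_1,t_2)$ move along the line $\mu_1 t_1+\mu_2 t_2=1-\mu_3 t_3$ when $|S|=3$, this propagates collinearity across overlapping intervals and shows that the chord slope $(\tilde\sigma'(s)-\tilde\sigma'(t))/(\sigma'(s)-\sigma'(t))$ is a single global constant $c$. Hence $\tilde\sigma'(t)=c\,\sigma'(t)+d$ for some constants $c,d$ on the relevant interval.

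Integrating and using the normalization $\sigma(1)=\tilde\sigma(1)=0$ gives $\tilde\sigma(t)=c\,\sigma(t)+d(t-1)$, so
\[
d'_\mu(\pi)\;=\;c\,d_\mu(\pi)\;-\;d\sum_i(\pi_i-\mu_i)\;=\;c\,d_\mu(\pi),
\]
using $\sum_i\pi_i=\sum_i\mu_i=1$; strict concavity of $\tilde\sigma$ forces $c>0$. This is the affine relation $d_\mu=(1/c)\,d'_\mu+0$, contradicting distinctness. The main obstacle I anticipate is the collinearity-to-global-constant step: one must verify that the set of realizable triples $(t_1,t_2,t_3)$ is rich enough to force the chord-slope function on the curve $t\mapsto(\sigma'(t),\tilde\sigma'(t))$ to be constant on an open interval, and then use overlapping intervals (obtained by varying $t_3$) to globalize. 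The delicate case is $|S|=3$, where admissible triples lie only on a 2D surface inside the positive orthant and one has to exploit the strict monotonicity of $\sigma'$ and $\tilde\sigma'$ (from strict concavity) to propagate constancy along connected arcs of this surface.
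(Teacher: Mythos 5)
Your architecture---argue the contrapositive, use two-point information sets $\{\pi^1,\pi^2\}$ (valid members of $\mathscr{I}$) to compare the two divergences, and reduce everything to showing $\tilde\sigma$ is an affine function of $\sigma$---matches the paper's, but your technical route from there is genuinely different. The paper never forms the ordinal representation $d'_\mu=h(d_\mu)$. It extracts from binary sets the statement that the two divergences preserve each other's \emph{indifferences}, restricts to pairs $\pi,\pi'$ differing only in two coordinates, solves each indifference for $\pi'_2/\mu_2$, and equates the results to obtain a Jensen--Cauchy functional equation for $f=\sigma\circ\tilde\sigma^{-1}$, namely $f\big(\tfrac{\mu_1}{\mu_2}t_1+t_2-\tfrac{\mu_1}{\mu_2}t_3\big)=\tfrac{\mu_1}{\mu_2}f(t_1)+f(t_2)-\tfrac{\mu_1}{\mu_2}f(t_3)$. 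A translation trick (shifting $t_1$ and $t_3$ by the same $\epsilon$ leaves the left-hand argument unchanged) yields $f(t_1)-f(t_3)=f(t_1-\epsilon)-f(t_3-\epsilon)$, so $f$ is affine and $\sigma=\alpha\tilde\sigma+\beta$ follows with no differentiation at all. Your route instead differentiates $d'_\mu=h(d_\mu)$ along the simplex and works at the level of $\sigma'$ and $\tilde\sigma'$; both approaches land on the same affine relation and the same contradiction with distinctness.

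Two points in your plan need attention. First, you apply the chain rule to $h$ without establishing that $h$ is differentiable; this is repairable (away from $\mu$ the tangential gradient of $d_\mu$ is nonzero, so $d_\mu$ is locally a submersion onto an interval and $h$ inherits smoothness from $d'_\mu$), but it must be said. Second, and more substantively, the globalization of the chord-slope constancy is incomplete as sketched, and this is exactly where $|S|\ge 4$ versus $|S|=3$ bites. For $|S|\ge 4$ you can in fact shortcut the collinearity propagation entirely: fixing the pair $(t_1,t_2)$ and varying the remaining coordinates moves $d_\mu(\pi)$ over a nondegenerate interval while the identity $h'(d_\mu(\pi))=\big(\tilde\sigma'(t_1)-\tilde\sigma'(t_2)\big)/\big(\sigma'(t_1)-\sigma'(t_2)\big)$ pins $h'$ to a constant there; overlapping such intervals makes $h'$ globally constant. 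For $|S|=3$, however, every admissible triple satisfies the exact constraint $\sum_k\mu_k t_k=1$, so $\pi$ is rigid given the triple, you can never take all three ratios in a small common neighborhood, and ``moving $(t_1,t_2)$ along the line $\mu_1 t_1+\mu_2 t_2=1-\mu_3 t_3$'' only says that each such pair is collinear with the single third point $(\sigma'(t_3),\tilde\sigma'(t_3))$ that the simplex forces on it---different pairs on that line give lines through that point which need not coincide. Turning these rigid triple relations into global constancy requires a genuine chaining argument (for instance, exploiting that an unordered pair of ratio values can be realized in several coordinate positions, producing several distinct third points and several distinct values of $d_\mu$ at which $h'$ must agree, together with continuity of the chord-slope function). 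I believe this can be completed, but as written it is the one substantive gap; the paper's functional-equation route reaches the affine conclusion by a single translation-invariance argument rather than by any such propagation.
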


\subsubsection{Disagreement and Comparative Statics}

Now that the existence of disagreement has been established, we now turn to the problem of understanding \emph{which information sets induce disagreement} and \emph{in which direction disagreement occurs}. To do so, we established a result showing that the intersections of $\alpha$-events induce disagreement and describing how the concavity of $\sigma$ affects the resulting disagreement. To illustrate, consider two DMs with common prior $\mu$ and Bayesian divergences, each with strictly increasing functions $\sigma^1$ and $\sigma^2$, such that $\sigma^2=h(\sigma)$ for some strictly concave function $h$. For any information set $I$, let us denote DM $k$'s posteriors by $\mu^k_I$. %The following result illustrates how the concavity of $\sigma$ will affect posterior beliefs for the intersection of two $\alpha$-events.

\begin{prps}\label{compstat} Suppose $|S|\ge 4$ and $I=\{\pi\in\Delta(S)\mid\pi(A\cup B)=\alpha\text{ and }\pi(A\cup C)=\beta\}$ for some disjoint subsets $A, B, C$ of $S$. Then $\mu^1_I(s)<\mu^2_I(s)\text{ for any }s\in A$, and $\mu^1_I(s)>\mu^2_I(s)\text{ for any }s\in B\cup C$.

\end{prps}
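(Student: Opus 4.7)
The plan is to reduce this comparative-statics statement to a one-dimensional optimization and then use the concavity of $h$ together with a majorization argument to sign the shift in the optimum.

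First, I would observe that within each of the four events $A, B, C$ and $D := S \setminus (A \cup B \cup C)$, any Bayesian-divergence minimizer allocates mass proportionally to the prior --- the same Jensen-type argument used in Proposition \autoref{prop:bayes}. Thus the problem collapses to four event masses $q_A, q_B, q_C, q_D$ subject to $q_A + q_B = \alpha$, $q_A + q_C = \beta$, $\sum_E q_E = 1$, which is a single degree of freedom parameterized by $x := q_A$ with $q_B = \alpha - x$, $q_C = \beta - x$, $q_D = 1 - \alpha - \beta + x$. Writing DM $k$'s negative divergence as $\phi^k(x) := \sum_E \mu(E)\, \sigma^k\!\big(q_E(x)/\mu(E)\big)$, the FOC reads $(\sigma^k)'(t^k_A) + (\sigma^k)'(t^k_D) = (\sigma^k)'(t^k_B) + (\sigma^k)'(t^k_C)$ with $t^k_E := q^k_E/\mu(E)$. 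Each $\phi^k$ is strictly concave in $x$, so $x^1 < x^2$ is equivalent to $(\phi^2)'(x^1) > 0$, i.e., to
\[
(\sigma^2)'(t^1_A) + (\sigma^2)'(t^1_D) > (\sigma^2)'(t^1_B) + (\sigma^2)'(t^1_C).
\]

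Second, I would exploit the KKT dual for DM 1. With $\nu$ for $\sum q = 1$ and $\lambda_1,\lambda_2$ for the $\alpha$- and $\beta$-constraints, the first-order conditions give $(\sigma^1)'(t^1_E) = \nu + \lambda_1\,\mathds{1}\{E \in \{A,B\}\} + \lambda_2\,\mathds{1}\{E \in \{A,C\}\}$. Hence the pair for $\{A,D\}$ is $\{\nu + \lambda_1 + \lambda_2,\,\nu\}$ and the pair for $\{B,C\}$ is $\{\nu + \lambda_1,\,\nu + \lambda_2\}$: they share the sum $2\nu + \lambda_1 + \lambda_2$, and their spreads are $|\lambda_1 + \lambda_2|$ and $|\lambda_1 - \lambda_2|$, so $\{A,D\}$ majorizes $\{B,C\}$ (the relevant orientation corresponding to the sign condition $\lambda_1\lambda_2 \geq 0$). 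Using $(\sigma^2)'(t) = h'(\sigma^1(t))(\sigma^1)'(t)$ and the change of variables $y = (\sigma^1)'(t)$, the target inequality becomes a Schur-convex comparison for
\[
H(y) := h'\!\Big(\sigma^1\big(((\sigma^1)')^{-1}(y)\big)\Big)\,y.
\]
A chain-rule calculation using $h''<0$, $(\sigma^1)''<0$, and the identity $\psi'(y) = 1/(\sigma^1)''(\psi(y))$ for $\psi := ((\sigma^1)')^{-1}$ shows that $H$ is strictly convex on the relevant range, and Schur-convexity combined with the majorization then delivers the strict inequality. Finally, $x^1 < x^2$ translates to the posterior claim: for $s \in A$, $\mu^k_I(s) = \mu(s)\,x^k/\mu(A)$ preserves the inequality; for $s \in B$, $\mu^k_I(s) = \mu(s)(\alpha - x^k)/\mu(B)$ reverses it; and analogously for $s \in C$.

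The main obstacle I anticipate is the Schur-convex step. The strict convexity of $H$ comes out of a bookkeeping calculation combining $h''<0$, $(\sigma^1)''<0$, and the sign pattern of $\psi'$; the terms line up, but sorting through them requires care. The majorization direction is the more delicate part: it corresponds to the sign condition $\lambda_1\lambda_2 \geq 0$ on the DM 1 multipliers, which in turn reflects the relative orientation of $(\alpha,\beta)$ with respect to $(\mu(A\cup B),\mu(A\cup C))$, and I expect that handling this cleanly (possibly via an extra non-degeneracy condition implicit in the statement) will be where the proof actually has to do work.
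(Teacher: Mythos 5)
Your reduction to a one-dimensional strictly concave program in $x=q_A$, and the criterion that $x^1<x^2$ iff $(\phi^2)'(x^1)>0$, is exactly the paper's strategy (the paper sets $\gamma^k=\mu^k_I(A)$ and evaluates DM 2's first-order expression at DM 1's optimizer). The gap is precisely at the step you flag yourself: neither the convexity of $H$ nor, more importantly, the majorization of $\bigl\{(\sigma^1)'(t_B),(\sigma^1)'(t_C)\bigr\}$ by $\bigl\{(\sigma^1)'(t_A),(\sigma^1)'(t_D)\bigr\}$ is established, and the majorization is in fact not a consequence of the hypotheses. Concretely, take $\mu(A)=\mu(C)=0.1$, $\mu(B)=\mu(D)=0.4$, $\alpha=0.3$, $\beta=0.2$: at $q_A=0.06$ one gets $t_A=t_B=0.6$ and $t_C=t_D=1.4$, so your four-term first-order condition $(\sigma)'(t_A)+(\sigma)'(t_D)=(\sigma)'(t_B)+(\sigma)'(t_C)$ holds identically for every $\sigma$, and by strict concavity this is the unique optimizer for every Bayesian divergence. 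The two pairs coincide as multisets, both DMs choose the same posterior, and the strict inequality fails --- so no amount of care with $H$ can close the argument in the formulation you have set up.

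The reason the paper's proof closes in two lines is that its first-order condition has no $t_D$ term: the objective it works with is $\mu(A)\,\sigma^k(t_A)+\mu(B)\,\sigma^k(t_B)+\mu(C)\,\sigma^k(t_C)$ only, giving $(\sigma^k)'(t_A)=(\sigma^k)'(t_B)+(\sigma^k)'(t_C)$. Because $\sigma^1$ is strictly increasing, the right side is a sum of two positive terms, which forces $(\sigma^1)'(t_A)>(\sigma^1)'(t_B)$ and $(\sigma^1)'(t_A)>(\sigma^1)'(t_C)$, hence $t_A<t_B$ and $t_A<t_C$; then $h'$ decreasing gives $h'(\sigma^1(t_A))>h'(\sigma^1(t_B))$ and $h'(\sigma^1(t_A))>h'(\sigma^1(t_C))$, and weighting the DM 1 first-order condition by these terms yields $(\sigma^2)'(t_A)>(\sigma^2)'(t_B)+(\sigma^2)'(t_C)$ directly --- no majorization, no Schur-convexity. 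Your inclusion of the residual event $D=S\setminus(A\cup B\cup C)$ is the natural reading of the constrained problem, since $q_D=1-\alpha-\beta+q_A$ varies with $q_A$; that discrepancy with the paper's first-order condition is worth raising with the authors. But judged as a proof of the stated proposition, your attempt does not close at its central step, and the example above shows the missing majorization cannot be supplied in the four-term formulation without further restrictions on $(\mu,\alpha,\beta)$.
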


%For disjoint subsets,  $A, B, C$ of $S$, notice that $I=\{\pi\in\Delta(S)\mid\pi(A\cup B)=\alpha\text{ and }\pi(A\cup C)=\beta\}$ is the intersection of two $\alpha$-events,  $I^\alpha_{A\cup B}$ and $I^\beta_{A\cup C}$. 

The proposition shows that the DM with a more concave $\sigma$ puts more weight on the common event $A$.\footnote{When there are 3 states the above information leads to the same posterior because $I$ will be a singleton.}
%For further intuition, suppose without loss that $\alpha \geq \beta$. 
Since $\sigma^2=h(\sigma)$, DM 2 faces relatively lower marginal costs of changing beliefs for a particular state, and so he prefers ``more concentrated" belief adjustments. Accordingly, changes are concentrated on the common states in $A$. In contrast, DM 1 diffuses the changes across $B$ and $C$ more. 

%\begin{defn} $I$ is 
%\end{defn}

\subsubsection{Polarization}

To illustrate polarization, imagine two policymakers, Alice and Bob, trying to understand the risks of climate change. They begin with the same beliefs and consult a panel of scientists. The panel advises that there are two accepted models, $\pi$ and $\pi'$.  Thus, Alice and Bob have both been provided with the information set $I=\{\pi,\pi'\}$. As they are both \nameref{iuseu} DMs, each will adopt one of these models as his or her new belief. Alice and Bob will agree after (binary) information sets $I$ if and only if their distance functions are ordinally equivalent. This is much more demanding than requiring that they are, individually, Bayesian agents. Indeed, \autoref{disagreefig} illustrates two iso-distance curves. The solid (blue) curve implies that $\mu$ is closer to $\pi$ than $\pi'$, while the dashed (purple) curve implies the opposite. Importantly, each curve is tangent to $I^{\alpha}_{\{s_2, s_3\}}$ for some $\alpha$, so that each DM chooses a Bayesian posterior from $\alpha$-events. 
 
More concretely, suppose that there are three states, $S=\{s_1,s_2,s_3\}$, and that Alice and Bob agree on their prior, $\mu^A=\mu^B=(0.4,0.3,0.3)$. They differ, however, in how they judge information. Let Alice's distance be $d^A_{\mu}(\pi)=-\sum_i\mu_i\ln(\frac{\pi_i}{\mu_i})$ and Bob's distance be $d^B_{\mu}(\pi)=-\sum_i\mu_i(\frac{\pi_i}{\mu_i})^{\frac12}$.  Note that these are both examples of \nameref{GBex}, and so Alice and Bob satisfy \nameref{IDC} and will agree after every $\alpha$-event. However, consider $\pi=(0.3,0.375,0.325)$ and $\pi'=(0.335,0.405,0.26)$. Since neither $\pi$ nor $\pi'$ lie on the Bayesian expansion path of $\mu$ (illustrated in \autoref{disagreefig}), \nameref{IDC} places no restriction on posterior beliefs.  In this instance, Alice will adopt $\pi$, while Bob adopts $\pi'$. 

Hence, Bayesian agents with common prior disagree after public information. Aumann's agreement theorem does not hold because of general information and non-uniqueness of Bayesian divergences.

\begin{figure}
	\centering
     \includegraphics[width=6.5cm]{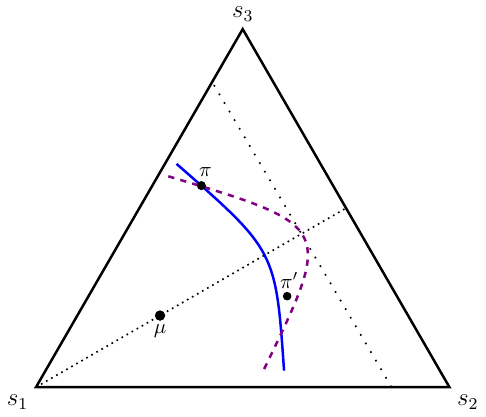}

\caption{Bayesian disagreement at $\pi$ and $\pi'$. Iso-distance curves with respect to the common prior $\mu$ are depicted for two DMs. The solid (blue) curve represents DM 1, while the dashed (purple) curve represents DM 2.}\label{disagreefig}
          \end{figure}

\subsubsection{Trade and Public Information}
In economies with Bayesian traders who share a common prior over the states, neither public nor private information generates incentives to re-trade Pareto-efficient allocations \citep[see][]{Milgrom1982, Morris1994}. In particular, if the initial Pareto allocation has the full-insurance property and the traders receive public information about events that occurred, there will be no trade. When DM's use a \nameref{GBex}, this ``no-trade'' result extends to interval information by Proposition \ref{prop:interval}. However, speculative trade is possible when publicly available information is more general.

To illustrate, consider a pure-exchange economy under uncertainty with $S=\{s_1,s_2,s_3\}$. Two traders, Alice and Bob, share a common prior over $S$ given by $\mu=(0.5,0.3,0.2)$.  An allocation $f=(f^A,f^B)$ is a tuple of state-contingent consumption of one commodity (i.e., $f^i \in \mathbb{R}_{+}^3$ with $i \in \{A,B\}$).  Both traders are SEU maximizers with respect to the same (strictly) concave utility function, $u^A(x)=u^B(x)=\sqrt{x}$ for any $x \in \mathbb{R}_{+}$. However, while they both perform \nameref{iuseu}, they have different (Bayesian) distance functions $d^A_{\mu}(\pi)=-\sum_i\mu_i\ln(\frac{\pi_i}{\mu_i})$ and let $d^B_{\mu}(\pi)=-\sum_i\mu_i(\frac{\pi_i}{\mu_i})^{\frac12}$. The initial allocation is the full-insurance allocation:  $e^A=e^B=(5,5,5)$.

Suppose that the Bayesian traders learn publicly that the probability of an event $E$ is $\alpha$. Since $I_E^\alpha$ crosses the Bayesian-expansion path, both traders choose the same posterior. Thus, the full-insurance allocation remains efficient after updating. Due to common posteriors in the presence of publicly available $\alpha$-events, there will be no-trade among the Bayesian traders.

Now, suppose there are two research institutes that provide likelihood estimates for the economy. In their annual reports, both institutes publish different probability estimates: $\pi^1 = \{0.25,0,25,0.5\}$ and $\pi^{2} = \{0.2,0.4,0.4\}$. Will such information generate a trade? 

Alice updates the common prior by selecting $\pi^1$ while Bob chooses $\pi^2$ as his new belief.\footnote{We have that $d^A_{\mu}(\pi^1)= 0.29 < 0.23=d^A_{\mu}(\pi^2)$ and $d^B_{\mu}(\pi^1)= -0.94 > -0.95=d^B_{\mu}(\pi^2)$} Since both traders disagree on their posteriors,  a Pareto improving exchange is possible. For instance, the feasible allocation $f^A=(5.15, 3.5, 6)$ and $f^B=(4.85,6.5,4)$ makes both traders strictly better off than the full-insurance allocation.\footnote{$0.25\cdot\sqrt{5.15}+0.25\cdot\sqrt{3.5}+0.5\cdot\sqrt{6}=2.26>2.236=\sqrt{5}$ and $0.2\cdot\sqrt{4.85}\ +0.8(\cdot\sqrt{6.5}+\cdot\sqrt{4})=2.26>2.236$.} Notice that the trade leading to the Pareto-superior allocation $(f^A,f^B)$  is not driven by risk sharing but by speculative motives. Put differently, both traders are willing to abandon the full-insurance allocation in order to bet against each other by purchasing assets that correspond to the transfers $(f^A-e^A)$ and $(f^B-e^B)$. \cite{Gilboa2014} call such trades ``speculative'' Pareto improvement bets.\footnote{\cite{Gilboa2014} distinguish between Pareto improvements due to betting and due to risk-sharing. An allocation $f$ is called a \textit{bet} if $f$ Pareto dominates another allocation $g$ with the full-insurance property.} 

Hence, no-trade theorems break because of Bayesian disagreement in the general information setting.

\subsection{Competing Narratives}

When senders provide competing narratives they create a general information set, and Inertial Updating provides a framework to understand how a receiver decides which narrative to adopt. To demonstrate this, we first describe a general communication game with two senders and one receiver. There are two opposing senders (e.g., politicians, lawyers, or referees) who want to persuade a receiver (e.g., a voter, a judge, or an editor) to take action in their favor. We assume that Receiver's prior is $\mu\in\text{int}\big(\Delta(S)\big)$ and that she admits an Inertial Updating representation with the KL divergence $d^{KL}_\mu$. The senders are fully informed and know the true probability distribution $\nu\in \text{int}\big(\Delta(S)\big)$ with $\nu\neq \mu$. Sender $i$ can provide a narrative and claim the true probability distribution is $\pi^i\in\Delta(S)$. For Sender $i$, the (reputational or emotional) cost of claiming $\pi^i$ given truth $\nu$ is $C_i(\pi^i, \nu)$. Given the senders' claims $\pi^1$ and $\pi^2$, Receiver's posterior is \[\mu_I=\arg\min_{\pi\in I} d^{KL}_{\mu}(\pi),\] where $I=\{\pi_1, \pi_2\}$.\footnote{If there is a tie, we can use the lexicographic order on $\Delta(S)$ to break the tie. Specifications of tie-breaking rules do not play any role in our analysis.} Given posterior $\mu_I$, Receiver's optimal action is 
\[a^*(I)\in \arg\max_{a\in A} \mathbb{E}_{\mu_I} u_R(a, s).\]
Given Receiver's action $a\in A$, Sender $i$'s expected payoff is 
\[\mathbb{E}_{\nu}\, u_i(a, s)-C_i(\pi^i, \nu).\]
Let us denote $V(\pi^i, \pi^j)=\mathbb{E}_{\nu}\, u_i(a^*(\pi^i, \pi^{j}), s)-C_i(\pi^i, \nu)$. We can now define the equilibrium. 

\medskip
\begin{defn} We say $I^*=\{\pi^{1*}, \pi^{2*}\}$ is an \textbf{equilibrium information set} if
\[\pi^{i*}\in \arg\max_{\pi^i} V(\pi^i, \pi^{j*}),\]
for each $i\in \{1, 2\}$.\end{defn}

To characterize equilibrium information sets and demonstrate our Inertial Updating representation, we impose the following assumptions on the primitives of the above communication game. Let $S \subset A=[0, 1]$ and $u_R(a, s)=-|a-s|^2$. Hence, Receiver's optimal action is $a^*(I)=\mathbb{E}_{\mu_I}[s]$. Moreover, let $u_1(a, s)=\,a$ and $u_2(a, s)=1-a$; i.e., Sender $1$'s preferred action is $a=1$ while Sender $2$'s preferred action is $a=0$. Moreover, let $C_i(\pi^i, \nu)=c_i\,\big(\sum_{s\in S} (\pi_s-\nu_s)^2\big)$ for some $c_i>0$.  This communication game is described by the tuple $(S,\mu, \nu, c_1,c_2)$.

\begin{asn}\label{cost_asn} Suppose $(S,\mu, \nu, c_1,c_2)$ are such that $\mathbb{E}_\mu\,\Big(\frac{s-\mathbb{E}[s]}{\nu_s}\Big)\neq 0$ and
\[\min\{c_1, c_2\}>\max\left\{\frac{\mathbb{E}_\mu\,\Big(\big(\frac{s-\mathbb{E}[s]}{\nu_s}\big)^2\Big)}{2\,|\mathbb{E}_\mu\,\Big(\frac{s-\mathbb{E}[s]}{\nu_s}\Big)|}; \max_s\{\frac{|s-\mathbb{E}(s)|}{\nu_s}\}\right\}.\]
\end{asn}

We can now fully characterize the equilibrium information set.

\begin{prps}\label{eq_info} If Assumption \ref{cost_asn} is satisfied, then an equilibrium information set  $I^*=\{\pi^{1*}, \pi^{2*}\}$ exists and is unique. Moreover, when $(-1)^{i+1}\mathbb{E}_\mu\,\Big(\frac{s-\mathbb{E}[s]}{\nu_s}\Big)>0$, $\mu_{I^*}=\pi^{i*}$ where
\[\pi^{j*}=\nu\text{ and }\pi^{i*}_s=\nu_s+(-1)^{i+1}\frac{(s-\mathbb{E}(s))}{2c_i}\text{ for each $s\in S$}.\]
\end{prps}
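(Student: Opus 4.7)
The plan is to reduce the proposition to three substantive steps: (i) characterizing each sender's best-response for a fixed adoption status; (ii) identifying the ``winning" sender by a comparison of KL distances at $\nu$; and (iii) verifying that no sender has a profitable deviation from the candidate profile. Because $u_R(a,s)=-(a-s)^2$ is quadratic, the receiver's action is $a^*(I) = \mathbb{E}_{\mu_I}[s]$, so sender $i$'s expected utility $\mathbb{E}_\nu u_i(a, s)$ is $(-1)^{i+1} a^*$ up to an additive constant. Given the opponent's claim $\pi^j$, sender $i$'s net payoff therefore equals $(-1)^{i+1}\mathbb{E}_{\pi^i}[s] - c_i \|\pi^i - \nu\|^2$ when $\pi^i$ is adopted and $(-1)^{i+1}\mathbb{E}_{\pi^j}[s] - c_i\|\pi^i - \nu\|^2$ otherwise; in the latter case, claiming $\nu$ strictly dominates any other claim, which will pin down the loser's strategy.

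First, I would solve the ``if adopted" quadratic program. Writing $\hat\pi^i$ for its solution, the Lagrangian for $\sum_s \pi_s = 1$ yields
\[
\hat\pi^i_s = \nu_s + \frac{(-1)^{i+1}(s - \mathbb{E}(s))}{2c_i}, \qquad \mathbb{E}(s) = \frac{1}{|S|}\sum_{s \in S} s,
\]
and the bound $c_i > \max_s |s - \mathbb{E}(s)|/\nu_s$ from Assumption \ref{cost_asn} guarantees $\hat\pi^i \in \mathrm{int}(\Delta(S))$. Substituting back gives the maximal adoption-payoff $V_i^{\max} = \sum_s (s-\mathbb{E}(s))^2/(4c_i)$. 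To determine whether $\hat\pi^i$ is closer to $\mu$ in KL than $\nu$ is, I would Taylor-expand $d^{KL}_\mu$ around $\nu$ in the direction $\hat\pi^i-\nu$: the first-order term equals $-\frac{(-1)^{i+1}}{2c_i}\mathbb{E}_\mu\!\left[\frac{s - \mathbb{E}(s)}{\nu_s}\right]$, while the second-order contribution is bounded by $\frac{1}{8c_i^2}\mathbb{E}_\mu\!\left[\left(\frac{s-\mathbb{E}(s)}{\nu_s}\right)^2\right]$. The first bound in Assumption \ref{cost_asn} is exactly what is needed to make the first-order term dominate in absolute value, so $d^{KL}(\mu, \hat\pi^i) < d^{KL}(\mu, \nu)$ if and only if $(-1)^{i+1}\mathbb{E}_\mu[(s-\mathbb{E}(s))/\nu_s] > 0$, which singles out the winner $i$ and the loser $j$.

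Finally, I would verify that $(\pi^{j*}, \pi^{i*}) = (\nu, \hat\pi^i)$ is the unique equilibrium. The winner's incentive is immediate: given $\pi^{j*}=\nu$, the claim $\hat\pi^i$ is KL-feasible by the previous step and yields $V_i^{\max}>0$, strictly better than the payoff $0$ from claiming $\nu$, and no other adopted claim can exceed the unconstrained optimum. The loser's incentive is subtler: given $\pi^{i*}=\hat\pi^i$, the not-adopted payoff is $-\sum_s(s-\mathbb{E}(s))^2/(2c_i)$, while any adoption deviation must satisfy the strictly tighter constraint $d^{KL}(\mu, \pi^j) < d^{KL}(\mu, \hat\pi^i)$, which forces a minimum displacement from $\nu$ along a direction that reduces KL to $\mu$; balancing this forced displacement against the quadratic cost $c_j\|\pi^j-\nu\|^2$ and again invoking the bounds in Assumption \ref{cost_asn} shows that every feasible adoption attempt is weakly dominated by $\pi^j=\nu$. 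Uniqueness then follows because any profile in which the loser also deviates from $\nu$ gives that sender a strict incentive to save cost, and any profile in which the would-be winner claims $\nu$ gives that sender a strict incentive to deviate to $\hat\pi^i$. I expect the loser's no-deviation check to be the main obstacle, since the KL constraint is nonlinear and must be balanced delicately against the quadratic payoff; the precise form of Assumption \ref{cost_asn} has presumably been calibrated to close exactly this gap.
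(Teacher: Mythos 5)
Your constructive core matches the paper's: the adopted sender's problem is the quadratic program whose Lagrangian (with the simplex constraint and a slack KL constraint) yields $\pi^{i*}_s=\nu_s+(-1)^{i+1}(s-\mathbb{E}(s))/(2c_i)$ with $\mathbb{E}(s)$ the unweighted state average, and the feasibility check $\sum_s\mu_s\log(\pi^{i*}_s/\nu_s)>0$ is exactly the paper's Steps 4.3--4.5, which use the elementary bounds $x-x^2\le\log(1+x)\le x$ rather than a Taylor remainder (your constant $\tfrac{1}{8c_i^2}$ should be $\tfrac{1}{4c_i^2}$ to line up with the first bound in Assumption \ref{cost_asn}; the inequality $\log(1+x)\ge x-x^2$ for $x>-0.5$ is the clean way to make the second-order control rigorous). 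However, there are two genuine gaps in your elimination arguments.

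First, your claim that for a non-adopted sender ``claiming $\nu$ strictly dominates any other claim'' is false as a dominance statement: switching to $\nu$ can flip the adoption outcome (if $d^{KL}_\mu(\nu)<d^{KL}_\mu(\pi^i)$ the receiver now adopts $\nu$), which changes $a^*$ and may hurt the deviator. The paper's Step 3 handles this with a case analysis on whether $d^{KL}_\mu(\nu)$ is larger than, smaller than, or equal to $d^{KL}_\mu(\mu_I)$, and in the nontrivial cases it exploits the fact that the two senders have exactly opposing preferences over $a$, so that moving the posterior to $\nu$ weakly benefits at least one of them, and \emph{that} sender is the one with the profitable deviation (in the knife-edge Case 3 a mixture $\alpha\nu+(1-\alpha)\pi^i$ is needed, using strict convexity of $-\log$ to break the tie). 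Your fixed-adoption-status best-response framing does not deliver this.

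Second, and more importantly, you explicitly defer the step that rules out the profile in which the ``wrong'' sender $j$ (the one with $(-1)^{j+1}\mathbb{E}_\mu\big(\tfrac{s-\mathbb{E}[s]}{\nu_s}\big)<0$) is adopted while $i$ claims $\nu$. This is the paper's Step 5 and it is not a cost-versus-displacement balancing act: the argument is that if such a profile were an equilibrium, $\pi^j$ would have to solve the adoption program with a slack KL constraint, forcing $\pi^j_s=\nu_s+(-1)^{j+1}(s-\mathbb{E}(s))/(2c_j)$ by the same KKT computation; but then $\log(1+x)\le x$ gives $\sum_s\mu_s\log(\pi^j_s/\nu_s)\le\tfrac{(-1)^{j+1}}{2c_j}\mathbb{E}_\mu\big(\tfrac{s-\mathbb{E}[s]}{\nu_s}\big)<0$, contradicting the feasibility of $\pi^j$ being adopted over $\nu$. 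Without this self-contradiction argument (which also requires first disposing of the subcase $d^{KL}_\mu(\pi^j)>d^{KL}_\mu(\pi^{i*})$, where sender $i$ deviates to $\pi^{i*}$ directly), uniqueness is not established. Your intuition that Assumption \ref{cost_asn} was ``calibrated to close exactly this gap'' is right, but the closing argument is a sign contradiction in the KL constraint, not a payoff comparison.
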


Proposition \ref{eq_info} shows the following. First, if the reputation costs are high enough, there exists a unique equilibrium information set. Second, one of the senders will always provide the true probability distribution $\nu$. Third, the other sender will be able to persuade the Receiver by providing a probability distribution different from the true probability distribution but closer to the Receiver's prior. Fourth, such persuasion is possible only for the Sender whose preferences are more aligned with the Receiver's prior.

\begin{exm} Suppose $S=\{\frac15, \frac25, \frac45\}$, $\mu=(\frac{6}{10},\frac{3}{10},\frac{1}{10})$, $\nu=(\frac{4}{10},\frac{4}{10},\frac{2}{10})$, and  $c_1=c_1=2$. Without any information, the Receiver would take action $a_{\mu}=\mathbb{E}_\mu(s)=0.32$, and so ex-ante she is more aligned with Sender $2$.  By Proposition \ref{eq_info}, the equilibrium information set is given by $\pi^{1*}=\nu$ and $\pi^{2*}=(\frac{28}{60},\frac{25}{60}, \frac{5}{60})$ and the DM takes action $a_{\pi^{2*}}\approx0.327$, which is below the optimal action under the truth $a_{\nu}^*= 0.4$.

\begin{figure}
	\centering
          \includegraphics[width=6.5cm]{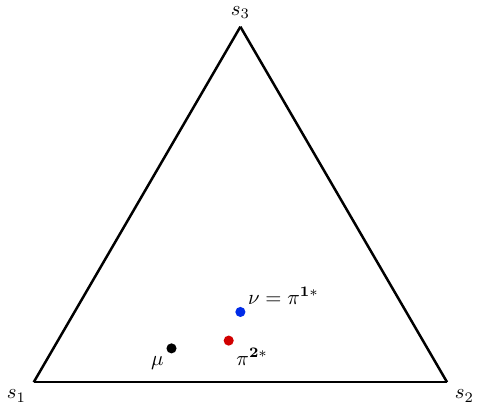}
                   \caption{Equilibrium Information Set, $\{\pi^{1*},\pi^{2*}\}$. }\label{fig:eq_info}
          \end{figure} 
    
\end{exm}

\section{Related Literature}\label{literature}

There are a few papers that develop a theory of belief updating for general information structures. Perhaps the first to study general information was \cite{damiano2006}, who shows the nonexistence of belief selection rules under three properties. In a similar spirit to \cite{damiano2006}, \cite{chambers2010} prove the nonexistence of a selection rule satisfying a particular path-independence condition that they call Bayesian consistency.  In contrast, we focus on the behavioral implications of a particular selection rule: distance minimization. 

As far as we are aware, the first paper to combine general information and distance minimization is \cite{zhao2022pseudo}, who considers environments in which a DM receives a sequence of qualitative statements of the form ``event $A$ is more likely than event $B$." This DM has a probabilistic belief and updates it via the so-called Pseudo-Bayesian updating rule.\footnote{The pseudo-Bayesian updating rule is axiomatized by two axioms, Exchangeability and Stationarity, directly imposed on posteriors. Exchangeability requires that the order of information does not matter as long as the DM receives qualitative statements that ``neither reinforce nor contradict each other.'' Stationarity requires that the DM's beliefs do not change when a qualitative statement is consistent with the prior.} There are three key differences between \cite{zhao2022pseudo} and our paper. First, he focuses on the Kullback-Leibler divergence, while we allow for general distances. Thus, our framework allows for Bayesian disagreement. Second, he focuses on qualitative information (``$A$ is more likely than $B$''), while we allow much more general information sets. Hence, we are able to characterize Kullback-Leibler and $\alpha$-divergences (equivalently, Renyi divergences) using unions of interval information sets and study competing narratives in the communication game. Third, his axioms are on beliefs, while ours are on preferences.  

\cite*{ke2024learning} considers a sequential updating rule under information sets that are closed and convex, as such sets can be viewed as ``action recommendations'' under expected utility. They characterize a DM who shifts their beliefs towards a selection from the information set, which they call the \emph{Contraction Rule}.

Other papers that feature general information include  \cite{gajdos2008}, \cite{Dominiak2020} as well as \cite{ok2022believing}. These papers study the embedding of general information under more general preferences than SEU. In \cite{ok2022believing}, an agent ranks ``info-acts'' $(\mu, f)$, which consist of a probability distribution $\mu$ over $S$ and a (Savage) act $f$. They characterize the notion of probabilistically amenable preferences: the agent adopts  $\mu$ as her own belief, and evaluates act $f$ via the lottery induced by $\mu$ over the outcomes of $f$, although lotteries may not be evaluated by expected utility.\footnote{Probabilistically amenable preferences are weaker than the probabilistically sophisticated preferences introduced by \cite{machina1992probabilisticsophistication} as they do not need to be complete, continuous, and consistent with respect to first-order stochastic dominance.} In their setting, each act might be evaluated with respect to a different probability distribution.  In our model, when an information set $I=\{\mu\}$ is a singleton, \nameref{COMP} ensures that a SEU maximizer adopts $\mu$ to evaluate all acts.

\cite{gajdos2008} studies preferences defined over more general ``info-acts'' $(P, f)$ where $P$ is a set of probability distributions $P \subseteq \Delta(S)$. They characterize a representation in which an ambiguity-averse agent selects a set of priors $\varphi(P) \subseteq \Delta(S)$ to evaluate $f$ via the  Maxmin-criterion, and derive conditions for  $\varphi(P)$ to be consistent with $P$ (i.e., $\varphi(P) \subseteq P$). In situations where $P$ is an $\alpha$-event, \cite{Dominiak2020} show that depending on whether an ambiguity-averse agent ``selects'' her priors from the objective set $\Delta(S)$ or the exogenous set $P$ will fundamentally affect her preference, being either consistent with puzzles in \cite{machina2009risk}.

The literature on updating with standard information, i.e., events, is much larger. Within this environment, a few papers, \cite{Perea2009}, \cite{Basu_2018}, \cite{DKT2023}, have studied inertial or minimum distance updating rules. While these three papers focus on non-Bayesian updating and/or zero-probability events in the standard information environment, the current paper focuses on updating under general information, Bayesian disagreement, and proposing and characterizing a notion of Bayesian updating within this general framework.

Although it is not related to the economic contexts we focus on (choice under uncertainty, general information, and belief updating), \cite{csiszar1991least} provides an axiomatic characterization of $f$-divergence when $f$ is strictly convex (which corresponds to strictly concave $\sigma$). Not only are the settings and primitives different, but our paper behaviorally characterizes $f$-divergence by generalizing Dynamic Consistency. %In the rational inattention literature pioneered by Sims (2003), the cost of acquiring information (or random posterior) takes a form that is similar to distance functions studied in our paper. 

Several papers have utilized divergences to generalize the maxmin representation of \cite{gilboa1989maxmin} for ambiguity-averse preferences. \cite{maccheroni2006ambiguity} introduce and characterize the variational representation of ambiguity-averse preferences in which the utility of act $g$ is $V(g)=\min_{\pi\in \Delta(S)} \{\sum_{s\in S} \pi(s) u(g(s))+c(\pi)\}$. They also introduce divergence preferences, a special case of variational preferences in which $c$ is an $f$-divergence, but they do not characterize divergence preferences. \cite{strzalecki2011axiomatic} behaviorally characterizes multiplier preferences where $c$ is a KL divergence. The variational representation is conceptually different from \nameref{iuseu} since the effective subjective belief (or the minimizer $\pi$ of  $\sum_{s\in S} \pi(s) u(g(s))+c(\pi)$) is act-dependent and $c$ becomes redundant for SEU preferences. Similarly, \citet{cerreia2024misspecification} consider preferences conditional on sets of models $Q$, with the goal of separating ambiguity aversion from concerns for model misspecifications. Their DM can be represented by $V(g)=\min_{\pi\in \Delta(S)} \{\sum_{s\in S} \pi(s) u(g(s))+ \min_{q \in Q}c(\pi, q)\}$. Behaviorally, this DM is similar to one with variational preferences but she faces a greater penalty for using $\pi \notin Q$. Like with variational preferences, $c$ becomes redundant for SEU preferences. Moreover, the effective subjective belief (or the minimizer $\pi$) and the effective mental model (or the minimizer $q$) are both act-dependent. In addition, their DM does not need to select from $Q$, and therefore $Q$ is a ``soft constraint,'' whereas our DM always selects a belief from the information set $I$. Both of the above papers are not concerned with belief updating.

\appendix
\section{Proofs}\label{proofs:main}

\subsection{Proof of Proposition \ref{prop:bayes}}

Take any \nameref{iuseu} representation with \nameref{GBex} $d_\mu(\pi)=-\sum^n_{i=1}\mu_i\, \sigma(\frac{\pi_i}{\mu_i})$. Without loss of generality, let $E=\{s_1, \ldots, s_m\}$ and $\alpha\in (0, 1]$. By the representation, we should find $(\pi_1, \ldots, \pi_m)$ to solve

\[\max \sum^m_{i=1}\mu_i\, \sigma(\frac{\pi_i}{\mu_i})\text{ subject to} \sum \pi_i=\alpha\text{ and }\pi_i\ge 0.\]

By the Karush-Kuhn-Tucker theorem, there are $\lambda\in \mathds{R}$ and $\delta_1, \ldots, \delta_m\ge 0$ such that, for each $i\le 0$,
\[\sigma'(\frac{\pi_i}{\mu_i})=\lambda-\delta_i\text{ and }\pi_i\, \delta_i=0.\]
If $\delta_i=0$ for each $i\le m$, then we obtain $\sigma'(\frac{\pi_i}{\mu_i})=\lambda$ for each $i\le m$. Consequently, since $\sigma'$ is strictly decreasing, $\pi^*_i=\frac{\mu_i}{\mu(E)}\alpha$. By way of contradiction, $\delta_i>0$ for some $i\le m$. Without loss of generality, we can have $\pi_1, \ldots, \pi_k>0$ and $\pi_{k+1}, \ldots, \pi_m=0$ for some $k<m$. Since $\delta_1, \ldots, \delta_k=0$, we have $\sigma'(\frac{\pi_i}{\mu_i})=\lambda$ for each $i\le k$. Consequently, since $\sigma'$ is strictly decreasing, $\pi^*_i=\frac{\mu_i}{\mu(A)}\alpha$ for any $i\le k$ where $A=\{s_1, \ldots, s_k\}$. Hence, $\lambda=\sigma'(\frac{\alpha}{\mu(A)})$. For each $j>k$, the FOC gives
\[\delta_j=\lambda-\sigma'(\frac{\pi_j}{\mu_j})=\sigma'(\frac{\alpha}{\mu(A)})-\sigma'(0).\]
Since $\sigma'$ is strictly decreasing, we obtain $\delta_j=\sigma'(\frac{\alpha}{\mu(A)})-\sigma'(0)<0$, a contradiction.

\subsection{Proof of Proposition \ref{prop:anything}}

 Take any minimally responsive $\rho_\mu$. Suppose that we can find $\lambda(I)>0$ such that 
\[\mu(s)=\sum_{I}\lambda(I)\rho_\mu(I)(s)\text{ for each }s\in S.\]
Then let $P(I|s):=\lambda(I)\,\frac{\rho_\mu(I)(s)}{\mu(s)}$. Note that $P(\cdot|s)$ is a probability distribution since
\[\sum_{I}P(I|s)=\sum_{I}\lambda(I)\,\frac{\rho_\mu(I)(s)}{\mu(s)}=\frac{\sum_{I}\lambda(I)\,\rho_\mu(I)(s)}{\mu(s)}=1.\]
Since $\sum_{s\in S}\rho_\mu(I)(s)=1$, we obtain $\lambda(I)=\sum_{s\in S}P(I|s)\mu(s)$. Hence, $\rho_\mu$ has a Bayesian representation. We now show that $\lambda$ exists. Let $g(I, s):=\frac{\rho_\mu(I)(s)}{\mu(s)}$. Then, we need to have
\[1=\sum_{I\in \mathcal{I}}\lambda(I)g(I, s)\text{ for each $s\in S$.}\]
By way of contradiction, suppose that the above equations do not have a solution. Then, since $|\mathcal{I}|>|S|$, the vectors of coefficients $g(\cdot, s)$ should be linearly dependent. In particular, we can find $\alpha\in \mathbb{R}^{|S|}\setminus\{\textbf{0}\}$ such that
\[\sum_{s_k\in S} \alpha_k\, g(I, s_k)=0\text{ for each }I.\]
Without loss of generality, let $\alpha_1, \ldots, \alpha_t>0>\alpha_{t+1}, \ldots, \alpha_{t+r}$ and $\alpha_{t+r+\tau}=0$ for each $\tau\ge 1$. Let $E_1=\{s_1, \ldots, s_{t}\}$ and $E_2=\{s_{t+1}, \ldots, s_{t+r}\}$. Consider beliefs after receiving $I_{E_1}$. By minimal responsiveness, 
$\rho_\mu(I_{E_1})(s)\ge \mu(s)$, i.e., $g(I_{E_1}, s)\ge 1$, for each $s\in E_1$. Similarly, by minimal responsiveness, $\rho_\mu(I_{E_1})(s)<\mu(s)$, i.e., $g(I_{E_1}, s)<1$, for each $s\in E_2$. Hence, by noting that $\alpha_{k}>0$ when $s_k\in E_1$ and $\alpha_{k}<0$ when $s_k\in E_2$,
\[0=\sum_{s_k\in S} \alpha_k \,g(I_{E_1}, s_k)>\sum_{s_k\in E_1\cup E_2} \alpha_k .\]
Let us now consider beliefs after receiving $I_{E_2}$. By minimal responsiveness, $\rho_\mu(I_{E_1})(s)<\mu(s)$, i.e., $g(I_{E_2}, s)< 1$, for each $s\in E_1$, and  $\rho_\mu(I_{E_2})(s)\ge \mu(s)$, i.e., $g(I_{E_2}, s)\ge 1$, for each $s\in E_2$. Again, by noting that $\alpha_{k}>0$ when $s_k\in E_1$ and $\alpha_{k}<0$ when $s_k\in E_2$, we obtain the following contradiction:
\[0=\sum_{s_k\in S} \alpha_k\, g(I_{E_2}, s_k)<\sum_{s_k\in E_1\cup E_2} \alpha_k.\]

\subsection{Proof of Theorem \ref{repthrm}}

\noindent\textbf{Sufficiency.} We first prove sufficiency.  Consider a family of preferences $\{\succsim_I\}_{I\in\mathscr{I}}$ that satisfies \nameref{SEU}, \nameref{COMP}, \nameref{RESP},  \nameref{IB}, \nameref{EXTM}, and \nameref{CONT}.

\bigskip
\noindent\textbf{Step 1.} Under \nameref{SEU}, there is $(u, (\mu_{I})_{I\in\mathscr{I}})$ such that for each $I\in\mathscr{I}$, the conditional preference relation $\succsim_I$  admits the SEU representation with $(u, \mu_I)$. As usual, $u$ is unique up to a positive transformation, and $\mu_I$ is unique. Let $\mu=\mu_{\Delta(S)}$.

\bigskip
\noindent\textbf{Step 2.} ($\mu_{\{\pi\}}=\pi$). Take a probability distribution $\pi\in\Delta(S)$ and let $I=\{\pi\}$ be a singleton set. By \nameref{COMP} and Step 1, for any $p, q\in \Delta(X)$ and $E\subset S$,
\begin{equation}
p\, E\, q\sim_{\{\pi\}} \pi(E)p+(1-\pi(E))q  \text{ iff }  \mu_{\pi}(E)u(p)+(1-\mu_{\pi}(E))q=\pi(E)u(p)+(1-\pi(E))u(q).
\end{equation}
By setting $u(p)\neq u(q)$, we obtain $\mu_{\pi}(E)=\pi(E)$ for each $E$. Therefore, we have $\mu_{\{\pi\}}=\pi$.

%For each $\pi\in \Delta(S)$, $\mu_\{\pi\}=\pi$ f.\bigskip

\bigskip
\noindent\textbf{Step 3.} ($\mu_I\in I$). Take $I\in\mathscr{I}$. By Step 1, $\succsim_I$ is a SEU preference with respect to some probability measure $\mu_I \in \Delta(S)$. Define $I'=\{\mu_I\}$. By Step 2, $\mu_{I'} = \mu_I$. Hence, 
$\succsim_I$ and $\succsim_{{I'}}$ are equivalent. Then, by \nameref{RESP}, $I \cap I' \neq \emptyset$. Therefore, we have $\mu_I\in I$.

\medskip

We define a mapping $C:\mathscr{I}\to \Delta(S)$ such that for any $I\in \mathscr{I}$, $C(I)=\mu_{I}$. Since $C(I)\in I$, mapping $C$ is a choice function on $\mathscr{I}$.

\bigskip
\noindent\textbf{Step 4.} $C$ satisfies Sen's $\alpha$-property. That is, for any $I_1, I'$ with $I'\subset I_1$ and $\pi\in I'$, if $\pi=C(I_1)$, then $\pi=C(I')$. \bigskip

Let $I_2=\{\pi\}$. Note that $I_2\subseteq I'\subseteq I_1$ and $\pi=C(I_1)$ implies that $\succsim_{I_1}=\succsim_{I_2}$. By \nameref{IB}, we have $\succsim_{I_1}=\succsim_{I'}$, equivalently, $\pi=C(I')$.

%\bigskip
%\noindent\textbf{Step 5.} $C$ satisfies WARP. That is, for any $I, I'$ and distinct $\pi, \pi'\in I\cap I'$, if $\pi=C(I)$, then $\pi'\neq C(I')$. \bigskip

%Take any $I, I'$ and $\pi, \pi'\in I\cap I'$ such that $C(I)=\pi$. Let $I\cap I'=I''$. By Step 4, $\pi=C(I)$ implies that $\pi=C(I\cap I')$. Moreover, if $\pi'=C(I')$, then by Step 4, we have $\pi'=C(I\cap I')$. Since $\pi=C(I\cap I')$, $\pi'\neq C(I')$.

\medskip
\noindent\textbf{Revealed Preference 1.} Define $\succsim^*$ on $\Delta(S)$ as follows: for any $\pi, \pi'\in\Delta(S)$, $\pi'\succ^* \pi$ if $\pi=C(\{\pi, \pi'\})$, and $\pi\sim^*\pi'$ if $\pi=\pi'$. Note that $\succsim^*$ is a strict preference relation.

\smallskip
\noindent\textbf{Step 5.} Since $C$ is a choice function and satisfies Sen's $\alpha$-property, $\succsim^*$ is complete and transitive preference relation. Moreover, $C(I)=\min(I, \succsim^*)$. 

\medskip
Since $\mu=\mu_{\Delta(S)}$, we have $\mu=\min(\Delta(S), \succsim^*)$.
\medskip

\smallskip
\noindent\textbf{Step 6.} Take any convex $I\in \mathscr{I}$ and $\pi\in \text{int}(I)$. Note that $\succsim_I\neq \succsim$ iff $\mu_I\neq \mu$. Since $\mu_I=\min(I, \succsim^*)$ and $\mu\equiv\min(\Delta(S), \succsim^*)$,  then $\succsim_I\neq \succsim$ iff $\mu\not\in I$. Therefore, by \nameref{EXTM}, $\mu\not\in I$ implies
$\succsim_{\{\pi\}}\neq \succsim_{I}$; equivalently, $\mu_I\neq \pi$. Therefore, $\mu_I\not\in \text{int}(I)$ when $\mu\not\in I$. In other words, $\mu\not\in I$ implies $\mu_I\in \partial(I)=I\setminus\text{int}(I)$, the boundary of $I$.

\medskip
\noindent\textbf{Revealed Preference 2.} Define $\succsim^{**}$ on $\Delta(S)$ as follows: for any $\pi, \pi'\in\Delta(S)$, $\pi\succ^{**}\pi'$ if $\pi'=C([\pi, \pi'])$, and $\pi\sim^{**}\pi'$ if $\pi=\pi'$.

\smallskip
\noindent\textbf{Step 7.} ($\pi\succ^{**} \pi'$ implies $\pi\succ^{*} \pi'$). If $\pi\succ^{**} \pi'$, i.e., $\pi'=C([\pi, \pi'])$, then by \nameref{IB}, we have  $\pi'=C(\{\pi, \pi'\})$, i.e., $\pi\succ^{*} \pi'$.\medskip

\smallskip
\noindent\textbf{Step 8.} $\succsim^{**}$ is acyclic; that is, there is no sequence $\pi^1, \ldots, \pi^m$ such that $\pi^i\succ^{**} \pi^{i+1}$ for each $i\le m-1$ and $\pi^m\succ^{**}\pi^1$.\medskip

Take any $\pi^1, \ldots, \pi^m$ such that $\pi^i\succ^{**} \pi^{i+1}$ for each $i\le m-1$. By Step 7, $\pi^i\succ^{**} \pi^{i+1}$ implies $\pi^i\succ^{*} \pi^{i+1}$. Since $\succ^*$ is transitive, we have $\pi^1\succ^* \pi^1$. Hence, $\neg (\pi^m\succ^{**}\pi^1)$.

\medskip
\noindent\textbf{Step 9.} There is a function $d:\Delta(S)\to[0, 1]$ such that $(i)$ $d(\pi)\le d(\pi')$ if $C(\pi, \pi')=\pi$, and $(ii)$ $d(\pi)< d(\pi')$ if $C([\pi, \pi'])=\pi$.
\medskip

\medskip
\noindent\textbf{Step 9.1.} Let us construct a countable subset $Z$ of $\Delta(S)$ as follows.\smallskip

Take any $k$. Since $\cup_{\pi\in \Delta(S)} B(\pi, \frac{1}{k})=\Delta(S)$, and $\Delta(S)$ is compact, there is a finite sequence $\pi^{1, k}, \ldots, \pi^{m_k, k}$ such that $\cup^{m_k}_{i=1} B(\pi^{i, k}, \frac{1}{k})=\Delta(S)$. Let $C(B(\pi^{i, k}, \frac{1}{k}))=z^{i, k}$. Note that by \nameref{EXTM} and Step 6, $z^{i, k}\neq \pi^{i, k}$, consequently, $\pi^i\succ^{**} z^{i, k}$. 

We now construct $Z_k$ for each $k$ recursively. First, when $k=1$, let $Z_1=\{z^{1, 1}, \ldots, z^{m_1, 1}\}$. Now suppose we have constructed $Z_j$, and we will construct $Z_{j+1}$. Let $Z_{j+1}=Z_j\cup \{z^{1, j+1}, \ldots, z^{m_{j+1}, {j+1}}\}\cup C_{j+1}$ where $C_{j+1}=\bigcup_{z\in Z_j} \{C(B(z, \frac{1}{j+1}))\}$. Let $Z=\bigcup^\infty_{k=1} Z_k$. Since each $Z_k$ has finite elements, $Z$ is countable.

\medskip
\noindent\textbf{Step 9.2.} For any $x, y\in \Delta(S)$ with $x\succ^{**} y$, then there is $z\in Z$ such that $x\succ^{**}z$ and $z\succ^{*} y$.\medskip

Take any $k$. Since $\cup^{m_k}_{i=1} B(\pi^{i, k}, \frac{1}{k})=\Delta(S)$, $x\in B(\pi^{i, k}, \frac{1}{k})$ for some $i$. Therefore, either $x\succ^{**} z^{i, k}=C(B(\pi^{i, k}, \frac{1}{k}))$ or $x=z^{i, k}$. When $x\succ^{**} z^{i, k}=C(B(\pi^{i, k}, \frac{1}{k}))$, let $t^k=z^{i, k}$. Note that $t^k\in Z$. When $x=z^{i, k}$, let $t^k=C(z^{i, k}, \frac{1}{k+1})$. By \nameref{EXTM}, $t^k\neq z^{i, k}$. Note that $x=z^{i, k}\succ^{**} t^{k}$ and $t^k\in Z$. Therefore, we have constructed $t^k\in Z$ such that $x\succ^{**} t^k$ and $\|x-t^k\|\le \frac{1}{k}$.  

By Weak Continuity, since $x\succ^{**} y$, there exists $\epsilon>0$ such that $t\succ^{*} y$ for any $t$ with $||t-x||<\epsilon$. Hence, whenever $\frac{1}{k}<\epsilon$, we have $x\succ^{**} t^k$ and $t^k\succ^{*} y$.

%\medskip
%\noindent\textbf{Step 2.3.} Since $Z$ is countable, there is a utility function $u:Z\to[0, 1]$ such that for any $z, z'\in Z$, 
%\[u(z)>u(z')\text{ if }z\succ^{**} z\text{ and }u(z)\ge u(z') \text{ if }z\succsim^{*} z'.\]

\medskip
\noindent\textbf{Step 9.3.} Since $Z$ is countable, there is a utility function $d:Z\to[0, 1]$ such that for any $z, z'\in Z$, 
\[d(z)> d(z') \text{ if }z\succ^{*} z'.\]

\medskip
\noindent\textbf{Step 9.4.} We now extend $d$ to $\Delta(S)$ as follows.

\medskip
For any $x\in\Delta(S)\setminus Z$, let $d(x)=\inf\{d(z)\mid x\succ^*z\text{ for any }z\in Z\}$.
\medskip

\noindent\textbf{Step 9.5.} We now show that $d:\Delta(S)\to[0, 1]$ satisfies the following property: for any $x, x'\in \Delta(S)$, 
\[d(x)<d(x')\text{ if }x\succ^{**} x'\text{ and }d(x)\le d(x') \text{ if }x\succ^{*} x'.\]

\noindent\textbf{Case 1.} Take any $x, x'$ with $x\succ^{*} x'$. If $x, x'\in Z$, then by Step 9.3, the desired condition is satisfied. If $x'\in Z$ and $x\not\in Z$, then $d(x)=\inf\{d(z)\mid x\succ^* z\text{ and }z\in Z\}\le d(x')$ since $x\succ^* x'$. If $x\in Z$ and $x'\not\in Z$, then $d(x)<d(z)$ for any $z\in Z$ with $x'\succ^* z$ since $x\succ^* x'\succ^* z$. Therefore, $d(x)\le d(x')=\inf\{d(z)\mid x'\succ^* z\text{ and }z\in Z\}$. Finally, suppose $x, y\in\Delta(S)\setminus Z$. For any $z\in Z$, by transitivity of $\succsim^*$, $x'\succ^* z$ implies $x\succ^* z$. Therefore, $\{z\in Z\mid x'\succ^* z\}\subset \{z\in Z\mid x\succ^* z\}$. Hence, $d(x)=\inf\{d(z)\mid x\succ^*z\text{ for any }z\in Z\}\le d(x')=\inf\{d(z)\mid x'\succ^*z\text{ for any }z\in Z\}$.

\medskip
\noindent\textbf{Case 2.} Take any $x, x'$ with $x\succ^{**} x'$. We shall show that $d(x)<d(x')$.

By Step 9.2, there is $z\in Z$ such that $x\succ^{**} z$ and $z\succ^{*} x'$. By applying Step 9.2 on $x$ and $z$ again, we have $z'\in Z$ such that $x\succ^{**} z'$ and $z'\succ^{*} z$. By Step 7, we have $x\succ^* z'\succ^* z\succ^* x'$. Then by Case 1 and Step 9.3, we have $d(x)\le d(z')<d(z)\le d(x')$.  

\medskip
\noindent\textbf{Step 10.} Therefore, $\mu_I=\min\big(\arg\min d(I), \succ^*\big)$.

Since $\mu_I=\min(I, \succ^*)$, $\pi\succ^* \mu$ for any $\pi\in I\setminus\{\mu_I\}$. By the construction of $d$, we have $d(\mu_I)\le d(\pi)$ for any $\pi\in I$. Hence, $\mu_I\in \arg\min d(I)$. Therefore, since $\mu_I=\min(I, \succ^*)$, $\mu_I=\min\big(\arg\min d(I), \succ^*\big)$.

\medskip
\noindent\textbf{Step 11.} $d$ is a distance function.\medskip
\medskip

For any $\pi\in\Delta(S)\setminus\{\mu\}$, since $\mu=C(\Delta(S))$ and $[\pi, \mu]\subset \Delta(S)$, we have $\pi\succ^{**} \mu$. By the construction of $d$, we have $d(\pi)>d(\mu)$ for any $\pi\in\Delta(S)\setminus\{\mu\}$.

Take any distinct $\pi, \pi'\in\Delta(S)$ with $d(\pi)=d(\pi')$. Let $I=[\pi, \pi']$. By the definitions of $\succsim^{**}$, if $\mu_I=\pi$, then we have $d(\pi')>d(\pi)$, which contradicts the assumption that $d(\pi)=d(\pi')$. Similarly, if $\mu_I=\pi'$, then we have $d(\pi)>d(\pi')$, which contradicts the assumption that $d(\pi)=d(\pi')$. Finally, suppose $\mu_I=\alpha\,\pi+(1-\alpha)\pi'$ for some $\alpha\in (0, 1)$. By Step 4, we have $\mu_I=\mu_{[\mu_I, \pi]}$ since $[\mu_I, \pi]\subseteq I$. Therefore, $d(\pi)> d(\mu_I)=d(\alpha\,\pi+(1-\alpha)\pi')$. 

\medskip
\noindent\textbf{Step 12.} $d$ is locally nonsatiated with respect to $\mu$.\medskip
\medskip

Now take any $\pi\neq \mu$ and $\epsilon>0$. If $\mu\in B(\pi, \frac{\epsilon}{2})$, then $\mu$ is the desired alternative; i.e., $\|\mu-\pi\|<\epsilon$ and $d(\mu)<d(\pi)$. If $\mu\not\in B(\pi, \frac{\epsilon}{2})$, then $\pi^*=C(B(\pi, \frac{\epsilon}{2}))$ is different from $\pi$ and $\mu$ by Extremeness. Hence, we found $\pi^*$ such that $d(\pi)>d(\pi^*)$ and $\|\pi-\pi^*\|<\epsilon$.\medskip

\bigskip
\noindent\textbf{Necessity.} Suppose the family of preference relations $\{\succsim_I\}_{I\in\mathscr{I}}$ admits the \nameref{iuseu} representation with $(u, d_\mu, \succ_{d_\mu})$ where $d_\mu$ is a weakly continuous, locally nonsatiated distance function. It is immediate that \nameref{SEU} are satisfied. We now show the necessity of the other axioms.

\medskip
\noindent \nameref{COMP}. Take any $p, q\in \Delta(S)$, $\pi\in\Delta(S)$, and $E\subset S$. Since $\mu_{\{\pi\}}=\pi$, the expected utility of $p\,E\,q$ is $\pi(E) u(p)+(1-\pi(E)) u(q)$, which is also the expected utility of $\pi(E)\,p+(1-\pi(E))\,q$. Therefore, $p\, E\, q\sim_{\pi} \pi(E)\,p+(1-\pi(E))\, q$.

\medskip
\noindent \nameref{RESP}. Take any $I, I'\in\mathscr{I}$ with $\succsim_I=\succsim_{I'}$. By the uniqueness of $\mu_I$ and $\mu_{I'}$ in SEU, $\succsim_I=\succsim_{I'}$ implies that $\mu_I=\mu_{I'}$. Since $\mu_I\in I$ and $\mu_{I'}\in I'$, we have $\mu_I\in I\cap I'$. Hence, $I\cap I'\neq \emptyset$.

\medskip
\noindent \nameref{IB}. Take any $I_1, I_2, I_3\in\mathscr{I}$ such that $I_3\subseteq I_2\subseteq I_1$ and $\succsim_{I_1}=\succsim_{I_3}$. Note that $\succsim_{I_1}=\succsim_{I_3}$ implies that $\mu_{I_1}=\min \big(\arg\min d_\mu(I_1), \succ_{d_\mu}\big)=\min \big(\arg\min d_\mu(I_3), \succ_{d_\mu}\big)=\mu_{I_3}$. Hence, $\mu_{I_1}\in I_3$. Since $I_3\subseteq I_2$, $\mu_{I_1}\in I_2$. Therefore, $\mu_{I_1}=\min\big(\arg\min d_\mu(I_2), \succ_{d_\mu}\big)$ since $\mu_{I_1}=\min\big(\arg\min d_\mu(I_1), \succ_{d_\mu})$ and $\mu_{I_1}\in I_2\subseteq I_1$. Hence, $\succsim_{I_1}=\succsim_{I_2}$.

\medskip
\noindent\nameref{EXTM}. Take any convex $I, I'\in\mathscr{I}$ with $I'\subseteq \text{int}(I)$ and $\succsim_I\neq \succsim$. Note that $\succsim_I\neq \succsim$ is equivalent to $\mu\not\in I$. We need to show that $\mu_I\neq \mu_{I'}$, which is equivalent to $\mu_I\not\in I'$ since $I'\subset I$. Hence, we shall show that $\mu_I\in I\setminus I'$. Since $I'\subseteq \text{int}(I)$, it it sufficient to show that $\mu_I\in\partial I$. 

By way of contradiction, suppose $\mu_I\in \text{int}(I)$. Then there is $\epsilon>0$ such that $B(\mu_I, \epsilon)\subset \text{int}(I)$. By local nonsatiation of $d_\mu$ and $\mu\not\in I$, there is $\mu'\in B(\mu_I, \epsilon)$ such that $d_\mu(\pi')<d_\mu(\mu_I)$, which contradicts the fact that $\mu_I=\arg\min d_\mu(I)$.

\noindent\nameref{CONT}. Take any distinct $\pi, \pi'\in\Delta$ such that $\succsim_{[\pi, \pi']}=\succsim_{\{\pi\}}$. Since $[\pi, \pi']$ is convex, we should have $d(\pi)<d(\pi')$. Since $d$ is weakly continuous, there exists $\epsilon>0$ such that 
$d(\pi)<d(\pi'')$ for any $\pi''$ with $||\pi'-\pi''||<\epsilon$. Hence, $\succsim_{\{\pi, \pi''\}}=\succsim_{\{\pi\}}$ for any $\pi''$ with $||\pi'-\pi''||<\epsilon$.

\subsection{Proof of Proposition \autoref{uniqueness}}

The uniqueness of $u$ and $\mu_I$ are standard. Since $\mu=\mu_{\Delta(S)}$, $\mu$ is also unique. To prove (iii) or $\succ_{d_\mu}=\succ'_{d'_\mu}$, take arbitrary distinct $\pi, \pi'$ with $\pi\succ_{d_\mu} \pi'$. First, $\pi\succ_{d_\mu} \pi'$ implies that $d_\mu(\pi')\le d_{\mu}(\pi)$. Hence, we have $\pi'=\min(\arg\min d_\mu(\{\pi, \pi'\}, \succ_{d_\mu})$, consequently, $\mu_{\{\pi, \pi'\}}=\pi'$. Then $\mu_{\{\pi, \pi'\}}=\pi'$ implies that $\pi'=\min(\arg\min d'_\mu(\{\pi, \pi'\}, \succ'_{d'_\mu})$, which implies that $d'_\mu(\pi')\le d'_{\mu}(\pi)$. Hence, $\pi\succ'_{d'_\mu} \pi'$. 

To prove (iv), take any $\pi, \pi'$ such that $d_\mu(\pi)> d_\mu(\pi')$. Since $d_\mu(\pi)> d_\mu(\pi')$ implies that $\mu_{\{\pi, \pi'\}}=\pi'$, we cannot have $d'_\mu(\pi)<d'_\mu(\pi')$. Hence, $d'_\mu(\pi)\ge d'_\mu(\pi')$.

Lastly, let us assume that distance functions are locally nonsatiated and continuous. We shall show that $d_\mu(\pi)\ge d_\mu(\pi')$ iff $d'_\mu(\pi)\ge d'_\mu(\pi')$ for every $\pi, \pi'$. Because of the symmetry, it is enough to show that $d_\mu(\pi)> d_\mu(\pi')$ implies $d'_\mu(\pi)>d'_\mu(\pi')$. By the previous argument, $d_\mu(\pi)> d_\mu(\pi')$ implies $d'_\mu(\pi)\ge d'_\mu(\pi')$. By way of contradiction, let us assume that $d'_\mu(\pi)=d'_\mu(\pi')$. By the local nonsatiation of $d'_\mu$, for any $\epsilon>0$, there exists $\pi_\epsilon$ such that $d'_\mu(\pi_\epsilon)<d'_\mu(\pi)=d'_\mu(\pi')$ and $||\pi_\epsilon-\pi||<\epsilon$. However, by the weak continuity of $d_\mu$, $d_\mu(\pi'')> d_\mu(\pi')$ for any $||\pi''-\pi||<\epsilon^*$. By setting $\pi''=\pi_{\epsilon}$ and $\epsilon=\frac{\epsilon^*}{2}$, we have $d_\mu(\pi'')> d_\mu(\pi')$ and $d'_\mu(\pi'')<d_\mu(\pi')$, a contradiction.

\subsection{Proof of Proposition \autoref{DCprop}}

Take $f,p,h \in \cal F$ such that $f Eh\sim pEh$ and $f Eh\sim_{I^\alpha_E} pEh$. By the \nameref{iuseu} representation:
\[f Eh\sim pEh  \text{ if and only if } \sum_{s\in E} \mu(s)u(f(s))=\mu(E)u(p)\]  
and
\[f Eh\sim_{I^\alpha_E} pEh \text{ if and only if }\sum_{s\in E} \mu_I(s)u(f(s))=\mu_I(E)u(p).\]
By \nameref{IDC}, we thus have 
\begin{equation*}
\frac{1}{\mu(E)} \sum_{s\in E} \mu(s)u(f(s))=\frac{1}{\mu_I(E)} \sum_{s\in E} \mu_I(s)u(f(s)),
\end{equation*}
which is equivalent to $\frac{\mu(s)}{\mu(E)}=\frac{\mu_I(s)}{\mu_I(E)}=\frac{\mu_I(s)}{\alpha}$
for each $s \in E$.

\subsection{Proof of Theorem \autoref{bayesdistprop}}

Without loss of generality, suppose $\mu_1=\min_{i} \mu_i$. Take any $\alpha\in [0, 1]$ and $i$. Let $I=I^\alpha_E$ where $E=\{s_1, s_i\}$. Consider the Inertial Updating with $I$:
\[\min_{\pi} d_1(\pi_1)+d_i(\pi_i)\text{ s.t. }\pi_1+\pi_i=\alpha.\]
The first order condition gives $d'_i(\pi_i)=d'_1(\pi_1)$. The second order condition is satisfied since $d''_i>0$. Since Informational Dynamic Consistency is satisfied, we have $\pi_i=\alpha\,\frac{\mu_i}{\mu_i+\mu_1}$. Then $d'_i(\frac{\mu_i}{\mu_i+\mu_1}\,\alpha)=d'_1(\frac{\mu_1}{\mu_1+\mu_i}\,\alpha)$. Equivalently, \[d'_i(\pi)=d'_1(\frac{\mu_1}{\mu_i}\,\pi)\text{ for any }\pi\in [0, \frac{\mu_i}{\mu_i+\mu_1}].\]
Note that 
\[d_i(\pi)-d_i(0)=\int^\pi_0 d'_i(\tilde{\pi}) d\tilde{\pi}=\int^\pi_{0}d'_1(\frac{\mu_1}{\mu_i}\tilde{\pi}) d\tilde{\pi}=\frac{\mu_i}{\mu_1}\,\int^{\frac{\mu_1}{\mu_i}\pi}_{0}d'_1(\bar{\pi}) d\bar{\pi}=\frac{\mu_i}{\mu_1}\big(d_1(\frac{\mu_1}{\mu_i}\pi)-d_1(0)\big).\] Therefore, $d_i(\pi)=\frac{\mu_i}{\mu_1}\,d_1(\frac{\mu_1}{\mu_i}\pi)+c_i$ where $c_i=d_i(0)-\frac{\mu_i}{\mu_1}\,d_1(0)$. Then, the distance function is given by
\[d_\mu(\pi)=\sum^n_{i=1} d_i(\pi_i)=\sum^n_{i=1} \frac{\mu_i}{\mu_1}\,d_1(\frac{\mu_1}{\mu_i}\pi_i)+\sum^n_{i=1}\,c_i\text{ where }\pi_i\in [0, \frac{1}{2}].\]
Let $f(t)=\frac{d_1(\mu_1\,t)}{\mu_1}+\frac{\sum c_i}{n}$. Then we have $d_\mu(\pi)=\sum^n_{i=1} \mu_i\,f(\frac{\pi_i}{\mu_i})$ whenever $\pi_i\le \frac{\mu_i}{\mu_i+\mu_1}$.

\subsection{Proof of Proposition \ref{prop:interval}}

Take any \nameref{iuseu} representation with a Bayesian divergence function $d_\mu(\pi)=-\sum_i \mu_i\,\sigma(\frac{\pi_i}{\mu_i})$. Let $I=\{\pi\mid \alpha\le \pi(E)\le \beta\}$ and $\gamma=\mu_I(E)$. Since $\mu_{I^\gamma_E}\in I^\gamma_E=I\cap I^\gamma_E$ and $\gamma=\mu_I(E)$, we should have $\mu_I=\mu_{I^\gamma_E}$. Moreover, $\gamma\in [\alpha, \beta]$ minimizes
\[d_\mu(\mu_{I^{x}_E})=-\mu(E)\, \sigma\big(\frac{x}{\mu(E)}\big)-(1-\mu(E))\, \sigma\big(\frac{1-x}{1-\mu(E)}\big),\]
which is decreasing when $x\le \mu(E)$ and is increasing when $x\ge \mu(E)$. Hence, $\gamma=\arg\min_{x\in [\alpha, \beta]}|x-\mu(E)|$.

\subsection{Proof of Theorem \ref{prop:renyi}}

\noindent\textbf{Step 1.} If $d(\mu_{I_1})<d(\mu_{I_2})$, then $I_1\succapprox I_2$ and $I_2\not\succapprox I_1$.\medskip 

Since $d(\mu_{I_1\cup I_2})=\min\{d_\mu(\mu_{I_1}); d_\mu(\mu_{I_2})\}$ for any $I_1$ and $I_2$, the above are implied by the model.

\bigskip
\noindent\textbf{Step 2.} Let
$I=I^{\alpha_1}_{E_1}\cap I^{\alpha_2}_{E_2}$ for disjoint $E_1$ and $E_2$. Then, since $\mu_I(s)=\frac{\mu(s)}{\mu(E_i)}\,\alpha_i$ for any $s\in E_i$,

\[d_\mu(\mu_I)=\mu(E_1)\,\sigma\big(\frac{\alpha_1}{\mu(E_1)}\big)+\mu(E_2)\sigma\big(\frac{\alpha_2}{\mu(E_2)}\big)+\big(1-\mu(E_1\cup E_2)\big)\sigma\Big(\frac{1-\alpha_1-\alpha_2}{1-\mu(E_1\cup E_2)}\Big).\]

\bigskip
\noindent\textbf{Step 3.} Take any $\alpha, \lambda, x_i, y_i\in (0, 1)$ with $x_1+x_2, y_1+y_2<1$. Let us show that 
\[\lambda\,\sigma\big(\frac{x_1}{\lambda}\big)+(1-\lambda)\,\sigma\big(\frac{x_2}{1-\lambda}\big)= \lambda\,\sigma\big(\frac{y_1}{\lambda}\big)+(1-\lambda)\,\sigma\big(\frac{y_2}{1-\lambda}\big)\]
iff
\[\lambda\,\sigma\big(\alpha\frac{x_1}{\lambda}\big)+(1-\lambda)\,\sigma\big(\alpha\frac{x_2}{1-\lambda}\big)= \lambda\,\sigma\big(\alpha \frac{y_1}{\lambda}\big)+(1-\lambda)\,\sigma\big(\alpha \frac{y_2}{1-\lambda}\big).\]
To obtain a contradiction, suppose, without loss of generality,
\[\lambda\,\sigma\big(\frac{x_1}{\lambda}\big)+(1-\lambda)\,\sigma\big(\frac{x_2}{1-\lambda}\big)= \lambda\,\sigma\big(\frac{y_1}{\lambda}\big)+(1-\lambda)\,\sigma\big(\frac{y_2}{1-\lambda}\big)\]
and
\[\lambda\,\sigma\big(\alpha\frac{x_1}{\lambda}\big)+(1-\lambda)\,\sigma\big(\alpha\frac{x_2}{1-\lambda}\big)> \lambda\,\sigma\big(\alpha \frac{y_1}{\lambda}\big)+(1-\lambda)\,\sigma\big(\alpha \frac{y_2}{1-\lambda}\big).\]
Since $\sigma$ is continuous, we can find $x'_2$ such that
\[\lambda\,\sigma\big(\frac{x_1}{\lambda}\big)+(1-\lambda)\,\sigma\big(\frac{x'_2}{1-\lambda}\big)< \lambda\,\sigma\big(\frac{y_1}{\lambda_1}\big)+(1-\lambda)\,\sigma\big(\frac{y_2}{1-\lambda}\big)\]
and
\[\lambda\,\sigma\big(\alpha\frac{x_1}{\lambda}\big)+(1-\lambda)\,\sigma\big(\alpha\frac{x'_2}{1-\lambda}\big)> \lambda\,\sigma\big(\alpha \frac{y_1}{\lambda}\big)+(1-\lambda)\,\sigma\big(\alpha \frac{y_2}{1-\lambda}\big).\]
Note that $\lim_{\epsilon\to 0}\epsilon\, \sigma(\frac{x}{\epsilon})=\sigma'(0)$. Hence, when $\epsilon>0$ is small enough, we have
\[(1-\epsilon)\lambda\,\sigma\big(\frac{(1-\epsilon)x_1}{(1-\epsilon)\lambda}\big)+(1-\epsilon)(1-\lambda)\,\sigma\big(\frac{(1-\epsilon)x'_2}{(1-\epsilon)(1-\lambda)}\big)+\epsilon\, \sigma\big(\frac{1-(1-\epsilon)(x_1+x'_2)}{\epsilon}\big)\]
\[<(1-\epsilon)\lambda\,\sigma\big(\frac{(1-\epsilon)y_1}{(1-\epsilon)\lambda}\big)+(1-\epsilon)(1-\lambda)\,\sigma\big(\frac{(1-\epsilon)y_2}{(1-\epsilon)(1-\lambda)}\big)+\epsilon\, \sigma\big(\frac{1-(1-\epsilon)(y_1+y_2)}{\epsilon}\big).\]
and
\[(1-\epsilon)\lambda\,\sigma\big(\alpha\frac{(1-\epsilon)x_1}{(1-\epsilon)\lambda}\big)+(1-\epsilon)(1-\lambda)\,\sigma\big(\alpha\frac{(1-\epsilon)x'_2}{(1-\epsilon)(1-\lambda)}\big)+\epsilon\, \sigma\big(\frac{1-(1-\epsilon)(x_1+x'_2)\alpha}{\epsilon}\big)\] 
\[>(1-\epsilon)\lambda\,\sigma\big(\alpha \frac{(1-\epsilon)y_1}{(1-\epsilon)\lambda}\big)+(1-\epsilon)(1-\lambda)\,\sigma\big(\alpha \frac{(1-\epsilon)y_2}{(1-\epsilon)(1-\lambda)}\big)+\epsilon\, \sigma\big(\frac{1-(1-\epsilon)(y_1+y_2)\alpha}{\epsilon}\big).\]

Take $\mu$ and disjoint events $E_1$ and $E_2$ such that $\mu(E_1)=(1-\epsilon)\lambda$ and $\mu(E_2)=(1-\epsilon)(1-\lambda)$. Let $\alpha_1=(1-\epsilon)x_1$, $\alpha_2=(1-\epsilon)x_2$, $\beta_1=(1-\epsilon)y_1$, and $\beta_2=(1-\epsilon)y_2$. Moreover, let $I_1=I^{\alpha_1}_{E_1}\cap I^{\alpha_2}_{E_2}$, $I_2=I^{\beta_1}_{E_1}\cap I^{\beta_2}_{E_2}$, $I_3=I^{\alpha\alpha_1}_{E_1}\cap I^{\alpha\alpha_2}_{E_2}$, and $I_4=I^{\alpha\beta_1}_{E_1}\cap I^{\alpha\beta_2}_{E_2}$. Then by Step 2, the above inequalities are equivalent to

\[d_\mu(\mu_{I_1})=\mu(E_1)\sigma\big(\frac{\alpha_1}{\mu(E_1)}\big)+\mu(E_2)\sigma\big(\frac{\alpha_2}{\mu(E_2)}\big)+\big(1-\mu(E_1\cup E_2)\big)\sigma\big(\frac{1-\alpha_1-\alpha_2}{(1-\mu(E_1\cup E_2)}\big)\]
\[<d_\mu(\mu_{I_2})=\mu(E_1)\sigma\big(\frac{\alpha_1}{\mu(E_1)}\big)+\mu(E_2)\sigma\big(\frac{\alpha_2}{\mu(E_2)}\big)+\big(1-\mu(E_1\cup E_2)\big)\sigma\big(\frac{1-\alpha_1-\alpha_2}{(1-\mu(E_1\cup E_2)}\big)\]
and 
\[d_\mu(\mu_{I_3})=\mu(E_1)\sigma\big(\frac{\alpha\,\alpha_1}{\mu(E_1)}\big)+\mu(E_2)\sigma\big(\frac{\alpha\,\alpha_2}{\mu(E_2)}\big)+\big(1-\mu(E_1\cup E_2)\big)\sigma\big(\frac{1-\alpha\,(\alpha_1+\alpha_2)}{(1-\mu(E_1\cup E_2)}\big)\]
\[>d_\mu(\mu_{I_4})=\mu(E_1)\sigma\big(\frac{\alpha\,\alpha_1}{\mu(E_1)}\big)+\mu(E_2)\sigma\big(\frac{\alpha\,\alpha_2}{\mu(E_2)}\big)+\big(1-\mu(E_1\cup E_2)\big)\sigma\big(\frac{1-\alpha\,(\alpha_1+\alpha_2)}{(1-\mu(E_1\cup E_2)}\big).\]
Then by Step 1, 
\[I^{\alpha_1}_{E_1}\cap I^{\alpha_2}_{E_2}\succapprox I^{\beta_1}_{E_1}\cap I^{\beta_2}_{E_2}\text{ and } I^{\alpha\alpha_1}_{E_1}\cap I^{\alpha\alpha_2}_{E_2}\not\succapprox I^{\alpha\beta_1}_{E_1}\cap I^{\alpha\beta_2}_{E_2}.\]

\bigskip
\noindent\textbf{Step 4.} $\sigma=x^\gamma-1$ for some $\gamma<1$ or $\sigma=\zeta\, ln$ for some $\zeta>0$.\bigskip

Let us recall the equivalence in Step 3:
\[\lambda\,\sigma(\frac{x_1}{\lambda})+(1-\lambda)\,\sigma(\frac{x_2}{1-\lambda})= \lambda\,\sigma(\frac{y_1}{\lambda})+(1-\lambda)\,\sigma(\frac{y_2}{1-\lambda})\]
iff
\[\lambda\,\sigma(\alpha\frac{x_1}{\lambda})+(1-\lambda)\,\sigma(\alpha\frac{x_2}{1-\lambda})= \lambda\,\sigma(\alpha \frac{y_1}{\lambda})+(1-\lambda)\,\sigma(\alpha \frac{y_2}{1-\lambda}).\]
From the first equality, we obtain
\[\frac{y_2}{1-\lambda}=
\sigma^{-1}\Big(\frac{\lambda}{1-\lambda}\,\sigma(\frac{x_1}{\lambda})+\sigma(\frac{x_2}{1-\lambda})- \frac{\lambda}{1-\lambda}\sigma(\frac{y_1}{\lambda})\Big).\]
From the second equality, we obtain 
\[\alpha\,\frac{y_2}{1-\lambda}=
\sigma^{-1}\Big(\frac{\lambda}{1-\lambda}\,\sigma(\alpha\frac{x_1}{\lambda})+\sigma(\alpha\frac{x_2}{1-\lambda})-\frac{\lambda}{1-\lambda}\,\sigma(\alpha\frac{y_1}{\lambda})\Big).\]
Hence, we obtain 
\[
\sigma\Big(\alpha\,\sigma^{-1}\Big(\frac{\lambda}{1-\lambda}\,\sigma(\frac{x_1}{\lambda})+\,\sigma(\frac{x_2}{1-\lambda})- \frac{\lambda}{1-\lambda}\,\sigma(\frac{y_1}{\lambda})\Big)\Big)=
\frac{\lambda}{1-\lambda}\,\sigma(\alpha\frac{x_1}{\lambda})+\sigma(\alpha\frac{x_2}{1-\lambda})- \frac{\lambda}{1-\lambda}\,\sigma(\alpha\frac{y_1}{\lambda}),\]
which holds for any $\alpha, x_1, x_2, \lambda, y_1\in (0, 1)$ with $x_1+x_2<1$. Let $f_\alpha(x)=\sigma(\alpha\, \sigma^{-1}(x))-\sigma(\alpha)$, $\sigma(\frac{x_1}{\lambda})=t_1, \sigma(\frac{x_2}{1-\lambda})=t_2$, and $\sigma(\frac{y_1}{1-\lambda})=t_3$. Since we normalize $\sigma$ so that $\sigma(1)=0$, we have $f_\alpha(0)=0.$ Then the above equality is equivalent to
\[f_\alpha(\delta\, t_1+t_2-\delta t_3)=\delta f_\alpha(t_1)+f_\alpha(t_2)-\delta\, f_\alpha(t_3),\]
where $\delta=\frac{\lambda}{1-\lambda}$. Let $\delta=1$ and $t_3=0$. Then we obtain $f_\alpha(t_1+t_2)=f_\alpha(t_1)+f_\alpha(t_2)$, a typical Cauchy functional equation. Hence, $f_\alpha(t)=c(\alpha)\, x$ for some $c(\alpha)$. In other words, 
\[\sigma(\alpha\, \sigma^{-1}(x))-\sigma(\alpha)=c(\alpha)\, x.\]
Let $\sigma^{-1}(x)=\beta$. Then
\[\sigma(\alpha\,\beta)=c(\alpha)\, \sigma(\beta)+\sigma(\alpha).\]
When $\beta<1$, by reversing the roles of $\alpha, \beta$, we obtain  
\[\sigma(\alpha\,\beta)=c(\alpha)\, \sigma(\beta)+\sigma(\alpha)=c(\beta)\, \sigma(\alpha)+\sigma(\beta).\]
Hence, 
\[\frac{c(\alpha)-1}{\sigma(\alpha)}=\frac{c(\beta)-1}{\sigma(\beta)}\text{ for any }\alpha, \beta<1.\]
Hence,
$c(\alpha)=k\, \sigma(\alpha)+1$ for some constant $k$. Hence,
\[\sigma(\alpha \beta)=\sigma(\alpha)+\sigma(\beta)+k\,\sigma(\alpha)\,\sigma(\beta).\]
If $k=0$. We then obtain $\sigma(\alpha\,\beta)=\sigma(\alpha)+\sigma(\beta)$. By setting, $h(t)=\sigma(\exp(t))$. We obtain, $h(t_1+t_2)=\sigma(\exp(t_1)\exp(t_2))=\sigma(\exp(t_1))+\sigma(\exp(t_2))=h(t_1)+h(t_2),$ another Cauchy functional equation. Hence, there is $\zeta$ such that $h(t)=\zeta t=\sigma(\exp(t))$. Hence, $\sigma(x)=\zeta \ln(x).$

Suppose now $k\neq 0$. Then, we rewrite the equality as follows:
\[\sigma(\alpha \beta)+\frac{1}{k}=\frac{1}{k}+\sigma(\alpha)+\sigma(\beta)+k\,\sigma(\alpha)\,\sigma(\beta)=\frac{1}{k}(\sigma(\alpha)+\frac{1}{k})(\sigma(\beta)+\frac{1}{k}).\]
Let $F(x)=\frac{\sigma(x)+\frac{1}{k}}{k}$. Then we have $F(\alpha\,\beta)=F(\alpha)\,F(\beta)$. Note that $F>0$. Let $m(x)=\log(F(\exp(x)))$. Then we obtain  $m(x+y)=\log (F(\exp(x)\exp(y)))=\log (F(\exp(x))F(\exp(y)))=\log (F(\exp(x))+\log (F(\exp(y)))=m(x)+,(y),$ another Cauchy functional equation. Hence, there is $\gamma$ such that $m(x)=\gamma x=\log(F(\exp(x))$. Hence, $F(x)= x^\gamma.$ Hence, $\sigma(x)=k x^\gamma-\frac{1}{k}.$ Since $\sigma(1)=0$, $\sigma(x)=x^\gamma-1$.

\subsection{Proof of Proposition \ref{prop:kl}}

Take any disjoint events $A$ and $B$ with $\mu(A)<\mu(B)$, and let $I=\{\pi\in\Delta(S)\mid \pi(A)\ge \pi(B)\}$. \nameref{IIE} implies that beliefs on $(A\cup B)^c$ are not affected by the information set $I$. Hence, $\mu(A\cup B)=\mu_I(A\cup B)$. Let $\mu(A)=a$ and $\mu(B)=b$ with $a<b$, and $\mu_I(A)=\alpha$ and $\mu_I(B)=\beta$ with $\alpha\ge \beta$.  Then we have
\[d_\mu(\mu_I)=a\,\sigma(\frac{\alpha}{a})+b\,\sigma(\frac{\beta}{b})+(1-a-b)\,\sigma(\frac{1-\alpha-\beta}{1-a-b}).\]
Hence, $\mu_I$ solves
\[\min_{y, z\ge 0} a\,\sigma(\frac{y+z}{a})+b\,\sigma(\frac{y}{b})+(1-a-b)\,\sigma(\frac{1-2y-z}{1-a-b}).\]
If the above problem has inner solutions, then the FOCs for $z$ and $y$ should give
\[\sigma'(\frac{y+z}{a})=\sigma'(\frac{1-2y-z}{1-a-b})\]
and 
\[\sigma'(\frac{y+z}{a})+\sigma'(\frac{y}{b})=2\,\sigma'(\frac{1-2y-z}{1-a-b}),\]
respectively. Consequently, we have
\[\frac{y+z}{a}=\frac{y}{b}=\frac{1-2y-z}{1-a-b},\]
which cannot hold since $\frac{y+z}{a}>\frac{y}{b}$. Hence, we have $z=0$. Then by the FOC for $y$,
\[\sigma'(\frac{y}{a})+\sigma'(\frac{y}{b})=2\,\sigma'(\frac{1-2y}{1-a-b}).\]
By Proposition 6, we can assume that $\sigma(x)=\frac{x^\gamma-1}{\gamma}$ for some $\gamma<1$. Then $\sigma'(x)=x^{\gamma-1}$. Since by \nameref{IIE}, $y=\frac{a+b}{2}$ should solve the above FOC. Hence, 
\[\sigma'(\frac{y}{a})+\sigma'(\frac{y}{b})=(\frac{a}{y})^{1-\gamma}+(\frac{b}{y})^{1-\gamma}=\frac{a^{1-\gamma}+b^{1-\gamma}}{y^{1-\gamma}}=2=2\,\sigma'(\frac{1-2y}{1-a-b});\]
equivalently, $\frac{a^{1-\gamma}+b^{1-\gamma}}{2}=y^{1-\gamma}=(\frac{a+b}{2})^{1-\gamma}$. Hence, we obtain $\gamma=0$; i.e., $\sigma=\ln.$

%To be added.

\subsection{Proof of Proposition \ref{prop:disagree}}

Take any two Bayesian divergences, $d_\mu=-\sum_{i\in S} \mu_i\,\sigma(\frac{\pi}{\mu_i})$ and $d'_\mu=-\sum_{i\in S} \mu_i\,\delta(\frac{\pi}{\mu_i})$. Suppose that $\mu_I=\mu'_I$ for any $I$, where $\mu_I$ and $\mu'_I$ are posteriors resulting from \nameref{iuseu} representations with $d_\mu$ and $d'_{\mu}$ respectively. Without loss of generality, let us assume $\sigma(1)=\delta(1)=1$.

When $I=\{\pi, \pi'\}$, the above assumption implies that $d_{\mu}(\pi)<d_\mu(\pi')$ iff $d'_{\mu}(\pi)<d'_{\mu}(\pi')$. Consequently, we have $d_{\mu}(\pi)=d_{\mu}(\pi')$ iff $d'_\mu(\pi)=d'_{\mu}(\pi')$.

Take $\pi, \pi'$ such that $\pi_i=\pi'_i$ for all $i\ge 3$ and $\pi_1\ge \pi'_1$. Then

\[\mu_1\,\sigma(\frac{\pi_1}{\mu_1})+\mu_2\,\sigma(\frac{\pi_2}{\mu_2})=\mu_1\,\sigma(\frac{\pi'_1}{\mu_1})+\mu_2\,\sigma(\frac{\pi'_2}{\mu_2})\text{ iff }\mu_1\,\delta(\frac{\pi_1}{\mu_1})+\mu_2\,\delta(\frac{\pi_2}{\mu_2})=\mu_1\,\delta(\frac{\pi'_1}{\mu_1})+\mu_2\,\delta(\frac{\pi'_2}{\mu_2}).\]
By deriving $\frac{\pi'_2}{\mu_2}$ from the above two equations, we obtain
\[\frac{\pi'_2}{\mu_2}=
\sigma^{-1}\Big(\frac{\mu_1}{\mu_2}\,\sigma(\frac{\pi_1}{\mu_1})+\sigma(\frac{\pi_2}{\mu_2})-\frac{\mu_1}{\mu_2}\,\sigma(\frac{\pi'_1}{\mu_1})\Big)=\delta^{-1}\Big(\frac{\mu_1}{\mu_2}\,\delta(\frac{\pi_1}{\mu_1})+\delta(\frac{\pi_2}{\mu_2})-\frac{\mu_1}{\mu_2}\,\delta(\frac{\pi'_1}{\mu_1})\Big).
\]
By relabeling $x=\frac{\pi_1}{\mu_1}$, $y=\frac{\pi_2}{\mu_2}$, and $z=\frac{\pi'_1}{\mu_1}$. We obtain
\[
\sigma^{-1}\Big(\frac{\mu_1}{\mu_2}\,\sigma(x)+\sigma(y)-\frac{\mu_1}{\mu_2}\,\sigma(z)\Big)=\delta^{-1}\Big(\frac{\mu_1}{\mu_2}\,\delta(x)+\delta(y)-\frac{\mu_1}{\mu_2}\,\delta(z)\Big).
\]
for any $x, y, z$ such that $\mu_1 x+\mu_2 y\le 1$ and $x\ge z$. Let $f=\sigma(\delta^{-1})$. Then we have 
\[\frac{\mu_1}{\mu_2}f(t_1)+f(t_2)-\frac{\mu_1}{\mu_2}f(t_3)=f\big(\frac{\mu_1}{\mu_2}\,t_1+t_2-\frac{\mu_1}{\mu_2}\,t_3\big).\]
Then for any $\epsilon<t_1$, we also have
\[\frac{\mu_1}{\mu_2}f(t_1-\epsilon)+f(t_2)-\frac{\mu_1}{\mu_2}f(t_3-\epsilon)=f\big(\frac{\mu_1}{\mu_2}\,(t_1-\epsilon)+t_2-\frac{\mu_1}{\mu_2}\,(t_3-\epsilon)\big).\]
Hence, 
\[f(t_1)-f(t_3)=f(t_1-\epsilon)-f(t_3-\epsilon).\]
Hence, we should have $f(t)=\alpha\, t+\beta$ for any $t< \delta(\frac{1}{\mu_1})$. In other words, $\sigma(t)=\alpha\,\delta(t)+\beta$ for any $t<\frac{1}{\mu_1}$. Repeating the above for each state, we obtain $\sigma(t)=\alpha\,\delta(t)+\beta$ for any $t<\frac{1}{\mu_i}$. Hence, $d_{\mu}(\pi)=\alpha\,d'_{\mu}(\pi)+\beta$ for any $\pi\in\Delta(S)$.

\subsection{Proof of Proposition \autoref{compstat}}

Let $\gamma^k=\mu^k_I(A)$. By Proposition 1, it is enough to show that $\gamma^1<\gamma^2$. By Proposition 1, $\gamma^k$ should minimize
\[\mu(A)\sigma^k\,\big(\frac{\gamma^k}{\mu(A)}\big)+\mu(B)\,\sigma^k\big(\frac{\alpha-\gamma^k}{\mu(B)}\big)+\mu(C)\,\sigma^k\big(\frac{\beta-\gamma^k}{\mu(C)}\big).\]
The FOC for each $k\in\{1, 2\}$ gives
\[(\sigma^k)'\big(\frac{\gamma^k}{\mu(A)}\big)=(\sigma^k)'\big(\frac{\alpha-\gamma^k}{\mu(B)}\big)+(\sigma^k)'\big(\frac{\beta-\gamma^k}{\mu(C)}\big).\]
Since $(\sigma^1)'$ is decreasing and $(\sigma^1)'\big(\frac{\gamma^1}{\mu(A)}\big)=(\sigma^1)'\big(\frac{\alpha-\gamma^1}{\mu(B)}\big)+(\sigma^1)'\big(\frac{\beta-\gamma^1}{\mu(C)}\big)$,
it is enough to show that \[(\sigma^2)'\big(\frac{\gamma^1}{\mu(A)}\big)>(\sigma^2)'\big(\frac{\alpha-\gamma^1}{\mu(B)}\big)+(\sigma^2)'\big(\frac{\beta-\gamma^1}{\mu(C)}\big).\]
Since $\sigma^k$ is strictly increasing, the FOC $(\sigma^1)'\big(\frac{\gamma^1}{\mu(A)}\big)=(\sigma^1)'\big(\frac{\alpha-\gamma^1}{\mu(B)}\big)+(\sigma^1)'\big(\frac{\beta-\gamma^1}{\mu(C)}\big)$
implies
\[(\sigma^1)'\big(\frac{\gamma^1}{\mu(A)}\big)>(\sigma^1)'\big(\frac{\alpha-\gamma^1}{\mu(B)}\big).\]
Since $(\sigma^1)'$ is strictly decreasing, the above inequality implies $\frac{\gamma^1}{\mu(A)}<\frac{\alpha-\gamma^1}{\mu(B)}$. Similarly, we obtain $\frac{\gamma^1}{\mu(A)}<\frac{\beta-\gamma^1}{\mu(C)}$. Since $h'$ is strictly decreasing and $\sigma^1$ is strictly increasing, we have 
\[h'\big(\sigma^1(\frac{\gamma^1}{\mu(A)})\big)>h'\big(\sigma^1(\frac{\alpha-\gamma^1}{\mu(B)})\big) \text{ and }h'\big(\sigma^1(\frac{\gamma^1}{\mu(A)})\big)>h'\big(\sigma^1(\frac{\beta-\gamma^1}{\mu(C)})\big).\] Then from the FOC for $k=1$, we obtain 
\[h'\big(\sigma^1(\frac{\gamma^1}{\mu(A)})\big)(\sigma^1)'\big(\frac{\gamma^1}{\mu(A)}\big)>h'\big(\sigma^1(\frac{\alpha-\gamma^1}{\mu(B)})\big)(\sigma^1)'\big(\frac{\alpha-\gamma^1}{\mu(B)}\big)+h'\big(\sigma^1(\frac{\beta-\gamma^1}{\mu(C)})\big)(\sigma^1)'\big(\frac{\beta-\gamma^1}{\mu(C)}\big).\]

Since $(\sigma^2(x))'=h'(\sigma^1(x))(\sigma^1)'(x)$, the above is equivalent to the desired inequality.

\subsection{Proof of Proposition \autoref{eq_info}}

 We use five steps to prove the proposition.

\medskip
\noindent\textbf{Step 1.} Any information set $I=\{\pi^i, \nu\}$ with $\mu_{I}=\nu\neq \pi^i$ is not an equilibrium.\medskip

In this case, Sender $i$ has an incentive to deviate and claim $\nu$ instead of $\pi^i$ since it doesn't change Receiver's optimal action and has a lower reputational cost. 

\medskip
\noindent\textbf{Step 2.} The information set $\{\nu, \nu\}$ is not an equilibrium.\medskip

Without loss of generality, there are $s, t\in S$ such that $\frac{\mu_s}{\nu_s}<\frac{\mu_t}{\nu_t}$. Consider $\pi^i$ such that $\pi^i_s=\nu_s+\epsilon$ and $\pi^i_t=\nu_t-\epsilon$. When $\epsilon>0$ is small enough,
\[d^{KL}_{\mu}(\pi^i)-d^{KL}_{\mu}(\nu)=\mu_s\log(1+\frac{\epsilon}{\nu_s})+\mu_t\log(1-\frac{\epsilon}{\nu_t})=\epsilon(\frac{\mu_s}{\nu_s}-\frac{\mu_t}{\nu_t})+O(\epsilon^2)<0.\]
Hence, when $\epsilon$ is small enough, $\mu_{I}=\pi^i$ when $I=\{\pi^i, \nu\}$. Let us calculate the marginal utility deviating from $(\nu, \nu)$ to $(\pi^i, \nu)$:
\[V(\pi^i, \nu)-V(\nu, \nu)=(-1)^{i+1}\mathbb{E}_{\pi^i}[s']-C_i(\pi^i, \nu)-(-1)^{i+1}\mathbb{E}_{\nu}[s']=(-1)^{i+1}\epsilon(s-t)-2\,c_i\,\epsilon^2.\]
Hence, we have $V(\pi^i, \nu)>V(\nu, \nu)$ when $\epsilon$ is small enough and $(-1)^{i+1}\,(s-t)>0.$ Hence, Sender $i$ with $(-1)^{i+1}\,(s-t)>0$ has an incentive to deviate from $(\nu, \nu)$.  

\medskip
\noindent\textbf{Step 3.} Any information set $I=\{\pi^i, \pi^j\}$ with $\nu\not\in I$ is not an equilibrium. \medskip

Without loss of generality, $\mu_I=\pi^i$. Hence, we have $d^{KL}_\mu(\pi^i)\le d^{KL}_\mu(\pi^j)$. We consider the following four steps.

\medskip
\noindent\textbf{Case 1.} $d^{KL}_\mu(\nu)>d^{KL}_\mu(\pi^i)$. 
\medskip

In this case, Sender $j$ has an incentive to claim $\nu$ instead of $\pi^j$ because she pays a lower reputational cost at $\nu$ and switching from $\pi^j$ to $\nu$ does not change Receiver's optimal action $\mu_{I'}=\mu_I$ when $I'=\{\pi^i, \nu\}$.

\medskip
\noindent\textbf{Case 2.} $d^{KL}_\mu(\nu)<d^{KL}_\mu(\pi^i)$. 
\medskip

Since Senders $i$ and $j$ have opposing preferences, changing Receiver's posterior $\mu_I$ to $\nu$ is weakly beneficial to one of them. If it is weakly beneficial to Sender $i$, she has an incentive to claim $\nu$ instead of $\pi^i$ because it leads to a new Receiver's posterior $\nu$ and a lower reputational cost. Similarly, it is weakly beneficial to Sender $j$, she has an incentive to claim $\nu$ instead of $\pi^j$ because it leads to a new Receiver's posterior $\nu$ and a lower reputational cost.  

\medskip
\noindent\textbf{Case 3.} $d^{KL}_\mu(\nu)=d^{KL}_\mu(\pi^i)\le d^{KL}_\mu(\pi^j)$. 
\medskip

Again, since senders $i$ and $j$ have opposing preferences, changing Receiver's posterior $\mu_I$ to $\nu$ is weakly beneficial to one of them. If it is weakly beneficial to Sender $j$, she has an incentive to claim $\nu$ instead of $\pi^j$ because claiming $\nu$ has a lower reputational cost and Receiver's new posterior is either $\pi^i$ or $\nu$, in both cases are weakly better for Sender $j$.

 If it is weakly beneficial to Sender $i$, i.e., $(-1)^{i+1}\mathbb{E}_{\nu}[s]\ge (-1)^{i+1}\mathbb{E}_{\mu_{I}}[s]$, then she has an incentive to claim $\alpha\, \nu+(1-\alpha)\,\pi^i$ instead of $\pi^i$ when $\alpha\in (0, 1)$. Since $-\log(x)$ is strictly convex and $\mu, \nu$ have full supports, we have $d^{KL}_\mu(\alpha \nu+(1-\alpha)\pi^i)<d^{KL}_\mu(\nu)=d^{KL}_\mu(\pi^i)$. Hence, $\mu_{I'}=\alpha \nu+(1-\alpha)\pi^i$ where $I'=\{\alpha \nu+(1-\alpha)\pi^i, \pi^j\}$. Moreover, we have 
 \[(-1)^{i+1}\mathbb{E}_{\mu_{I'}}[s]=\alpha\, (-1)^{i+1}\mathbb{E}_{\nu}[s]+(1-\alpha)\,(-1)^{i+1}\mathbb{E}_{\mu_{I}}[s]\ge(-1)^{i+1}\mathbb{E}_{\mu_{I}}[s]\] since $(-1)^{i+1}\mathbb{E}_{\nu}[s]\ge (-1)^{i+1}\mathbb{E}_{\mu_{I}}[s]$. Hence, Receiver's action under information set $I'$ is better for Sender $i$. Finally, note that claiming $\alpha\, \nu+(1-\alpha)\,\pi^i$ has a lower reputational cost
since 
\[C_i(\alpha\, \nu+(1-\alpha)\,\pi^i, \nu)=(1-\alpha)^2 C_i(\pi^i, \nu)<C_i(\pi^i, \nu).\]

\medskip
\noindent\textbf{Step 4.} If $I=\{\pi^{i}, \pi^{j}\}$ with $\mu_I=\pi^{i}$ is an equilibrium, then $\pi^{j}=\nu$ and $\pi^{i}$ solves 
\begin{equation}\label{obj}
\max (-1)^{i+1}\,\mathbb{E}_\pi[s]-c_i\,\big(\sum_{s\in S} (\pi_s-\nu_s)^2\big)\end{equation} 
\[\text{ subject to }\sum \mu_s\log\big(\frac{\pi_s}{\nu_s}\big)\ge 0\text{ and }\sum_{s} \pi_s=1.\]

\bigskip
\noindent\textbf{Step 4.1.} By Steps 1-3, $I=\{\pi^i, \pi^j\}$ is an equilibrium information set only if $\mu_I=\pi^i\neq \nu=\pi^j$. Since $\pi^i$ needs to satisfy two requirements. First, $\mu_I=\pi^i$. Then it implies that $d^{KL}_\mu(\pi^i)\le d^{KL}_\mu(\nu)$; equivalently, 
\[\sum \mu_s\log\big(\frac{\pi_s}{\nu_s}\big)\ge 0.\]
 Second, $\pi^i$ maximizes the expected utility $\max_{\pi\in\Delta(S)} (-1)^{i+1}\,\mathbb{E}_\pi[s]-c_i\,\big(\sum_{s\in S} (\pi_s-\nu_s)^2\big)$ among all $\pi$ such that $\mu_{\{\pi, \nu\}}=\pi.$ Hence, $\pi^i$ solves Equation (\ref{obj}) with subject to the following additional constraint 
 \begin{equation}\label{obj2}
 \text{either }\sum \mu_s\log\big(\frac{\pi_s}{\nu_s}\big)\neq 0\text{ or }\pi>_L \nu,\end{equation}
where $>_L$ is the lexicographic order on $\Delta(S)$. 

\bigskip
\noindent\textbf{Step 4.2.} Note that the objective function in Equation (12) is strictly concave and the constraint set is convex. Hence, Equation (\ref{obj}) has a unique solution. Below, we show that Equation (\ref{obj}) has a solution $\pi^{i}$ such that 
\[\sum \mu_s\log\big(\frac{\pi^i_s}{\nu_s}\big)>0.\]
Then $\pi^{i}$ is also the unique solution to the constrained optimization problem given by Equations (\ref{obj}) and (\ref{obj2}); i.e., $I=\{\pi^i, \pi^j\}$ is an equilibrium.  

\bigskip
\noindent\textbf{Step 4.3.}  Take $i\in \{1, 2\}$ such that $(-1)^{i+1}\mathbb{E}_\mu\,\Big(\frac{s-\mathbb{E}[s]}{\nu_s}\Big)>0$. Let us directly solve Equation (\ref{obj}). By the Karush-Kuhn-Tucker theorem, there are $\lambda\ge 0$ and $\delta$ such that $\pi^{i*}$ is a solution to 
\[\max_\pi \, (-1)^{i+1}\,\mathbb{E}_\pi[s]-c_i\,\big(\sum_{s\in S} (\pi_s-\nu_s)^2\big)+\lambda\sum \mu_s\log\big(\frac{\pi_s}{\nu_s}\big)-\delta(\sum_{s} \pi_s-1).\]
The first-order condition gives $0=(-1)^{i+1}s-2\,c_i\,(\pi_s-\nu_s)+\lambda\,\frac{\mu_s}{\pi_s}-\delta$.

\medskip
Let us check if there is a solution such that $\sum \mu_s\log\big(\frac{\pi_s}{\nu_s}\big)>0$. In this case, we need to have $\lambda=0$. Hence,  
\begin{equation}\pi_s=\nu_s+\frac{(-1)^{i+1}\,s-\delta}{2c_i}.\end{equation}
By adding the above equality across $s$, we find that $\delta=(-1)^{i+1}\,\mathbb{E}[s]$. Hence, 
\[\pi_s=\nu_s+(-1)^{i+1}\Big(\frac{s-\mathbb{E}[s]}{2c_i}\Big).\]
Let us now check if $\sum \mu_s\log\big(\frac{\pi_s}{\nu_s}\big)>0$ holds. 

\bigskip
\noindent\textbf{Step 4.4.} $\log(1+x)\ge x-x^2$ for any $x>-0.5$. \medskip

Consider $f(x)=\log(1+x)-x+x^2$. Then $f'(x)=\frac{1}{1+x}-1+2x=\frac{x(2x+1)}{1+x}$. Hence, $f$ is strictly decreasing when $0>x>-0.5$ and $f$ is strictly increasing when $x>0$. Hence, $f(x)\ge f(0)=0$ for any $x>-0.5$.

\bigskip
\noindent\textbf{Step 4.5.} By Assumption 1, $|(-1)^{i+1}\Big(\frac{s-\mathbb{E}[s]}{2c_i\,\nu_s}\Big)|<0.5$. Hence, by Step 4.4, 
\[\sum \mu_s\log\big(\frac{\pi_s}{\nu_s}\big)=\sum \mu_s\log\Big(1+(-1)^{i+1}\Big(\frac{s-\mathbb{E}[s]}{2c_i\,\nu_s}\Big)\Big)\]
\[>\mathbb{E}_\mu\Big((-1)^{i+1}\Big(\frac{s-\mathbb{E}[s]}{2c_i\,\nu_s}\Big)-\Big(\frac{s-\mathbb{E}[s]}{2c_i\,\nu_s}\Big)^2\Big)=\frac{(-1)^{i+1}}{2\,c_i}\,\mathbb{E}_\mu\Big(\frac{s-\mathbb{E}[s]}{\nu_s}\Big)-\frac{1}{4c^2_i}\,\mathbb{E}_\mu\Big(\frac{s-\mathbb{E}[s]}{\nu_s}\Big)^2)>0.\]
The last term is strictly positive because of Assumption 1. Hence, $\sum \mu_s\log\big(\frac{\pi_s}{\nu_s}\big)>0$.

\bigskip
\noindent\textbf{Step 5.} There is no equilibrium such that $\pi^{i}=\nu$ and $\pi^{j}=\mu_I$ when $(-1)^{i+1}\mathbb{E}_\mu\,\Big(\frac{s-\mathbb{E}[s]}{\nu_s}\Big)>0$.

\medskip
If $(-1)^{j+1}\mathbb{E}_{\pi^j}[s]\le (-1)^{j+1}\mathbb{E}_{\nu}[s]$, then Sender $j$ deviates from $\pi^j$ to $\nu$ since claiming $\nu$ is strictly less costly. Suppose now $(-1)^{j+1}\mathbb{E}_{\pi^j}[s]>(-1)^{j+1}\mathbb{E}_{\nu}[s]$. Then $(-1)^{i+1}\mathbb{E}_{\pi^j}[s]\le (-1)^{i+1}\mathbb{E}_{\nu}[s]$. 

Let $\pi^{i*}$ be the unique solution to Equation (\ref{obj}) in Step 4. If $d^{KL}_\mu(\pi^j)>d^{KL}_\mu(\pi^{i*})$, then Sender $i$ has an incentive to deviate and claim $\pi^{i*}$ since 
\[(-1)^{i+1}\mathbb{E}_{\pi^j}[s]-C_i(\nu, \nu)\le (-1)^{i+1}\mathbb{E}_{\nu}[s]<(-1)^{i+1}\mathbb{E}_{\pi^{i*}}[s]-C_i(\pi^{i*}, \nu).\] 
and $\mu_I=\pi^{j}$ and $\mu_{I'}=\pi^{i^*}$ where $I'=\{\pi^{i*}, \pi^j\}$.

Suppose now $d^{KL}_\mu(\pi^j)\le d^{KL}_\mu(\pi^{i*})$. Then 
$\pi^j$ solves
\begin{equation}
\max (-1)^{j+1}\,\mathbb{E}_\pi[s]-c_j\,\big(\sum_{s\in S} (\pi_s-\nu_s)^2\big)\end{equation} 
\[\text{ subject to }\sum \mu_s\log\big(\frac{\pi_s}{\pi^{i*}_s}\big)\ge 0\text{ and }\sum_{s} \pi_s=1.\]

By essentially identical arguments in Step 4.1, 
$\pi^j$ also solves
\begin{equation}
\max (-1)^{j+1}\,\mathbb{E}_\pi[s]-c_j\,\big(\sum_{s\in S} (\pi_s-\nu_s)^2\big)\end{equation} 
\[\text{ subject to }\sum \mu_s\log\big(\frac{\pi_s}{\nu_s}\big)\ge 0\text{ and }\sum_{s} \pi_s=1,\]
 \[\text{and either }\sum \mu_s\log\big(\frac{\pi_s}{\nu_s}\big)\neq 0\text{ or }\pi>_L \nu.\]
Since $d^{KL}_\mu(\pi^j)\le d^{KL}_\mu(\pi^{i*})<d^{KL}_\mu(\nu)$, $\pi^j$ also solves
\begin{equation}
\max (-1)^{j+1}\,\mathbb{E}_\pi[s]-c_j\,\big(\sum_{s\in S} (\pi_s-\nu_s)^2\big)\end{equation} 
\[\text{ subject to }\sum \mu_s\log\big(\frac{\pi_s}{\nu_s}\big)>0\text{ and }\sum_{s} \pi_s=1.\]
Let us find $\pi'$ such that $d^{KL}_\mu(\pi^j)\le d^{KL}_\mu(\pi^{i*})<d^{KL}_\mu(\pi')<d^{KL}_\mu(\nu)$. Then $\pi^j$ also solves
\begin{equation}
\max (-1)^{j+1}\,\mathbb{E}_\pi[s]-c_j\,\big(\sum_{s\in S} (\pi_s-\nu_s)^2\big)\end{equation} 
\[\text{ subject to }\sum \mu_s\log\big(\frac{\pi_s}{\pi'_s}\big)\ge 0\text{ and }\sum_{s} \pi_s=1.\]

Let us directly solve the above. By the Karush-Kuhn-Tucker theorem, there are $\lambda\ge 0$ and $\delta$ such that $\pi^{j}$ is a solution to 
\[\max_\pi \, (-1)^{j+1}\,\mathbb{E}_\pi[s]-c_j\,\big(\sum_{s\in S} (\pi_s-\nu_s)^2\big)+\lambda\sum \mu_s\log\big(\frac{\pi_s}{\pi'_s}\big)-\delta(\sum_{s} \pi_s-1).\]
The first-order condition gives $0=(-1)^{j+1}s-2\,c_j\,(\pi_s-\nu_s)+\lambda\,\frac{\mu_s}{\pi_s}-\delta$. Since $d^{KL}_\mu(\pi^j)<d^{KL}_\mu(\pi')$, we have $\sum \mu_s\log\big(\frac{\pi_s}{\pi'_s}\big)>0$. Hence, we need to have $\lambda=0$. Hence,  
\begin{equation}\pi_s=\nu_s+\frac{(-1)^{j+1}\,s-\delta}{2c_j}.\end{equation}
By adding the above equality across $s$, we find that $\delta=(-1)^{j+1}\,\mathbb{E}[s]$. Hence, 
\[\pi_s=\nu_s+(-1)^{j+1}\Big(\frac{s-\mathbb{E}[s]}{2c_j}\Big).\]
Let us now show that $\sum \mu_s\log\big(\frac{\pi_s}{\nu_s}\big)<0$, which gives us a contradiction. 

\bigskip
\noindent\textbf{Step 5.1.} $\log(1+x)\le x$ for any $x>-1$. \medskip

Consider $f(x)=\log(1+x)-x$. Then $f'(x)=\frac{1}{1+x}-1=\frac{-x}{1+x}$. Hence, $f$ is strictly increasing when $0>x>-1$ and $f$ is strictly decreasing when $x>0$. Hence, $f(x)\le f(0)=0$ for any $x>-1$.

\bigskip
\noindent\textbf{Step 5.2.} By Step 5.1 and Assumption 1, 
\[\sum \mu_s\log\big(\frac{\pi_s}{\nu_s}\big)=\sum \mu_s\log\Big(1+(-1)^{j+1}\Big(\frac{s-\mathbb{E}[s]}{2c_j\,\nu_s}\Big)\Big)\le \frac{1}{2\,c_j}(-1)^{j+1}\mathbb{E}_\mu\Big(\frac{s-\mathbb{E}[s]}{\nu_s}\Big)<0.\]

\bibliographystyle{ecta}
\bibliography{econref_GI.bib}

\end{document}